\newtheorem{theorem}{Theorem}
\newtheorem{lemma}{Lemma}
\newtheorem{proposition}{Proposition}
\newtheorem{corollary}{Corollary}
\newtheorem{definition}{Definition}
\theoremstyle{definition}
\newtheorem*{claimproof}{\normalfont{\textit{Proof}}}
\tikzstyle{vertex}=[draw,circle,inner sep=1pt, minimum size=7pt]
\tikzstyle{bvertex}=[draw,fill=blue,circle,inner sep=1pt, minimum size=7pt]
\tikzstyle{black vertex}=[draw,fill=black,circle,inner sep=1pt, minimum size=7pt]
\tikzstyle{rvertex}=[draw,fill=red,circle,inner sep=1pt, minimum size=7pt]
\tikzstyle{yvertex}=[draw,fill={rgb:orange,50;yellow,20},circle,inner sep=1pt, minimum size=7pt]
\tikzstyle{svertex}=[draw,circle,inner sep=0pt, minimum size=6pt]
\tikzstyle{label}=[above=0.2cm]
\tikzstyle{label below}=[above=-1cm]
\tikzstyle{label right}=[shift={(1.5cm,0cm)}]
\tikzstyle{label sright}=[shift={(1.2cm,0cm)}]
\tikzstyle{blue dots} = [draw,dotted,line width=1pt,-,blue!100]
\tikzstyle{black dots} = [draw,dotted,line width=1pt,-,black!100]
\tikzstyle{red dots} = [draw,dotted, line width=1pt,-,red!80]
\tikzstyle{red edge} = [draw,line width=1.5pt,-,red!70]
\tikzstyle{lred edge} = [draw,line width=1.5pt,-,red!50]
\tikzstyle{blue edge} = [draw,line width=1.5pt,-,blue!100]
\tikzstyle{yellow edge} = [draw,line width=1.5pt,-,color={rgb:orange,30;yellow,20}]
\tikzstyle{black edge} = [draw,line width=1.5pt,-,black!100]
\tikzstyle{green edge} = [draw,line width=1.5pt,-,color={rgb:green,50;blue,20}]
\newcommand{\todog}[1]{\todo[color=yellow!90!red]{ #1}}
\newcommand{\Oh}{\ensuremath{\mathcal{O}}}
\newtheorem{Claim}{Claim}
\renewcommand{\mod}{\text{ mod}}
\author[Eckstein et al.]{Nils Jakob Eckstein\affiliationmark{1}
  \and Niels Grüttemeier\affiliationmark{2}\thanks{The research was mainly performed when NG was associated with Philipps-Universität Marburg, Germany.} \\
  \and Christian Komusiewicz\affiliationmark{1}
  \and Frank Sommer\affiliationmark{1}\thanks{FS was supported by the DFG, project MAGZ (KO~3669/4-1).}}
\title{Destroying Multicolored Paths and Cycles\\in Edge-Colored~Graphs}
\affiliation{
  Fachbereich Mathematik und Informatik, Philipps-Universität Marburg, Germany\\
  Fraunhofer IOSB, Lemgo, Fraunhofer Institute of Optronics, System Technologies and Image Exploitation, Germany}
\begin{document}
\publicationdata{vol. 25:1}{2023}{7}{10.46298/dmtcs.7636}{2021-06-30; 2021-06-30; 2023-01-31}{2023-02-17}

\keywords{NP-hard problem, graph modification, edge-colored graphs, parameterized complexity}
\maketitle

\begin{abstract}
We study the computational complexity of \textsc{$c$-Colored $P_\ell$ Deletion} and \textsc{$c$-Colored $C_\ell$ Deletion}. In these problems, one is given a~$c$-edge-colored graph and wants to destroy all induced $c$-colored paths or cycles, respectively, on~$\ell$ vertices by deleting at most $k$ edges. Herein, a path or cycle is $c$-colored if it contains edges of $c$ distinct colors. We show that \textsc{$c$-Colored $P_\ell$ Deletion} and \textsc{$c$-Colored $C_\ell$ Deletion} are NP-hard for each non-trivial combination of $c$ and $\ell$. We then analyze the parameterized complexity of these problems. We extend the notion of neighborhood diversity to edge-colored graphs and show that both problems are fixed-parameter tractable with respect to the \emph{colored neighborhood diversity} of the input graph. We also provide hardness results to outline the limits of parameterization by the standard parameter solution size~$k$. Finally, we consider  bicolored input graphs and show a special case of  \textsc{$2$-Colored $P_4$ Deletion} that can be solved in polynomial time.
\end{abstract}

\section{Introduction}
A classic type of graph problems are \emph{edge-deletion} problems,
 where one wants to modify a given graph such that it fulfills some graph property $\Pi$ using a
minimum number of edge deletions.
%
While edge-deletion problems are well-studied on simple, uncolored graphs~\cite{Aravind17,Burzyn06,CDFG20,Mallah88,Yannakakis81}, there is not much work on such problems on edge-colored graphs.
At the same time, edge-colored graphs have an increasing number of applications, for example as a formal model of multilayer networks, an important tool for describing complex systems with multiple types of relations~\cite{Berlingerio11,Boccaletti14,SK20}.
Motivated by this, we analyze the computational complexity of the following fundamental edge-deletion problem on edge-colored graphs. 


\begin{quote}
\textsc{$c$-Colored $P_\ell$ Deletion} (\textsc{$cP_\ell$D})\\
\textbf{Input}:	A~$c$-edge-colored graph $G=(V,E=E_1\sqcup\ldots\sqcup E_c)$\footnotemark , an integer~$k$.\footnotetext{The $\sqcup$-operator denotes the disjoint union. Hence, $E_i \cap E_j =\emptyset$ for $i,j\in[c]$, $i\neq j$.}\\
\textbf{Question}: Is there a set~$S$ of at most $k$ edges such that deleting~$S$ from~$G$ results in a graph that contains no $c$-colored $P_\ell$ as induced subgraph?
\end{quote}

Herein, a path is \emph{$c$-colored} if its edge set is colored by exactly~$c$ distinct colors. The set~$S$ is called a \emph{solution}. We also consider \textsc{$c$-Colored $C_\ell$ Deletion} (\textsc{$cC_\ell$D}) where we aim to destroy all $c$-colored induced cycles of $\ell$ vertices  for a fixed $c\in[\ell]$. Moreover, several of our results hold also for the variant where we aim to destroy non-induced $c$-colored paths or cycles. Let us remark that we are not aware of a direct application of these problems. It seems, however, very plausible that algorithmic knowledge about these fundamental edge-deletion problems will be useful in more applied settings. They may, for example, serve as a starting point of showing hardness of other edge-modification problems in edge-colored graphs.

\paragraph{Related Work.}
The uncolored case of \textsc{$cP_\ell$D} and \textsc{$cC_\ell$D} is NP-hard for any~$\ell\geq 3$~\cite{Mallah88,Yannakakis81}.
In fact, it is known that \textsc{$H$ Deletion} is NP-complete for any graph~$H$ if and only if $H$ has at least two edges~\cite{Aravind17}. 
Recently, it was shown that \textsc{2$P_3$D} is NP-hard as well~\cite{gruettemeier19}.
If the input is restricted to graphs~$G$ where each bicolored $P_3$ of~$G$ is an induced subgraph of~$G$, then~\textsc{2$P_3$D} is polynomial-time solvable~\cite{gruettemeier19}.
Furthermore, the problem of destroying not only induced, but also non-induced bicolored paths containing three vertices is polynomial-time solvable on bicolored graphs but NP-hard on 3-colored graphs~\cite{Cai18}. The problem of detecting paths and cycles with certain edge-colorings has also received a considerable amount of attention~\cite{ADF+08,BWZ05,GLMMP09,GLMM12}. Considering vertex-colored graphs instead of edge-colored graphs, Bruckner et al.~\cite{BHKNTU12} studied the parameterized complexity of an edge-deletion problem, where one aims to obtain a graph in which no connected component contains two vertices of the same color.

\paragraph{Our Results.}
First, we consider the classical complexity of \textsc{$cP_\ell$D} and \textsc{$cC_\ell$D} in Section~\ref{sec:NP}.
We show that \textsc{$cP_\ell$D} is NP-hard for each $\ell\geq 3$ and each $c\in [\ell-1]$ and that \textsc{$cC_\ell$D} is NP-hard for each $\ell\geq 3$ and each $c\in [\ell]$.
Since \textsc{$cP_\ell$D} is trivially solvable if $\ell<3$ or $c\geq\ell$, and \textsc{$cC_\ell$D} is not properly defined if~$\ell<3$ and trivially solvable if $c>\ell$, this implies that \textsc{$cP_\ell$D} and \textsc{$cC_\ell$D} are NP-hard for each non-trivial combination of $c$ and $\ell$.
An aspect that complicates the analysis of edge-deletion problems is that an edge deletion may create a new forbidden induced subgraph. 
To formalize this effect, we introduce the following notion. 
An input graph~$G$ of an instance $(G,k)$ of \textsc{$cP_\ell$D} (\textsc{$cC_\ell$D}) is \textit{strictly non-cascading} if each subgraph of $G$ that is a~$c$-colored path (a $c$-colored cycle)  is an \emph{induced} subgraph.
Using these terms, \textit{2$P_3$D} is polynomial-time solvable on strictly non-cascading graphs~\cite{gruettemeier19}.
We show that \textit{$cP_\ell$D} is NP-hard on strictly non-cascading graphs for each $\ell\geq 4$ and each $c\in[2,\ell-2]$ and that \textsc{$cC_\ell$D} is NP-hard on strictly non-cascading graphs for each $\ell\geq 3$ and each $c\in[\ell]$.
\todo[inline]{Should we also discuss \emph{non-cascading} here?}

Second, we consider the parameterized complexity of \textsc{$cP_\ell$D} 
in Section~\ref{sec:fpt}. We consider a new parameter that we call the \emph{colored neighborhood diversity}. This parameter measures the number of sets of vertices that have different colored neighborhoods in the edge-colored graph. 
We show that \textsc{$cP_\ell$D} and \textsc{$cC_\ell$D} are fixed-parameter tractable with respect to this parameter. 
We believe that colored neighborhood diversity may be of broader interest in the study of computational problems on edge-colored input graphs. Furthermore, we study parameterization by the standard parameter solution size~$k$.
By using standard search tree techniques for uncolored graphs~\cite{Cai96}, it is easy to see that \textsc{$cP_\ell$D} and \textsc{$cC_\ell$D} are fixed-parameter tractable with respect to~$k$ for any fixed $\ell\geq 1$ and any~${c\in[\ell-1]}$ or $c\in[\ell]$, respectively. 
We prove that for \textsc{$cP_\ell$D} (or \textsc{$cC_\ell$D}, respectively) this cannot be improved to a running time where the exponential factor is independent of $\ell$ unless FPT=W[2] even if forbidden subgraphs can be detected in polynomial time.

Finally, we study $2P_4$D on subclasses of bicolored input graphs. 
We show that $2P_4$D remains NP-hard even when each edge color induces a cluster graph. 
If, additionally, the input graph contains no induced bicolored $P_4$ that starts and ends with edges of the same color, then $2P_4$D can be solved in polynomial time. 
The algorithm is based on a characterization of such graphs that may be of independent interest.    

\section{Preliminaries}

\label{sec:preleminaries}
For integers~${a\in\mathbb{N}}$ and $b\in\mathbb{N}$ we define $[a,b]:=\{a,\dots,b\}$ if~$a \leq b$ and~$[a,b]= \emptyset$ if~$a>b$. Furthermore, we define~$[b]:=[1,b]$. 
We denote the vertex set of a graph~$G$ by $V(G)$ and its edge set by $E(G)$. Throughout this work let~$n:=|V(G)|$ and~$m:=|E(G)|$.  
By~${|G|:=n+m}$ we denote the \textit{size} of $G$.
In contrast to a graph, the edges of a \textit{multigraph} are a multiset. Hence, in a multigraph two vertices can have multiple edges between them.
In an \emph{edge-colored graph}, the edge set is partitioned into $c$ disjoint, non-empty subsets $E_1,\dots,E_c$. 
We call such a graph \textit{$c$-colored} and if~${c=1}$ we call the graph \textit{uncolored}.
For sake of illustration we define $E_b:=E_1$ as the set of \textit{blue} edges, $E_r:=E_2$ as the set of \textit{red} edges,~$E_y:=E_3$ as the set of \textit{yellow} edges, $E_g:=E_4$ as the set of \textit{green} edges, and draw them in the figures accordingly.
For a given graph $G$ we denote the set of all edges with color $\alpha\in [c]$ by~$E_\alpha(G)$.

For two vertex sets~$V'\subseteq V(G)$ and $V''\subseteq V(G)$ we define~$E_G(V',V''):=\{\{u,v\}\in E\ |\ u\in V'\text{ and } v\in V''\}$ and~$E_G(V'):=E_G(V',V')$. 
We call $G':=(V',E')$ a \textit{subgraph} of $G$ if $V'\subseteq V(G)$ and~$E'\subseteq E_G(V')$.
If~${E'= E_G(V')}$ we call~$(V',E')$ an \textit{induced} subgraph and write~$G[V']$.
 We denote the graph formed by deleting the edges of an edge set $E'$ from $G$ by~${G-E' := (V,E\setminus E')}$.
For a vertex $v\in V(G)$, we denote the \textit{open neighborhood} of $v$ in $G$ by~$N_{G}(v) := \{u\in V(G) \mid \{u,v\}\in E(G)\}$ and the~\emph{closed neighborhood} by~$N_G[v] := N_G(v) \cup \{v\}$.
We denote the \textit{degree} of $v$ in $G$ by $\deg_{G}(v) := |N_G(v)|$. For every color~$i \in [c]$ we define the~$i$-neighborhood by~$N^i_G(v):=\{u \in V(G) \mid \{u,v\} \in E_i\}$. 
We may drop the subscript~$\cdot_G$ when it is clear from context.

We say that a vertex set $V'\subseteq V(G)$ is a \textit{vertex cover} for $G$ if at least one of $u$ and $v$ is in $V'$ for each edge~${\{u,v\}\in E(G)}$.
We say that a vertex set $V'\subseteq V(G)$ is an \textit{independent set} if $\{u,v\}\notin E(G)$ for each pair of vertices $u,v\in V'$.
A graph $G$ is called \textit{tripartite} if $V(G)$ can be partitioned into 3 (possibly empty) independent sets.

A graph~$G$ is a \textit{path} if it is possible to index its vertices with numbers from~$[n]$, in such a way that~$\{v_i,v_j\}\in E(G)$ if and only if $i+1=j$. 
A path with the additional edge $\{v_n,v_1\}$ is called a \textit{cycle}.
We denote a path (cycle) consisting of $\ell$ vertices by $P_\ell$ ($C_\ell$).
The length of a path (cycle) $G$ is the number of its edges,~$|E(G)|$. 
For any graph $G$, the 2-\textit{subdivision} of $G$ is the graph we get from inserting two new vertices on every edge, that is, from replacing each edge~${\{u,v\}\in E(G)}$ by a $P_4$ with vertices~$\{u,x,y,v\}$.
A graph $G$ is a 2-\textit{subdivision graph} if it is the 2-subdivision of some graph~$H$.  
 The \textit{girth} of~$G$ is the length of a shortest cycle in $G$, for acyclic graphs the girth is infinite. 
 
For two edge-colored graphs $G$ and $F$, we say that $G$ is \textit{isomorphic} to $F$, and we write~${G\cong F}$ if there is a bijective function $\varphi:V(G)\rightarrow V(F)$ such that~${\{u,v\}\in E(G)}$ is an edge with color $c$ if and only if~$\{\varphi(u),\varphi(v)\}\in E(F)$ is an edge with color~$c$. Furthermore, we say that $G$ is $F$-\textit{free} if $G[V']\ncong F$ for each vertex set $V'\subseteq V(G)$. 

In this work we study the computational problems~$cP_\ell$D and~$cC_\ell$D. We sometimes use the following problem formulation that generalizes $cP_\ell$D and~$cC_\ell$D. 
Herein, we let~$\mathcal{F}$ is a set of~$c$-colored graphs. 
\begin{quote}
\textsc{$\mathcal{F}$-Deletion}\\
\textbf{Input}:	An edge-colored graph $G=(V,E=E_1\sqcup\ldots\sqcup E_c)$, an integer~$k$.\\
\textbf{Question}: Is there a set~$S$ of at most $k$ edges such that~$G-S$ is~$F$-free for every~$F \in \mathcal{F}$?
\end{quote}

Observe that, for given~$c$ and~$\ell$ we can define~$\mathcal{F}$ as the set of~$c$-colored~$P_\ell$ (or~$c$-colored~$C_\ell$, respectively). Thus, \textsc{$\mathcal{F}$-Deletion} generalizes the problems we consider in this work.

For the relevant notions of
parameterized complexity refer to the standard textbook~\cite{Cyg+15}.

\section{Classical Complexity} 
\label{sec:NP}
We first prove the NP-hardness of \textsc{$cP_\ell$D} and \textsc{$cC_\ell$D}. In our analysis we focus on the impact of cascading effects on the complexity: The NP-hardness of \textsc{$2P_3$D} relies on the fact that edge deletions may create new bicolored~$P_3$s, as \textsc{$2P_3$D} is polynomial-time solvable on graphs that do not provide this cascading effect~\cite{gruettemeier19}. Here, we show that \textsc{$cP_\ell$D} with~$\ell \geq 4$ and~$c \in [2,\ell-2]$ is NP-hard even when limited to instances where~$G$ is non-cascading, defined as follows.

\begin{definition}
\begin{itemize}

\item
Let~$\mathcal{F}$ be a set of forbidden subgraphs.  An edge~$e$ of a graph~$G$ is a \emph{conflict edge} if there is some~$F \in \mathcal{F}$ such that~$e$ is contained in an induced~$F$ in~$G$.

\item
Let~$X$ be the set of all conflict edges. The graph~$G$ is \emph{non-cascading} if every non-induced subgraph~$F \in \mathcal{F}$ in~$G$ is not an induced~$F$ in~$G-X'$ for every subset~$X' \subseteq X$.

\item
The graph~$G$ is \emph{strictly non-cascading} if there is no non-induced subgraph~$F \in \mathcal{F}$ in~$G$.
\end{itemize}
\end{definition}

It is easy to see that a graph is non-cascading if it is strictly non-cascading. We call an edge \emph{conflict-free} if it is not a conflict edge.
Observe that in case of~$cP_\ell$D it can be checked in polynomial time if a given graph~$G$ is non-cascading since~$\ell$ is a constant: 
First, determine all conflict edges of~$G$ in $\Oh(n^\ell)$~time. 
Second, iterate over all~$\Oh(n^\ell)$ non-induced~$c$-colored~$P_\ell$ subgraphs of~$G$. Let~$v_1, \dots, v_\ell$ be the vertices of one such subgraph. Then, check if there is a conflict-free edge~$\{v_i,v_j\}$ with~$i,j \in [\ell]$ and~$j \neq i+1$. If this is the case for all non-induced~$c$-colored~$P_\ell$ subgraphs of~$G$ return \textit{yes}. Otherwise, return \textit{no}.
The main idea of non-cascading graphs is that it is sufficient to hit all initial conflicts in the input graph and that conflict-free edges are never part of a solution. 
The next proposition formalizes this idea.

\begin{proposition} \label{Prop: Non-Cascading}
Let~$(G,k)$ be an instance of \textsc{$\mathcal{F}$-Deletion} where~$G$ is non-cascading, and let~$X$ be the set of conflict edges of~$G$. If~$\tilde{S}$ is an edge-deletion set such that every induced~$F \in \mathcal{F}$ in~$G$ is not an induced~$F$ in~$G-\tilde{S}$, then~$S:=\tilde{S} \cap X$ is an edge-deletion set such that~$G-S$ has no induced~$F \in \mathcal{F}$.
\end{proposition}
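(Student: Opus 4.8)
The plan is to show two things: that $S \subseteq \tilde{S}$ already destroys every induced $F$ of $G$, and that $G - S$ creates no new induced $F$. For the first part, I would take any induced $F \in \mathcal{F}$ in $G$. By definition, every edge of this copy is a conflict edge, so the copy lies entirely within $X$; hence $\tilde{S}$ must contain at least one edge of this copy (since $\tilde{S}$ kills this copy and the only way to do so by edge deletions while staying inside the copy — wait, $\tilde{S}$ could in principle kill it by adding... no, we only delete). Concretely: since $\tilde{S}$ destroys this induced $F$, and an induced subgraph on a fixed vertex set is destroyed only by deleting one of its edges, $\tilde{S}$ contains an edge $e$ of the copy; that edge $e$ is a conflict edge, so $e \in X$, hence $e \in \tilde{S} \cap X = S$. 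Thus $S$ already intersects the edge set of every induced copy of every $F \in \mathcal{F}$ present in $G$.

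For the second part, suppose toward a contradiction that $G - S$ contains an induced copy of some $F \in \mathcal{F}$ on a vertex set $W$. This copy is not an induced copy in $G$ (by the previous paragraph, all of those were destroyed by $S$), so in $G$ the vertex set $W$ induces a graph with strictly more edges than in $G-S$; that is, the copy of $F$ sitting in $G-S$ arises from deleting some nonempty set of edges of $G[W]$, all of which must lie in $S$. In particular, $F$ appears as a \emph{non-induced} subgraph of $G$ (on $W$, via the edges that survive). Now $S \subseteq X$, so $G - S = (G-X) + (X \setminus S)$, i.e.\ $G - S = G - X'$ where $X' := S$ is a subset of $X$. By the definition of non-cascading applied to this non-induced subgraph $F$ in $G$, $F$ is \emph{not} an induced $F$ in $G - X'$ for any $X' \subseteq X$ — in particular not in $G - S$. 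This contradicts the assumption, so $G - S$ is $F$-free for every $F \in \mathcal{F}$.

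The only subtlety — and the step I would be most careful about — is matching up the quantifiers in the non-cascading definition correctly: the definition says every \emph{non-induced} subgraph $F$ of $G$ fails to become an induced $F$ in $G - X'$ for \emph{every} $X' \subseteq X$, and I need to verify that the copy appearing in $G-S$ genuinely qualifies as such a non-induced subgraph of $G$ (it does, because $S$ only deletes edges and the surviving edges on $W$ still form a copy of $F$, which is non-induced in $G$ precisely because $S$ deleted at least one chord-or-edge of $G[W]$). Everything else is bookkeeping: $S \subseteq X$ is immediate from $S = \tilde S \cap X$, and "deleting edges can only destroy an induced subgraph by removing one of its own edges" is the trivial observation that underlies the first paragraph.
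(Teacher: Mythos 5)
Your proof is correct and follows essentially the same route as the paper's: assume an induced $F$ survives in $G-S$, argue that it must be a \emph{non-induced} $F$ in $G$, and then invoke the non-cascading property with $X'=S\subseteq X$ to reach a contradiction. The only (harmless) difference is how you certify that the surviving copy is non-induced in $G$ — you observe directly that $S$ already hits every induced copy present in $G$, whereas the paper argues that one of its edges must lie in $\tilde{S}\setminus X$ and hence be conflict-free; your version is, if anything, slightly more careful on that step.
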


\begin{proof}
Assume towards a contradiction that~$G-S$ contains an induced~$F \in \mathcal{F}$. Let~$e_1, \dots, e_\ell$ be the edges of an induced~$F$ in~$G-S$. Since~$G-\tilde{S}$ contains no induced~$F$ one of the edges~$e_i$ with~$i \in [\ell]$ belongs to~$\tilde{S} \setminus X$ and therefore~$e_i$ is conflict-free. 
Consequently,~$e_1, \dots, e_\ell$ form a non-induced~$F$ in~$G$. 
Then, since~$G$ is non-cascading and~$S \subseteq X$ it follows that~$e_1, \dots, e_\ell$ do not form an induced~$F$ in~$G-S$ which contradicts our assumption.
\end{proof}

Observe that Proposition~\ref{Prop: Non-Cascading} implies that a minimal solution only consists of conflict edges if the input graph is non-cascading.


\subsection{$c$-colored $P_\ell$ Deletion}
\label{sec:pld}
First, we show the hardness of~\textsc{$cP_\ell$D}.
The following lemma is useful for showing the correctness of our reduction.

\begin{lemma}
Let $G=(V=\{v_1,\dots,v_{d\cdot (\ell-1)}\},E)$ be a path where any $\ell$ consecutive vertices form a $c$-colored $P_\ell$ and let~$S\subseteq E$ be an edge set of size $d-1$. Then, $G- S$~is $c$-colored $P_\ell$-free if and only if~$S=\{\{v_i,v_{i+1}\}\ |\ i\mod~ (\ell-1) = 0\}$.
\label{lem:path}
\end{lemma}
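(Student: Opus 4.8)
We have a path $G$ on $d(\ell-1)$ vertices $v_1, \dots, v_{d(\ell-1)}$ where every window of $\ell$ consecutive vertices forms a $c$-colored $P_\ell$. We want to show that among all edge sets $S$ of size exactly $d-1$, the *only* one whose removal destroys all $c$-colored $P_\ell$'s is the "evenly spaced" set $S^* = \{\{v_i, v_{i+1}\} : i \equiv 0 \pmod{\ell-1}\}$, i.e., the edges $\{v_{\ell-1}, v_\ell\}$, $\{v_{2(\ell-1)}, v_{2(\ell-1)+1}\}$, ..., $\{v_{(d-1)(\ell-1)}, v_{(d-1)(\ell-1)+1}\}$. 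Note $|S^*| = d-1$.

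**The "if" direction.** This is the easy direction. Deleting $S^*$ breaks $G$ into $d$ components, each a path on exactly $\ell-1$ vertices. A path on $\ell - 1 < \ell$ vertices contains no $P_\ell$ at all, colored or not. So $G - S^*$ is trivially $c$-colored $P_\ell$-free. I'd write this in one or two sentences.

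**The "only if" direction — the main content.** Here I would argue by counting the "gaps". Think of $S$ as choosing $d-1$ edges of the path; removing them cuts the path $v_1 \cdots v_{d(\ell-1)}$ into at most $d$ subpaths (exactly $d$ if the $d-1$ deleted edges are distinct, which they are since $S \subseteq E$). Let the subpaths have vertex-counts $a_1, a_2, \dots, a_d \geq 1$ with $\sum_{j=1}^d a_j = d(\ell-1)$. For $G - S$ to be $c$-colored $P_\ell$-free, I claim each $a_j \leq \ell - 1$: indeed if some subpath had $\geq \ell$ vertices, it would contain $\ell$ consecutive vertices of the original $G$, which by hypothesis form a $c$-colored $P_\ell$ — and this $P_\ell$ survives in $G - S$ since none of its edges were deleted, contradiction. (One should note the $c$-colored $P_\ell$ here is an *induced* subgraph of $G-S$ because $G-S$ is itself a disjoint union of paths, so no chords exist.) Now $\sum a_j = d(\ell-1)$ and each $a_j \leq \ell-1$ forces every $a_j = \ell - 1$. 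So $G - S$ is exactly $d$ paths on $\ell-1$ vertices each, which pins down precisely which edges were deleted: the cut edges must be $\{v_{\ell-1},v_\ell\}, \{v_{2(\ell-1)}, v_{2(\ell-1)+1}\}, \dots$, i.e., $S = S^*$.

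**Anticipated obstacle.** The only subtlety is making the "$d-1$ deletions $\Rightarrow$ $d$ subpaths, with sizes summing correctly" bookkeeping clean — in particular that the $d-1$ edges of $S$ are genuinely distinct (automatic since $S$ is a set) and that removing $k$ distinct edges from a path on $N$ vertices yields exactly $k+1$ nonempty subpaths whose vertex-counts sum to $N$. This is elementary but I'd state it explicitly. I should also be slightly careful that "contains $\ell$ consecutive original vertices" is exactly the statement that a subpath with $\geq \ell$ vertices contains a $c$-colored $P_\ell$; since the subpath is an interval $v_s, v_{s+1}, \dots, v_{s+a_j-1}$ of the original indexing, any $\ell$-term sub-interval works, and by hypothesis such an interval is a $c$-colored $P_\ell$. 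Everything else is just arithmetic. I expect no real obstacle here; the lemma is a structural packing argument.
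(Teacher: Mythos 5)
Your proof is correct, and it reaches the conclusion by a somewhat different route than the paper. The paper's ``only if'' direction fixes the $j$-th edge $\{v_i,v_{i+1}\}$ of $S$ and pins down $i=j\cdot(\ell-1)$ by two separate contradiction arguments: if $i\geq j(\ell-1)+1$ then the prefix $\{v_1,\dots,v_i\}$ packs $j$ edge-disjoint $c$-colored $P_\ell$s but receives only $j-1$ deletions, and if $i\leq j(\ell-1)-1$ then the suffix packs $d-j$ edge-disjoint copies, pushing $|S|$ to at least $d$. You instead make one global count: $d-1$ distinct deleted edges split the path into exactly $d$ intervals whose vertex-counts sum to $d(\ell-1)$, each interval must have at most $\ell-1$ vertices (else a surviving window of $\ell$ consecutive vertices is an induced $c$-colored $P_\ell$ --- and you correctly note it is induced because $G-S$ is a disjoint union of paths), so every interval has exactly $\ell-1$ vertices, which forces $S=S^*$. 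The two arguments rest on the same packing observation, but yours needs only the weakest form of it (one $P_\ell$ per $\ell$-vertex window) and replaces the paper's $2(d-1)$ positional inequalities with a single pigeonhole step, so it is shorter and, I would say, cleaner; the paper's version has the minor advantage of identifying the position of each deleted edge directly without first arguing about component structure. Your ``if'' direction (components of size $\ell-1$ contain no $P_\ell$ at all) is also a slight simplification of the paper's, which instead checks that every $\ell$-window meets $S$. No gaps.
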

\begin{proof}
Let $S=\{\{v_i,v_{i+1}\}\ |\ i\mod (\ell-1) = 0\}$.
Since $|V|=d\cdot (\ell-1)$ we conclude that $|S|=d-1$.
We will show that $G':=G- S$ is $P_\ell$-free and hence, $G'$ is $c$-colored $P_\ell$-free.
Let~${V'\subseteq V}$ such that the induced subgraph $G[V']$ is a $c$-colored $P_\ell$.
Since $G$ is a path, $V'=\{v_i,\dots,v_{i+\ell-1}\}$ for some~$i\in[(d-1)\cdot(\ell-1)]$.
Since $|[i,i+\ell-2]|=\ell-1$ there is a $\hat{i}\in [i,i+\ell-2]$ such that~$\hat{i}\mod (\ell-1) = 0$.
Thus, we conclude that there is an edge $e\in E_G(V')$ such that $e\in S$.
Hence, $G'$ is $c$-colored $P_\ell$-free.

Conversely, let $S$ be an edge-deletion set of size $d-1$ such that $G-S$ is $c$-colored $P_\ell$-free.
Let~$\{v_i,v_{i+1}\}$ be the $j$-th edge in $S$. 
We denote~$V_1:=\{v_1,\dots,v_i\}$ and $V_2:=\{v_{i+1},\dots,v_{d\cdot(\ell-1)}\}$.
Since~$\{v_i,v_{i+1}\}$ is the $j$-th edge in $S$, we know that~${|S\cap E_G(V_1)|=j-1}$.

First, assume towards a contradiction that $i\geq j\cdot(\ell-1)+1$.
Then, 
\begin{equation*}
|V_1| = i\geq j\cdot(\ell-1)+1.
\end{equation*}
Since any $\ell$ consecutive vertices form a $c$-colored $P_\ell$, we conclude that $G[V_1]$ contains at least $j$ edge-disjoint $c$-colored $P_\ell$s.
Since $|S\cap E_G(V_1)|=j-1$, this is a contradiction to the condition that $G-S$ is $c$-colored $P_\ell$-free.
Thus, $i\leq j\cdot(\ell-1)$.

Next, assume towards a contradiction that $i\leq j\cdot(\ell-1)-1$.
Then, \begin{align*}
|V_2|&=d\cdot(\ell-1)-i\\&\geq (d-j)\cdot(\ell-1)+1.
\end{align*}
Since any $\ell$ consecutive vertices form a $c$-colored $P_\ell$, we conclude that $G[V_2]$ contains at least $(d-j)$ edge-disjoint $c$-colored $P_\ell$s.
Hence, $|S\cap E_G(V_2)|\geq (d-j)$. 
We conclude that $|S|\geq d$, since~$|S\cap E_G(V_1)|=j-1$ and $\{v_i,v_{i+1}\}\in S$.
This is a contradiction to the condition that $|S|=d-1$.
Thus,~$i\geq j\cdot(\ell-1)$.

Hence, we know that $i=j\cdot(\ell-1)$. Thus, the $j$-th edge in $S$ is $\{v_{j\cdot(\ell-1)},v_{j\cdot(\ell-1)+1}\}$ and therefor~$S=\{\{v_i,v_{i+1}\}\ |\ i\mod (\ell-1) = 0\}$.
\end{proof}

Now we will show that \textsc{$cP_\ell$D} is NP-hard for $\ell\geq 4$ even if the input graph has a somewhat simple structure.
From this result we will be able to prove that \textsc{$cP_\ell$D} remains NP-hard on non-cascading input graphs if $\ell\geq 4$ and~${c\in[2,\ell-2]}$.

\begin{theorem}
\textsc{$cP_\ell$D} is NP-hard for each~$\ell\geq 4$ and each~${c\in[2,\ell-2]}$ even if the maximum degree of~$G$ is three, and the girth of $G$ is greater than $2\cdot d\cdot\ell$ for any constant $d\geq 1$.
\label{thm:cPD}
\end{theorem}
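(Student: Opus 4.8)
The plan is to reduce from a suitable NP-hard problem on cubic graphs---most naturally \textsc{Independent Set} or \textsc{Vertex Cover} on cubic graphs, or an \textsc{Exact Cover}-type / \textsc{3-SAT}-type problem---and to encode it into an edge-colored graph whose only forbidden $c$-colored $P_\ell$ subgraphs come from long ``chains'' built out of $P_\ell$-gadgets. The key device is Lemma~\ref{lem:path}: if we build a gadget path on $d\cdot(\ell-1)$ vertices in which every $\ell$ consecutive vertices already form a $c$-colored $P_\ell$, then every size-$(d-1)$ solution restricted to this gadget is \emph{forced} to be exactly the edges $\{v_i,v_{i+1}\}$ with $(\ell-1)\mid i$. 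I would design the reduction so that these forced ``cut positions'' correspond to a selection (e.g. of a vertex into an independent set, or of a literal satisfying a clause), and then chain gadgets together at their endpoints so that a globally consistent choice of cuts of the prescribed total size $k$ exists if and only if the original instance is a yes-instance.

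Concretely, I would proceed as follows. First, fix the ground instance $H$, a cubic graph (degree exactly three everywhere, which will give the claimed maximum-degree-three bound on $G$ once the gadgets are attached with degree budget in mind). Second, for each vertex and/or edge of $H$ introduce a gadget that is a $c$-colored $P_\ell$-chain of carefully chosen length $d_i\cdot(\ell-1)$; the coloring pattern along the chain is the periodic pattern that uses all $c$ colors within every window of $\ell-1$ consecutive edges, so that Lemma~\ref{lem:path} applies verbatim to each chain in isolation. Third, interconnect the chains: wherever the combinatorial structure of $H$ requires a consistency constraint, identify or link endpoints of chains so that a forbidden $c$-colored $P_\ell$ ``spans'' the junction unless the adjacent chains are cut compatibly. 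I must be careful that these junctions do not create new $c$-colored $P_\ell$s whose destruction would cost extra budget in an uncontrolled way; choosing $c\le \ell-2$ gives two ``spare'' colors of slack, which is exactly what lets us pad the junctions with monochromatic or few-colored stretches so that the only $\ell$-vertex paths that are $c$-colored are the intended ones. Fourth, set $k:=\sum_i (d_i-1)$ plus whatever fixed cost the junctions contribute, and prove the two directions: a solution to $H$ yields cuts at the canonical positions (by Lemma~\ref{lem:path}, applied in the ``if'' direction) summing to $k$; conversely any solution of size $\le k$ must, by a counting/edge-disjointness argument, spend exactly $d_i-1$ edges inside chain $i$, hence (again by Lemma~\ref{lem:path}) cut at the canonical positions, hence induce a valid choice in $H$.

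For the girth condition ``greater than $2\cdot d\cdot\ell$ for any constant $d\ge 1$,'' I would note that the construction as described is essentially a subdivision: every gadget chain is long (length a multiple of $\ell-1$, and we are free to take the multiplier as large as we like), and the only cycles in $G$ arise from cycles of $H$ blown up through these long chains, so the girth of $G$ is at least (girth-related quantity of $H$)$\times$(minimum chain length). Since we may start from a cubic graph $H$ of arbitrarily large girth (sparse cubic graphs of large girth exist and the source problem stays NP-hard on them) and independently inflate the chain lengths, we can guarantee girth exceeding any prescribed linear-in-$d$-and-$\ell$ bound while keeping $k$ polynomial in $|H|$; the degree-three bound is maintained because each original degree-3 vertex is replaced by a constant-degree junction and internal chain vertices have degree two.

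The main obstacle I expect is controlling \emph{unintended} $c$-colored $P_\ell$ subgraphs at the junctions between chains---both making sure the intended spanning path exists (so the constraint ``bites'') and making sure no cheaper-to-destroy extra forbidden path is created that would let a dishonest solution escape the counting argument. Getting the periodic coloring pattern and the junction gadgets to mesh---so that Lemma~\ref{lem:path}'s hypothesis ``every $\ell$ consecutive vertices form a $c$-colored $P_\ell$'' holds on each chain while the cross-junction windows are \emph{not} $c$-colored except in the designated places---is the delicate part, and it is precisely here that the hypothesis $c\le\ell-2$ (two spare colors) is used.
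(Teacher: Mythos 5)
Your plan correctly identifies the engine of the paper's proof: Lemma~\ref{lem:path} forces the $d-1$ cuts in a chain of $d\cdot(\ell-1)$ vertices to sit at the canonical positions, and the reduction (the paper uses \textsc{(3,B2)-SAT}) is built so that these forced cuts encode a choice. Your remarks on degree and girth are also essentially the paper's: internal chain vertices have degree two, junctions have constant degree, and the shortest cycle must traverse two full chains, which is why the girth exceeds $2\cdot d\cdot\ell$ without needing a large-girth source instance. So the skeleton matches.

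The gap is that everything you defer to ``the delicate part'' is the actual proof. Concretely, three pieces are missing and none of them follows from the plan as stated. First, the clause gadget: the paper attaches three $c$-colored $P_{\ell-1}$s to a common vertex $u_j$ so that a counting argument (Claim~\ref{Claim:budget}) forces exactly two of the three blue edges at $u_j$ into any tight solution, leaving one literal to be ``paid for'' by its variable gadget; your proposal never specifies a mechanism that forces a one-out-of-three selection. Second, the variable gadget: it is not a single chain but four chains $T_i^1,T_i^2,F_i^1,F_i^2$ joined at two hubs $t_i,f_i$, which are in turn joined by a $(c-1)$-colored path on $W_i$ carrying no blue edge --- this connecting path is what creates the $c$-colored $P_\ell$s of the form $G[\{t_i^{q,2},t_i\}\cup W_i\cup\{f_i,f_i^{q',2}\}]$ that force the solution to choose between the two canonical deletion sets $T_i^r\cup F_i^b$ and $F_i^r\cup T_i^b$ (Claim~\ref{Claim:s}); this is also where $c\le\ell-2$ is genuinely used (so that $|\{t_i\}\cup W_i\cup\{f_i\}|=\ell-2\ge c$ admits a $(c-1)$-colored, blue-free connector), which is a more specific role than the ``two spare colors of slack at junctions'' you describe. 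Third, the interface argument: one must verify that after the canonical deletions, the surviving blue edge $\{u_j,u_j^{p,2}\}$ of a clause gadget extends into a $c$-colored $P_\ell$ through the attached chain end $t_i^{q,z}$ or $f_i^{q,z}$ exactly when the variable's choice fails to satisfy the clause (this is the case analysis around $(z-\ell+2)\bmod(\ell-1)=2$ in the paper), and that solutions can be normalized to the canonical form before reading off an assignment. Your hedging over the source problem also matters: the paper needs each literal to occur exactly twice so that every chain endpoint is identified with exactly one clause vertex, which is what keeps the maximum degree at three. Without these constructions the reduction is not yet defined, so the proposal does not establish the theorem.
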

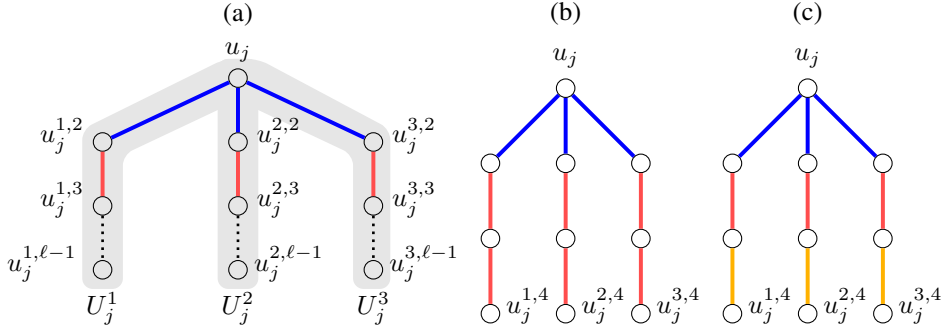
\begin{figure}[t!]
\centering
\subfloat{
\centering
\begin{tikzpicture}[xscale=0.9,yscale=0.85]

\node at (2,4){(a)};
\begin{pgfonlayer}{background}

\begin{scope}[blend mode=multiply]
      \draw[rounded corners, fill=black!10,draw=none] (2.3,3.3)--(1.8,3.3)--(-.3,2.2)--(-.3,-.3)--(.3,-.3)--(.3,1.8)--(2.3,2.8)--cycle;
      \draw[rounded corners, fill=black!10,draw=none] (3.7,-.3)--(4.25,-.3)--(4.25,2.2)--(2.3,3.3)--(1.7,3.3)--(1.7,2.8)--(3.7,1.8)--cycle;

      \draw[rounded corners, fill=black!10,draw=none](2.3,-.3)--(1.7,-.3)--(1.7,3.3)--(2.3,3.3)--cycle;
  
\end{scope}
    \end{pgfonlayer}
    \node at (0,-.6){$U_j^1$};
\node at (2,-.6){$U_j^2$};
\node at (4,-.6){$U_j^3$};

\node[vertex](u) at (2,3){};\node at (2,3.4){$u_j$};
\node[vertex](11) at (0,2){};\node at (-.6,2.1){$u_j^{1,2}$};
\node[vertex](21) at (2,2){};\node at (2.6,2.1){$u_j^{2,2}$};
\node[vertex](31) at (4,2){};\node at (4.6,2.1){$u_j^{3,2}$};
\node[vertex](12) at (0,1){};\node at (-.6,1.1){$u_j^{1,3}$};
\node[vertex](22) at (2,1){};\node at (2.6,1.1){$u_j^{2,3}$};
\node[vertex](32) at (4,1){};\node at (4.6,1.1){$u_j^{3,3}$};
\node[vertex](13) at (0,0){};\node at (-.9,.1){$u_j^{1,\ell-1}$};
\node[vertex](23) at (2,0){};\node at (2.75,.1){$u_j^{2,\ell-1}$};
\node[vertex](33) at (4,0){};\node at (4.75,.1){$u_j^{3,\ell-1}$};

\path[blue edge](u)--(11);
\path[blue edge](u)--(21);
\path[blue edge](u)--(31);
\path[red edge](12)--(11);
\path[red edge](22)--(21);
\path[red edge](32)--(31);
\path[black dots](12)--(13);
\path[black dots](22)--(23);
\path[black dots](32)--(33);
\end{tikzpicture}
}
\subfloat{
\centering
\begin{tikzpicture}[scale=1]
  \node at (1,4){(b)};

\node[vertex](u) at (1,3){};\node at (1,3.4){$u_j$};
\node[vertex](11) at (0,2){};
\node[vertex](21) at (1,2){};
\node[vertex](31) at (2,2){};
\node[vertex](12) at (0,1){};
\node[vertex](22) at (1,1){};
\node[vertex](32) at (2,1){};
\node[vertex](13) at (0,0){};\node at (.5,.1){$u_j^{1,4}$};
\node[vertex](23) at (1,0){};\node at (1.5,.1){$u_j^{2,4}$};
\node[vertex](33) at (2,0){};\node at (2.5,.1){$u_j^{3,4}$};

\path[blue edge](u)--(11);
\path[blue edge](u)--(21);
\path[blue edge](u)--(31);
\path[red edge](12)--(11);
\path[red edge](22)--(21);
\path[red edge](32)--(31);
\path[red edge](12)--(13);
\path[red edge](22)--(23);
\path[red edge](32)--(33);
\end{tikzpicture}
}
\subfloat{
\centering
\begin{tikzpicture}[scale=1]
\node at (1,4){(c)};
\node[vertex](u) at (1,3){};\node at (1,3.4){$u_j$};
\node[vertex](11) at (0,2){};
\node[vertex](21) at (1,2){};
\node[vertex](31) at (2,2){};
\node[vertex](12) at (0,1){};
\node[vertex](22) at (1,1){};
\node[vertex](32) at (2,1){};
\node[vertex](13) at (0,0){};\node at (.5,.1){$u_j^{1,4}$};
\node[vertex](23) at (1,0){};\node at (1.5,.1){$u_j^{2,4}$};
\node[vertex](33) at (2,0){};\node at (2.5,.1){$u_j^{3,4}$};

\path[blue edge](u)--(11);
\path[blue edge](u)--(21);
\path[blue edge](u)--(31);
\path[red edge](12)--(11);
\path[red edge](22)--(21);
\path[red edge](32)--(31);
\path[yellow edge](12)--(13);
\path[yellow edge](22)--(23);
\path[yellow edge](32)--(33);
\end{tikzpicture}
}
\caption{(a) General structure of a clause gadget $Z_j$. The black dotted line represents a path containing $\ell -3$ vertices in total. (b) Clause gadget for $\ell = 5$, $c=2$. (c) Clause gadget for $\ell=5$, $c=3$. }
\label{fig:cla_gadget}
\end{figure}

\begin{proof}
We give a polynomial-time reduction from the NP-complete \textsc{(3,B2)-SAT} problem~\cite{berman03}, a version of \textsc{3SAT} where one is given a CNF formula~$\Phi$ on variables~$x_1, \dots, x_\eta$ where every clause contains exactly three literals and each literal~$x_i$ and~$\neg x_i$ occurs exactly twice in~$\Phi$.

\textit{Construction}:
Let~$\Phi$ be a \textsc{(3,B2)-SAT} formula with clauses~${\mathcal{C}=\{c_1,\dots,c_{\mu}\}}$ and variables $\mathcal{X}=\{x_1,\dots,x_{\eta}\}$.  
We use the following \textit{gadgets} to construct an equivalent instance $(G=(V,E),k)$ of \textsc{$cP_\ell$D} from~$\Phi$.

For each clause~$c_j\in\mathcal{C}$ we construct a~\textit{clause gadget}~$Z_j$  consisting of three vertex sets~${U_j^1, U_j^2}$ and~$U_j^3$, each containing $\ell-1$ vertices. 
For each~${p\in[3]}$, we denote the vertices in~$U_j^p$ by~${u_j^{p,1},\dots,u_j^{p,\ell-1}}$.
For~$s\in[\ell-2]$ we add edges~${\{u_j^{p,s}, u_j^{p,s+1}\}}$.
If $s<c$ we add an edge of color $s$.
Else we add an edge of color $c$.
In other words~${\{u_j^{p,1}, u_j^{p,2}\}}$ is blue, $\{u_j^{p,2}, u_j^{p,3}\}$ is red and the color of the next edges depends on $c$.
Observe that $G[U_j^p]$ is a $c$-colored~$P_{\ell-1}$.
We connect the three~$P_{\ell-1}$s by identifying~$u_j^{1,1}=u_j^{2,1}=u_j^{3,1}=:u_j$ (see Fig.~\ref{fig:cla_gadget} for an example).

For each variable~${x_i\in\mathcal{X}}$ we construct a \textit{variable gadget}~$X_i$ as follows. 
First, let~${z:=d\cdot(\ell-1)+1}$.
Note that $z$ is the minimum number of vertices on a path that contains $d$ edge-disjoint~$P_\ell$s.
The variable gadget~$X_i$ consists of four vertex sets of $z$ vertices~${T_i^1,T_i^2,F_i^1,F_i^2}$ and a vertex set of $\ell-4$ vertices $W_i:=\{w_i^1,\dots,w_i^{\ell-4}\}$.
Note that $W_i=\emptyset$ for $\ell = 4$.
For~${q\in[2]}$ the vertices in $T_i^q$ are denoted by $t_i^{q,1},\dots,t_i^{q,z}$ and the vertices in $F_i^q$ are denoted by $f_i^{q,1},\dots,f_i^{q,z}$.
For $s\in[z-1]$ we add edges~${\{t_i^{q,s}, t_i^{q,s+1}\}}$ and~${\{f_i^{q,s}, f_i^{q,s+1}\}}$.
If $0< s\mod (\ell-1)< c$ we add an edge with color $s\mod (\ell-1)$ and else we add an edge with color $c$.
So if $s\mod (\ell-1)=1$ we add a blue edge, if $s\mod (\ell-1)=2$ we add a red edge, and otherwise the edge-color depends on $c$.

We connect $T_i^1$ and $T_i^2$ by identifying $t_i^{1,1}=t_i^{2,1}=:t_i$ and analogously we connect~$F_i^1$ and~$F_i^2$ by identifying $f_i^{1,1}=f_i^{2,1}=:f_i$.
If $\ell = 4$ we add a red edge~$\{t_i,f_i\}$.
If~$\ell > 4$ we connect $t_i$ and $f_i$ by a $(c-1)$-colored path with vertices in $\{t_i\}\cup W_i\cup\{f_i\}$ that does not contain a blue edge (see Fig.~\ref{fig:var_gadget}).
Since $|\{t_i\}\cup W_i\cup\{f_i\}|=\ell-2\geq c$, this path always has at least $(c-1)$ edges.
Hence, we can always connect $t_i$ and $f_i$ by a $(c-1)$-colored path.

Note that any $\ell$ consecutive vertices in $G[T_i^1]$, $G[T_i^2]$, $G[F_i^1]$ and $G[F_i^2]$ form a $c$-colored $P_\ell$.
Hence, $G[T_i^1]$, $G[T_i^2]$, $G[F_i^1]$ and $G[F_i^2]$ are four paths, each containing~$d$ edge-disjoint $c$-colored $P_\ell$s. 
And since~$\{t_i^{q,2},t_i\}\in E_b(G)$ and $t_i$ and~$f_i$ are connected by $(c-1)$-colored $P_{\ell-1}$ with no blue edge, the induced subgraphs $G[\{t_i^{q,2},t_i,w_i^1,\dots,w_i^{\ell -4},f_i,f_i^{q',2}\}]$ are also $c$-colored $P_\ell$s for $q,q'\in[2]$. 

Then, we denote the following edge sets (see Fig.~\ref{fig:var_gadget}).
\begin{itemize}
\item[$T_i^b$]$:=\{\{t_i^{q,s},t_i^{q,s+1}\}\in E\ |\ s\mod (\ell-1)=1,q\in[2]\}$.
\item[$F_i^b$]$:=\{\{f_i^{q,s},f_i^{q,s+1}\}\in E\ |\ s\mod (\ell-1)=1,q\in[2]\}$.
\item[$T_i^r$]$:=\{\{t_i^{q,s},t_i^{q,s+1}\}\in E\ |\ s\mod (\ell-1)=2,q\in[2]\}$.
\item[$F_i^r$]$:=\{\{f_i^{q,s},f_i^{q,s+1}\}\in E\ |\ s\mod (\ell-1)=2,q\in[2]\}$.
\end{itemize}
Note that $|T_i^r|=|F_i^r|=|T_i^b|=|F_i^b|=2\cdot d$.

\begin{figure}[t!]
\begin{center}
\subfloat{
\centering
\begin{tikzpicture}[xscale=.83, yscale=0.62]

  \node at (6.5,3.7){(a)};
  \begin{pgfonlayer}{background}
    \begin{scope}[blend mode=multiply] 
\draw[rounded corners, fill=black!15,draw=none](-.4,-.3)--(3.3,-.3)--(4.3,.7)--(4.3,1.3)--(3.8,1.3)--(2.8,.4)--(-.4,.4)--cycle;
\draw[rounded corners, fill=black!15,draw=none](-.4,1.7)--(2.8,1.7)--(3.8,.7)--(4.3,.7)--(4.3,1.3)--(3.3,2.4)--(-.4,2.4)--cycle;
\draw[rounded corners, fill=black!15,draw=none](5,.7)--(8,.7)--(8,1.3)--(5,1.3)--cycle;
\draw[rounded corners, fill=black!15,draw=none](13.4,-.3)--(9.7,-.3)--(8.7,.7)--(8.7,1.3)--(9.2,1.3)--(10.2,.4)--(13.4,.4)--cycle;
\draw[rounded corners, fill=black!15,draw=none](13.4,1.7)--(10.2,1.7)--(9.2,.7)--(8.7,.7)--(8.7,1.3)--(9.7,2.4)--(13.4,2.4)--cycle;
     
    \end{scope}
  \end{pgfonlayer}
\node at(1.5,-.7){$T_i^2$};
\node at(1.5,3.3){$T_i^1$};
\node at(11.5,-.7){$F_i^2$};
\node at(6.5,2.1){$W_i$};
\node at(11.5,3.3){$F_i^1$};

\node[vertex](t1z)at(0,2){};\node at(0,2.75){$t_i^{1,z}$};
\node[vertex](t12)at(3,2){};\node at(3,2.75){$t_i^{1,2}$};
\node[vertex](t2z)at(0,0){};\node at(0,-.65){$t_i^{2,z}$};
\node[vertex](t22)at(3,0){};\node at(3,-.65){$t_i^{2,2}$};
\node[vertex](t)at(4,1){};\node at(4.3,1.7){$t_i$};
\node[vertex](w1)at(5.5,1){};\node at(5.5,0.3){$w_i^1$};
\node[vertex](w2)at(7.5,1){};\node at(7.5,0.3){$w_i^{\ell-4}$};
\node[vertex](f)at(9,1){};\node at(8.7,1.7){$f_i$};
\node[vertex](f1z)at(13,2){};\node at(13,2.75){$f_i^{1,z}$};
\node[vertex](f12)at(10,2){};\node at(10,2.75){$f_i^{1,2}$};
\node[vertex](f2z)at(13,0){};\node at(13,-.65){$f_i^{2,z}$};
\node[vertex](f22)at(10,0){};\node at(10,-.65){$f_i^{2,2}$};

\path[black dots](t1z)--(t12);
\path[black dots](t2z)--(t22);
\path[black dots](f1z)--(f12);
\path[black dots](f2z)--(f22);
\path[black dots](w1)--(w2);
\path[blue edge](t)--(t12);
\path[blue edge](t)--(t22);
\path[blue edge](f)--(f12);
\path[blue edge](f)--(f22);
\path[red edge](t)--(w1);
\path[black edge](w2)--(f);
\end{tikzpicture}
}
\vspace{-0.5em}

\subfloat{
\centering
\begin{tikzpicture}[scale=.62]

  \node at (9,3.7){(b)};

  \def\x{2}
\def\y{1}
\def\z{0}

\node[svertex](t) at (8,\y) {};
\node[svertex](w) at (9,\y) {};
\node[svertex](f) at (10,\y) {};
\path[red edge] (t)--(w);
\path[yellow edge] (w)--(f);

\node[svertex](t9) at (0,\x) {};
\node[svertex](t8) at (1,\x) {};
\node[svertex](t7) at (2,\x) {};
\node[svertex](t6) at (3,\x) {};
\node[svertex](t5) at (4,\x) {};
\node[svertex](t4) at (5,\x) {};
\node[svertex](t3) at (6,\x) {};
\node[svertex](t2) at (7,\x) {};

\path[yellow edge] (t9)--(t8);
\path[yellow edge] (t8)--(t7);
\path[red edge] (t7)--(t6);
\path[blue edge] (t6)--(t5);
\path[yellow edge] (t5)--(t4);
\path[yellow edge] (t4)--(t3);
\path[red edge] (t3)--(t2);
\path[blue edge] (t2)--(t);

\node[svertex](t9) at (0,\z) {};
\node[svertex](t8) at (1,\z) {};
\node[svertex](t7) at (2,\z) {};
\node[svertex](t6) at (3,\z) {};
\node[svertex](t5) at (4,\z) {};
\node[svertex](t4) at (5,\z) {};
\node[svertex](t3) at (6,\z) {};
\node[svertex](t2) at (7,\z) {};

\path[yellow edge] (t9)--(t8);
\path[yellow edge] (t8)--(t7);
\path[red edge] (t7)--(t6);
\path[blue edge] (t6)--(t5);

\node[label] at (0,\x){$t_i^{1,9}$};
\node[label] at (4,\x){$t_i^{1,5}$};
\node[label] at (0,\z){$t_i^{2,9}$};
\node[label] at (4,\z){$t_i^{2,5}$};
\node[label] at (8,\y){$t_i$};
\node[label] at (9,\y){$w_i^1$};
\node[label] at (10,\y){$f_i$};
\node[label] at (14,\x){$f_i^{1,5}$};
\node[label] at (18,\x){$f_i^{1,9}$};
\node[label] at (14,\z){$f_i^{2,5}$};
\node[label] at (18,\z){$f_i^{2,9}$};

\path[yellow edge] (t5)--(t4);
\path[yellow edge] (t4)--(t3);
\path[red edge] (t3)--(t2);
\path[blue edge] (t2)--(t);

\node[svertex](t2) at (11,\x) {};
\node[svertex](t3) at (12,\x) {};
\node[svertex](t4) at (13,\x) {};
\node[svertex](t5) at (14,\x) {};
\node[svertex](t6) at (15,\x) {};
\node[svertex](t7) at (16,\x) {};
\node[svertex](t8) at (17,\x) {};
\node[svertex](t9) at (18,\x) {};

\path[yellow edge] (t9)--(t8);
\path[yellow edge] (t8)--(t7);
\path[red edge] (t7)--(t6);
\path[blue edge] (t6)--(t5);
\path[yellow edge] (t5)--(t4);
\path[yellow edge] (t4)--(t3);
\path[red edge] (t3)--(t2);
\path[blue edge] (t2)--(f);

\node[svertex](t2) at (11,\z) {};
\node[svertex](t3) at (12,\z) {};
\node[svertex](t4) at (13,\z) {};
\node[svertex](t5) at (14,\z) {};
\node[svertex](t6) at (15,\z) {};
\node[svertex](t7) at (16,\z) {};
\node[svertex](t8) at (17,\z) {};
\node[svertex](t9) at (18,\z) {};
\path[yellow edge] (t9)--(t8);
\path[yellow edge] (t8)--(t7);
\path[red edge] (t7)--(t6);
\path[blue edge] (t6)--(t5);
\path[yellow edge] (t5)--(t4);
\path[yellow edge] (t4)--(t3);
\path[red edge] (t3)--(t2);
\path[blue edge] (t2)--(f);
\end{tikzpicture}
}

\end{center}
\caption{(a)~The generalized structure of a variable gadget $X_i$. Note that $W_i=\emptyset$ if $\ell=4$. The edges of color $c$ are black.  (b)
  ~An exemplary variable gadget for $\ell=5,c=3,d=2$.}
\label{fig:var_gadget}
\end{figure}
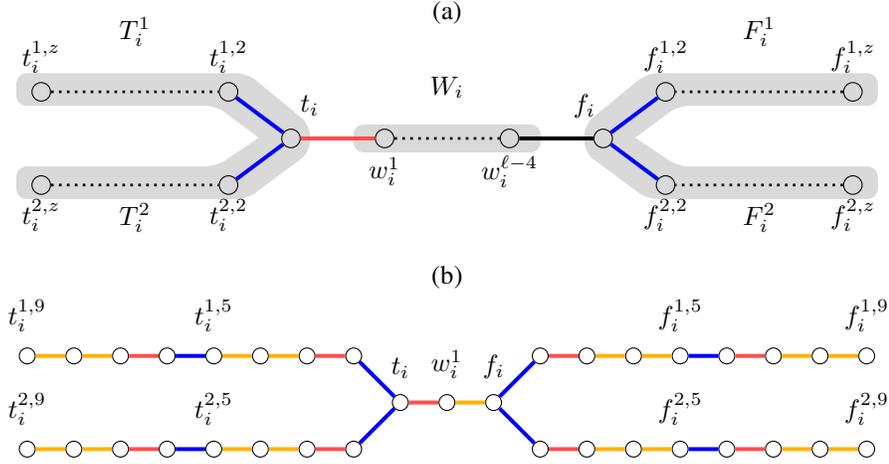

To connect the variable and clause gadgets we identify vertices as follows (see Fig.~\ref{fig:example}). 
For ${p\in[3]}$, ${q\in[2]}$, any variable~${x_i\in\mathcal{X}}$ and any clause~${c_j\in\mathcal{C}}$ we set 
\begin{center}
$u_j^{p,2}= 
\begin{cases}
t_i^{q,z}&\text{if the literal }x_i\text{ has its }q\text{-th occurence as the }p\text{-th literal in }c_j\\
f_i^{q,z}&\text{if the literal }\lnot x_i\text{ has its }q\text{-th occurence as the }p\text{-th literal in }c_j.
\end{cases}$
\end{center}

Note that for each clause gadget~$Z_j$ and $p\in[3]$ the vertex~$u_j^{p,2}$ is identified with exactly one vertex from a variable gadget and for each variable gadget~$X_i$ and $q\in[2]$ the vertices $t_i^{q,z}$ and $f_i^{q,z}$ are each identified with exactly one vertex from a clause gadget, since each literal occurs exactly twice in~$\Phi$. 

It is easy to see that the maximum degree of $G$ is three.
Furthermore, the girth of $G$ is at least $2\cdot d\cdot\ell$, since the smallest possible cycle in $G$ contains the vertices from $T_i^1\cup T_i^2\cup\{u_j\}$ or~$F_i^1\cup F_i^2\cup\{u_j\}$, respectively.
Such a cycle will be constructed, when there is a clause $c_j=(x_i\lor x_i\lor \dots)$ or $c_j=(\lnot x_i\lor \lnot x_i\lor \dots)$ in~$\Phi$.
We complete the construction by setting~${k:=4\cdot d\cdot \eta+2\cdot\mu}$.

\begin{figure}[t!]
\centering
\begin{tikzpicture}[xscale=0.72, yscale=0.58]

  \begin{pgfonlayer}{background}
    
  \begin{scope}[blend mode=multiply]
\draw[rounded corners, fill=black!10,draw=none](-2,-.5)--(2.5,-.5)--(2.5,8)--(-2,8)--cycle;\node at(0.2,8.3){$Z_j$};
\draw[rounded corners, fill=black!10,draw=none](3.4,-.5)--(12,-.5)--(12,1.5)--(.7,1.5)--(.7,.5)--(3.4,.5)--cycle;\node at(7.5,1.8){$X_3$};
\draw[rounded corners, fill=black!10,draw=none](3.4,2.5)--(12,2.5)--(12,4.5)--(.7,4.5)--(.7,3.5)--(3.4,3.5)--cycle;\node at(7.5,4.8){$X_2$};
\draw[rounded corners, fill=black!10,draw=none](3.4,8)--(12,8)--(12,5.5)--(.7,5.5)--(.7,6.5)--(3.4,6.5)--cycle;\node at(7.5,8.3){$X_1$};
    
  \end{scope}
\end{pgfonlayer}
  
\node[svertex](uj) at (-1,4){};\node at (-1.5,4){$u_j$};
\node[svertex](u11) at (1,6){};\node at (-.2,6.1){$u_j^{1,2}=t_1^{2,4}~~$};
\node[svertex](u12) at (2,7){};
\node[svertex](u21) at (1,4){};
\node[svertex](u22) at (2,5){};
\node[svertex](u31) at (1,1){};\node at (-.2,.9){$u_j^{3,2}=t_3^{1,4}~~$};
\node[svertex](u32) at (2,2){};
\path[blue edge](uj)--(u11);
\path[blue edge](uj)--(u21);
\path[blue edge](uj)--(u31);
\path[red edge](u11)--(u12);
\path[red edge](u21)--(u22);
\path[red edge](u31)--(u32);

\node[svertex](t13) at (5,6){};
\node[svertex](t12) at (6,6){};
\node[svertex](t24) at (4,7){};\node at (4,7.5){$t_1^{1,4}$};
\node[svertex](t23) at (5,7){};
\node[svertex](t22) at (6,7){};
\node[svertex](ti) at (7,6.5){};\node at (7,6.9){$t_1$};
\node[svertex](fi) at (8,6.5){};\node at (8,6.9){$f_1$};
\node[svertex](f12) at (9,6){};
\node[svertex](f13) at (10,6){};
\node[svertex](f14) at (11,6){};\node at (11.5,5.9){$f_1^{2,4}$};
\node[svertex](f22) at (9,7){};
\node[svertex](f23) at (10,7){};
\node[svertex](f24) at (11,7){};\node at (11.5,7.1){$f_1^{1,4}$};
\path[blue edge](ti)--(t12);
\path[blue edge](ti)--(t22);
\path[blue edge](fi)--(f12);
\path[blue edge](fi)--(f22);
\path[red edge](u11)--(t13);
\path[red edge](t13)--(t12);
\path[red edge](t23)--(t22);
\path[red edge](t24)--(t23);
\path[red edge](ti)--(fi);
\path[red edge](f12)--(f13);
\path[red edge](f22)--(f23);
\path[red edge](f13)--(f14);
\path[red edge](f23)--(f24);

\node[svertex](t13) at (5,4){};
\node[svertex](t12) at (6,4){};
\node[svertex](t24) at (4,3){};\node at (4,3.5){$f_2^{2,4}$};
\node[svertex](t23) at (5,3){};
\node[svertex](t22) at (6,3){};
\node[svertex](ti) at (7,3.5){};\node at (7,3.9){$f_2$};
\node[svertex](fi) at (8,3.5){};\node at (8,3.9){$t_2$};
\node[svertex](f12) at (9,4){};
\node[svertex](f13) at (10,4){};
\node[svertex](f14) at (11,4){};\node at (11.5,4.1){$t_2^{1,4}$};
\node[svertex](f22) at (9,3){};
\node[svertex](f23) at (10,3){};
\node[svertex](f24) at (11,3){};\node at (11.5,2.9){$t_2^{2,4}$};
\path[blue edge](ti)--(t12);
\path[blue edge](ti)--(t22);
\path[blue edge](fi)--(f12);
\path[blue edge](fi)--(f22);
\path[red edge](u21)--(t13);
\path[red edge](t13)--(t12);
\path[red edge](t23)--(t22);
\path[red edge](t24)--(t23);
\path[red edge](ti)--(fi);
\path[red edge](f12)--(f13);
\path[red edge](f22)--(f23);
\path[red edge](f13)--(f14);
\path[red edge](f23)--(f24);

\node[svertex](t13) at (5,1){};
\node[svertex](t12) at (6,1){};
\node[svertex](t24) at (4,0){};\node at (4,.5){$t_3^{2,4}$};
\node[svertex](t23) at (5,0){};
\node[svertex](t22) at (6,0){};
\node[svertex](ti) at (7,0.5){};\node at (7,.9){$t_3$};
\node[svertex](fi) at (8,0.5){};\node at (8,.9){$f_3$};
\node[svertex](f12) at (9,1){};
\node[svertex](f13) at (10,1){};
\node[svertex](f14) at (11,1){};\node at (11.5,1.1){$f_3^{1,4}$};
\node[svertex](f22) at (9,0){};
\node[svertex](f23) at (10,0){};
\node[svertex](f24) at (11,0){};\node at (11.5,-.1){$f_3^{2,4}$};
\path[blue edge](ti)--(t12);
\path[blue edge](ti)--(t22);
\path[blue edge](fi)--(f12);
\path[blue edge](fi)--(f22);
\path[red edge](u31)--(t13);
\path[red edge](t13)--(t12);
\path[red edge](t23)--(t22);
\path[red edge](t24)--(t23);
\path[red edge](ti)--(fi);
\path[red edge](f12)--(f13);
\path[red edge](f22)--(f23);
\path[red edge](f13)--(f14);
\path[red edge](f23)--(f24);

\end{tikzpicture}
\caption{A part of the constructed graph for $\ell=4$, $d=1$. Left: the clause gadget $Z_j$ for a clause $c_j=(x_1\lor\lnot x_2\lor x_3)$. Right: the variable gadgets $X_1,X_2$ and $X_3$. Note that~$x_1$~has its second occurrence as a positive literal in $c_j$, $x_2$ has its first occurrence as a negative literal in $c_j$ and $x_3$ has its first occurrence as a positive literal in $c_j$. The clause gadgets where the variables have their other occurrences are not shown.}
\label{fig:example}
\end{figure}
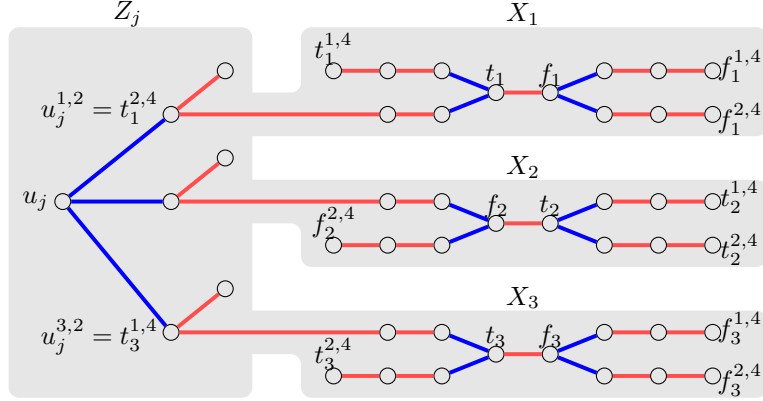

\textit{Intuition}:
Before we prove the correctness of the reduction, we informally describe its idea.

Each variable gadget contains $4\cdot d$ edge-disjoint $c$-colored $P_\ell$s. 
So we have to delete at least $4\cdot d$ edges per variable gadget.
We can make a variable gadget $c$-colored $P_\ell$-free with $4\cdot d$ edge deletions by deleting the edges in $T_i^r$ and~$F_i^b$, or the edges in $F_i^r$ and $T_i^b$.
The former models the assignment~$\mathcal{A}(x_i)=\texttt{true}$, and the latter models the assignment $\mathcal{A}(x_i)=\texttt{false}$.
If we delete $T_i^r$ and $F_i^b$ the vertices~$f_i^{1,z}$ and~$f_i^{2,z}$ are part of a $(c-1)$-colored $P_{\ell-1}$ with no blue edges, while the vertices~$t_i^{1,z}$ and~$t_i^{2,z}$ are part of a $(c-1)$-colored $P_{\ell-2}$.
If we delete $F_i^r$ and $T_i^b$, it is the other way around.
We will be able to make a clause gadget $c$-colored $P_\ell$-free with two edge deletions if and only if there is at least one vertex $u_j^{p,2}$ that is not part of a~$(c-1)$-colored $P_{\ell-1}$ from the connected variable gadget.
Thus, we can make the constructed graph $c$-colored $P_\ell$-free with exactly $k$ edge deletions if and only if each clause is satisfied. 

Before we give the correctness proof, we observe the following.

\begin{Claim}
Let $G':= G- (F_i^b\cup T_i^r)$ and~${G'':= G- (T_i^b\cup F_i^r)}$ for any variable gadget~$X_i$. Then:
\begin{itemize}
\item[\textbf{I}.] No blue edge $e_b\in E_{G'}(X_i)$ is part of a $P_\ell$ in $G'$.
\item[\textbf{II}.] No blue edge $e_b\in E_{G''}(X_i)$ is part of a $P_\ell$ in $G''$.
\end{itemize}
\label{Claim:var_gadget}
 \end{Claim}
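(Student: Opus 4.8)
The two statements I and II are symmetric: swapping the roles of $T_i$ and $F_i$ turns one into the other, and the construction of the variable gadget is symmetric under this swap (both $t_i$ and $f_i$ have blue edges to the ``$\cdot^{q,2}$'' vertices, and the connecting path between them carries no blue edge). So I plan to prove statement~I in detail and then note that II follows by the same argument with $T$ and $F$ interchanged. Fix $G' := G - (F_i^b \cup T_i^r)$ and let $e_b \in E_{G'}(X_i)$ be a blue edge inside the variable gadget.

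First I would classify the blue edges lying inside $X_i$. By construction, the blue edges of $X_i$ are exactly the edges $\{t_i^{q,s}, t_i^{q,s+1}\}$ and $\{f_i^{q,s}, f_i^{q,s+1}\}$ with $s \bmod (\ell-1) = 1$, for $q \in [2]$; the connecting path through $W_i$ contains no blue edge. The blue edges with $s=1$ are precisely the edges incident to $t_i$ (namely $\{t_i, t_i^{q,2}\}$) and to $f_i$ (namely $\{f_i, f_i^{q,2}\}$). Now observe: the edges $\{f_i, f_i^{q,2}\}$ are the $s=1$ members of $F_i^b$, hence are deleted in $G'$. So the only blue edges incident to $f_i$ are gone. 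On the $T$-side, the blue edges $\{t_i, t_i^{q,2}\}$ survive, but every red edge $\{t_i^{q,2}, t_i^{q,3}\}$ (the $s=2$ edges, i.e.\ the $s \bmod(\ell-1)=2$ edges with the smallest index) lies in $T_i^r$ and is therefore deleted. More generally, every edge of $T_i^b$ and every edge of $T_i^r$ is deleted, and these are exactly the blue and red edges on the two $T$-paths; so in $G'$ each path $T_i^q$ is broken into blocks, each block being a maximal run of vertices between two consecutive deleted edges.

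The key point is then a length count. A $c$-colored $P_\ell$ must use edges of all $c$ colors; in particular (since $c \geq 2$) it must use at least one blue edge, and a blue edge inside $X_i$ lives on a $T$-path. I would argue that any path in $G'$ containing a surviving blue edge $e_b$ is confined to a small region and is too short — or too poor in colors — to be a $c$-colored $P_\ell$. Concretely: deleting all of $T_i^b \cup T_i^r$ leaves, around any surviving blue edge $\{t_i, t_i^{q,2}\}$, only the vertices reachable before hitting a deleted edge. Starting from $t_i$ one can go to $t_i^{q,2}$ (blue) but then the red edge to $t_i^{q,3}$ is deleted; one can also go from $t_i$ to $f_i$ through the $W_i$-path (which uses colors $2,\dots,c$ but not blue, and has exactly $\ell - 3$ vertices strictly between, hence the whole stretch $t_i^{q,2}\,t_i\,w_i^1\cdots w_i^{\ell-4}\,f_i$ has $\ell-1$ vertices), and from $f_i$ one cannot extend by blue (deleted) — only potentially into an $F$-path via a surviving non-blue edge, but $\{f_i, f_i^{q,2}\}$ is blue and deleted, so one cannot leave $f_i$ toward $F_i^q$ at all. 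Hence the connected component of $e_b$ in $G'[X_i]$, after the deletions, has at most $\ell - 1$ vertices (the stretch $t_i^{q,2} - t_i - W_i - f_i$, with $t_i$ possibly also joined to $t_i^{q',2}$ for the other $q'$, giving a ``spider'' on $\ell$ vertices but with $t_i$ of degree $3$, so no $P_\ell$ through a blue edge there either). I also need to rule out that $e_b$ extends through a clause gadget: the only gateway vertices are the $t_i^{q,z}$ and $f_i^{q,z}$, which are far (more than $z - 2 \geq \ell$ vertices) from any surviving blue edge, so a path of $\ell$ vertices cannot reach them. Assembling these observations shows no surviving blue edge of $X_i$ lies on any $P_\ell$ of $G'$, proving~I; II is identical with $T \leftrightarrow F$.

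The main obstacle I anticipate is the bookkeeping in the length/reachability count: one must be careful about the boundary vertices $t_i, f_i$ (which are shared between two sub-paths and the $W_i$-path, hence have degree $3$) and about the gateway vertices $u_j^{p,2} = t_i^{q,z}$ or $f_i^{q,z}$ that connect to clause gadgets, to be sure that a $P_\ell$ through a surviving blue edge genuinely cannot form — neither by staying inside $X_i$ (too few vertices / wrong colors, since after deleting $T_i^b$ no blue edge survives on a full-length $T$-sub-path, only the two stubs at $t_i$) nor by escaping into a neighboring gadget (too far away). Once the structure of $G'[X_i]$ after deletion is pinned down precisely, the rest is a short case check.
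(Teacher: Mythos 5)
Your overall strategy coincides with the paper's: classify the blue edges of~$X_i$ that survive in~$G'$ and show that each lies on a maximal path with at most~$\ell-1$ vertices. Your treatment of the blue edges incident to~$t_i$ (the $s=1$ edges, giving the spider $t_i^{q,2}$--$t_i$--$W_i$--$f_i$ with degree-one endpoints) is correct and is exactly the paper's Case~1. However, there is a genuine gap caused by a misreading of the deletion set: you assert that ``every edge of $T_i^b$ and every edge of $T_i^r$ is deleted'' and later that ``after deleting $T_i^b$ no blue edge survives on a full-length $T$-sub-path, only the two stubs at $t_i$.'' This is false: $G'=G-(F_i^b\cup T_i^r)$, so \emph{all} of~$T_i^b$ survives, in particular the $2(d-1)$ interior blue edges $\{t_i^{q,s},t_i^{q,s+1}\}$ with $s\equiv 1\pmod{\ell-1}$ and $s>1$. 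These are precisely the edges handled by the paper's Case~2, which verifies that the two nearest deleted $T_i^r$ edges (at positions $s+1$ and $s+2-\ell$, both $\equiv 2\pmod{\ell-1}$) confine such a blue edge to a block $\{t_i^{q,s+3-\ell},\dots,t_i^{q,s+1}\}$ of exactly $\ell-1$ vertices. Your ``blocks'' remark gestures at this decomposition, but the count is never carried out, and your assembled conclusion only covers the component containing~$t_i$.

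A second, related flaw is your reason for ruling out escape into a clause gadget, namely that the gateway vertices $t_i^{q,z}$, $f_i^{q,z}$ are ``more than $z-2\ge\ell$ vertices from any surviving blue edge.'' For $d=1$ one has $z-2=\ell-2<\ell$, and in general the nearest surviving blue edge $\{t_i^{q,(d-1)(\ell-1)+1},t_i^{q,(d-1)(\ell-1)+2}\}$ is only $\ell-2$ edges away from $t_i^{q,z}$ in~$G$; indeed for $d=1$ the path $t_i^{q,1},\dots,t_i^{q,\ell}$ is a $P_\ell$ in~$G$ through a blue edge that ends at the gateway. What actually prevents such paths in~$G'$ is not distance but the deleted red edge at position $(d-1)(\ell-1)+2\in T_i^r$ separating the blue edge from the gateway --- again the paper's Case~2 computation. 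So the claim is true and your $s=1$ analysis stands, but the interior blue edges and the clause-gadget boundary require the explicit modular-arithmetic argument that your proposal replaces with an incorrect deletion set and an incorrect distance bound.
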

\begin{claimproof}
We only prove \textbf{\textit{I}}, since the proof for \textbf{\textit{II}} works analogously.
Let $e_b\in E_{G'}(X_i)$ be a blue edge.
Since $F_i^b\cap E(G')=\emptyset$, we conclude that $e_b\in T_i^b$.
Hence, we know that~${e_b=\{t_i^{q,s},t_i^{q,s+1}\}}$ for some $s$ such that $s\mod (\ell-1)=1$ and $q\in[2]$.
We consider two cases.\\
\textbf{Case 1:} $s=1$. 
By definition of $T_i^r$ we know that $\{t_i^{q,2},t_i^{q,3}\}\in T_i^r$. 
We can conclude that $\deg_{G'}(t_i^{q,2})=1$, since~${T_i^r\cap E(G')=\emptyset}$ .
Furthermore, since~$F_i^b\cap E(G')=\emptyset$, we conclude that $\deg_{G'}(f_i)=1$.
Thus, the longest path in $G'$ containing $e_b$ is the induced subgraph $G'[\{t_i^{q,2},t_i\}\cup W_i\cup\{f_i\}]$.
By construction, it is easy to see that~${|\{t_i^{q,2},t_i\}\cup W_i\cup\{f_i\}|=\ell-1}$.
Thus, $e_b$ is not part of a $P_\ell$ in $G'$.\\
\textbf{Case 2:} $s>1$.
It is easy to see that
\begin{align}
s\mod(\ell-1) = 1 \ 
&\Leftrightarrow\ (s+1)\mod(\ell-1) = 2 \label{eq_1}\\
&\Leftrightarrow\ (s+2-\ell)\mod(\ell-1)=2. \label{eq_2}
\end{align}
From equation (\ref{eq_1}) we conclude that~$\{t_i^{q,s+1},t_i^{q,s+2}\}\in T_i^r$ and from equation (\ref{eq_2}) we conclude that $\{t_i^{q,s+2-\ell},t_i^{q,s+3-\ell}\}\in T_i^r$.
Since $T_i^r\cap E(G')=\emptyset$, we can conclude that~$\deg_{G'}(t_i^{q,s+1})=1$ and $\deg_{G'}(t_i^{q,s+3-\ell})=1$.
Thus, the longest path in $G'$ containing $e_b$ is~$G'[\{t_i^{q,s+3-\ell},\dots,t_i^{q,s+1}\}]$.
It is easy to see that~${|\{t_i^{q,s+3-\ell},\dots,t_i^{q,s+1}\}|=\ell-1}$. Thus, $e_b$ is not part of a $P_\ell$ in $G'$.\\
Hence, no blue edge from $E_{G'}(X_i)$ is part of a $P_\ell$ in $G'$. $\hfill \Diamond$
\end{claimproof}

\textit{Correctness}: 
We show the correctness of the reduction by proving that there is a satisfying assignment for~$\Phi$ if and only if~${(G,k)}$ is a yes-instance of \textsc{$cP_\ell$D}. 

($\Rightarrow$) 
Let $\mathcal{A}:\mathcal{X}\rightarrow\{\texttt{true},\texttt{false}\}$ be a satisfying assignment for~$\Phi$. We will prove that~${(G,k)}$ is a yes-instance of \textsc{$cP_\ell$D} by constructing an edge-deletion set~$S$ of size~$k$ such that $G- S$ is $c$-colored $P_\ell$-free.

For each variable~$x_i\in \mathcal{X}$ we add $4\cdot d$ edges to $S$.
If~$\mathcal{A}(x_i)=\texttt{true}$, then we add~$T_i^r$ and $F_i^b$ to $S$.
If~$\mathcal{A}(x_i)=\texttt{false}$, then we add $F_i^r$ and $T_i^b$ to $S$.
Since~$\mathcal{A}$ satisfies~$\Phi$, there is at least one variable~$x_i\in \mathcal{X}$ such that~$\mathcal{A}(x_i)$ satisfies~$c_j$ for each clause~$c_j\in \mathcal{C}$. 
Let $p\in[3]$ such that the $p$-th literal of $c_j$ satisfies~$c_j$. 
Let $\{\alpha,\beta\}:=[3]\setminus\{p\}$. 
We add~$\{u_j,u_j^{\alpha,2}\}$ and $\{u_j,u_j^{\beta,2}\}$ to~$S$. 
Note that we added exactly two edges per clause. 
Hence, ${|S|=4\cdot d\cdot \eta+2\cdot\mu=k}$.

Next, we show that $G':=G- S$ is $c$-colored $P_\ell$-free.
Since $G$ is a $c$-colored graph, it is sufficient to prove that no blue edge is part of a $c$-colored $P_\ell$ in $G'$.
First, let~$e_b\in E_{G'}(X_i)$ be a blue edge from any variable gadget $X_i$.
Since either~${T_i^r,F_i^b\subseteq S}$ or~${F_i^r,T_i^b\subseteq S}$, we can conclude from Claim \ref{Claim:var_gadget} that $e_b$ is not part of a $P_\ell$ in $G'$.
Note that the proof for Claim \ref{Claim:var_gadget} shows that $e_b$ can neither be part of a $c$-colored~$P_\ell$ with edges from $X_i$, nor be part of a $c$-colored $P_\ell$ with edges from a connected clause gadget $Z_j$.
Next, let $\{u_j,u_j^{p,2}\}\in E_G'(Z_j)$ be the blue edge in $G'$ from any clause gadget~$Z_j$.
By construction of $S$ we know that $\deg_{G'}(u_j)=1$.
This implies that~$\{u_j,u_j^{p,2}\}$ is only part of a $P_{\ell-1}$ in $G'[Z_j]$.
Hence, we can conclude that any $c$-colored $P_\ell$ containing $\{u_j,u_j^{p,2}\}$ has to contain edges from a variable gadget.

We know by construction that~$u_j^{p,2}$ is identified with a vertex from a variable gadget $X_i$ such that $\mathcal{A}(x_i)$ satisfies~$c_j$. 
Without loss of generality, assume~${\mathcal{A}(x_i)=\texttt{true}}$.
Then, $u_j^{p,2}$ is identified with a vertex $t_i^{q,z}$ for some~$q\in[2]$ and~${T_i^r,F_i^b\subseteq S}$.
Assume towards a contradiction that there is a vertex set $V'\subseteq V$ such that $\{u_j,u_j^{p,2}\}\subseteq V'$ and $G'[V']$ is a $c$-colored~$P_\ell$.
Since $\deg_{G'}(u_j)=1$, we conclude that ${\{t_i^{q,z-\ell+2},\dots,t_i^{q,z}\}\subseteq V'}$.
Since~${(z-\ell+2)\mod(\ell-1)=2}$ we know by definition of $T_i^r$ that~${\{t_i^{q,z-\ell+2},t_i^{q,z-\ell+3}\}\in T_i^r}$.
Since~$T_i^r\subseteq S$, we conclude that~${\{t_i^{q,z-\ell+2},t_i^{q,z-\ell+3}\}\in S}$.
Hence, $G'[V']$ is not a $c$-colored~$P_\ell$, a contradiction.
Hence, $\{u_j,u_j^{p,2}\}$ is not part of a $c$-colored~$P_\ell$ in $G'$.
Thus, no blue edge is part of a $c$-colored $P_\ell$ in $G'$.
Hence, $G'$ is $c$-colored $P_\ell$-free.

($\Leftarrow$) 
Let $S$ be an edge-deletion set with $|S|\leq k$ such that $G- S$ is $c$-colored $P_\ell$-free. 
Before we define a satisfying assignment $\mathcal{A}:\mathcal{X}\rightarrow\{\texttt{true}, \texttt{false}\}$ for~$\Phi$, we show two Claims.
First, we show how many edges from each variable gadget and clause gadget have to be in $S$.

\begin{Claim}
$|S \cap E_G(X_i)|= 4\cdot d$ for any variable gadget $X_i$ and $|S\cap E_G(Z_j)\cap E_b|= 2$ for any clause gadget $Z_j$.
\label{Claim:budget}
 \end{Claim}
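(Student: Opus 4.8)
The plan is to match the solution budget $k=4d\eta+2\mu$ against lower bounds proved locally in the two kinds of gadgets. The structural fact to exploit first is that $E_G(X_1),\dots,E_G(X_\eta),E_G(Z_1),\dots,E_G(Z_\mu)$ partition $E(G)$: variable and clause gadgets share only the identified vertices $u_j^{p,2}$, so no edge of a clause gadget lies inside a variable gadget or vice versa. Hence it suffices to prove $|S\cap E_G(X_i)|\ge 4d$ for each variable gadget and $|S\cap E_G(Z_j)|\ge 2$ for each clause gadget; summing over all gadgets gives $|S|\ge 4d\eta+2\mu=k\ge|S|$, which forces equality in every one of these bounds, and the assertion about blue edges will then follow from one more local argument inside a clause gadget.

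For a variable gadget $X_i$, I would use that $G[T_i^1],G[T_i^2],G[F_i^1],G[F_i^2]$ are four pairwise edge-disjoint paths on $z=d(\ell-1)+1$ vertices, with all their edges in $E_G(X_i)$, in which every $\ell$ consecutive vertices induce a $c$-colored $P_\ell$. Splitting the $d(\ell-1)$ edges of such a path into $d$ consecutive blocks of $\ell-1$ edges exhibits $d$ pairwise edge-disjoint $c$-colored $P_\ell$s (this is the packing behind Lemma~\ref{lem:path}), so $S$ must delete at least $d$ edges from each of the four paths, hence at least $4d$ edges of $E_G(X_i)$.

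For a clause gadget $Z_j$, write $e_q^j:=\{u_j,u_j^{q,2}\}$ for $q\in[3]$; these are exactly the blue edges of $Z_j$. For $p\in[3]$ and $q:=(p\bmod 3)+1$, the induced subgraph $R_p:=G[U_j^p\cup\{u_j^{q,2}\}]$ is a $P_\ell$: the arm $U_j^p$ is a path on $\ell-1$ vertices with endpoint $u_j$, and appending the pendant vertex $u_j^{q,2}$ (adjacent only to $u_j$ among these vertices) extends it. Since $c\le\ell-2$, the arm already uses all $c$ colors, so $R_p$ is an induced $c$-colored $P_\ell$ with edge set $E_G(U_j^p)\cup\{e_q^j\}\subseteq E_G(Z_j)$. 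As the arms are pairwise edge-disjoint, one computes $E(R_p)\cap E(R_{p+1})=\{e_{p+1}^j\}$ (indices cyclic) and $e_{p+1}^j\notin E(R_{p+2})$, so no single edge lies on all three of $R_1,R_2,R_3$; hence $S$ must contain at least two edges of $E_G(Z_j)$. Together with the variable-gadget bound and $|S|\le k$, this forces $|S\cap E_G(X_i)|=4d$ and $|S\cap E_G(Z_j)|=2$ for all $i,j$.

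The final and, I expect, most delicate step is to show that the two edges of $S\cap E_G(Z_j)$ are both blue. I would argue by contradiction: if at most one of $e_1^j,e_2^j,e_3^j$ lies in $S$, then two of them, say $e_2^j$ and $e_3^j$, avoid $S$; considering the induced $c$-colored $P_\ell$s ``$U_j^2$ with $u_j^{3,2}$ appended'', ``$U_j^3$ with $u_j^{2,2}$ appended'', and ``$U_j^1$ with $u_j^{2,2}$ appended'' then forces $S$ to contain a non-blue edge of arm $U_j^2$, a non-blue edge of arm $U_j^3$, and some edge of arm $U_j^1$ — three pairwise distinct edges of $E_G(Z_j)$, contradicting $|S\cap E_G(Z_j)|=2$. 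Hence $S$ contains at least two blue edges of $Z_j$, and since $|S\cap E_G(Z_j)|=2$ this yields $|S\cap E_G(Z_j)\cap E_b|=2$. The one point requiring genuine care throughout is verifying that the subgraphs $R_p$ and their variants really are induced: identifying $u_j^{p,2}$ with an endpoint $t_i^{q,z}$ or $f_i^{q,z}$ of a long path in a variable gadget introduces no chord, because inside a variable gadget the endpoints $t_i^{q,z}$ and $f_i^{q,z}$ of the long paths are pairwise non-adjacent (they meet only at the other ends $t_i$ and $f_i$).
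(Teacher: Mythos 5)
Your proof is correct and follows essentially the same strategy as the paper: per-gadget lower bounds from (near-)edge-disjoint packings of induced $c$-colored $P_\ell$s, summed against the tight budget $k=4d\eta+2\mu$ to force equality, followed by a local three-$P_\ell$ argument to pin down the two blue deletions in each clause gadget. The only differences are cosmetic (you pick a cyclic family $R_1,R_2,R_3$ where the paper uses two $P_\ell$s through arm $U_j^1$ plus one through $U_j^2$, and you make the edge-set partition and inducedness checks explicit where the paper leaves them implicit).
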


\begin{claimproof}
First, we will show that $|S\cap E_G(Z_j)|\geq 2$.
Assume towards a contradiction that $|S\cap E_G(Z_j)|< 2$.
First, consider the vertex sets $V_1:=U_j^1\cup \{u_j^{2,2}\}$ and~${V_2:=U_j^1\cup \{u_j^{3,2}\}}$.
Note that $G[V_1]$ and $G[V_2]$ are two different $c$-colored $P_\ell$s and that $V_1\cap V_2=U_j^1$.
Hence, we conclude that $|S\cap U_j^1|=1$.
Next, consider the vertex set $V_3:=U_j^2\cup\{u_j^{3,2}\}$.
The induced subgraph $G[V_3]$ is a $c$-colored $P_\ell$ and $V_3\cap U_j^1=\emptyset$.
This is a contradiction, since $G- S$ is $c$-colored $P_\ell$-free.
Hence,~${|S\cap E_G(Z_j)|\geq 2}$.

By construction, $G[X_i]$ contains $4\cdot d$ edge-disjoint $c$-colored $P_\ell$s.
Hence, we know that~$|S\cap E_G(X_i)|\geq 4\cdot d$.
Since $|S|\leq k= 4\cdot d\cdot \eta + 2\cdot \mu$, we can conclude that~${|S\cap E_G(X_i)|= 4\cdot d}$ and~${|S\cap E_G(Z_j)|= 2}$.

Second, we will show that this implies that~${|S\cap E_G(Z_j)\cap E_b|=2}$.
Assume towards a contradiction that $|S\cap E_G(Z_j)\cap E_b|<2$.
Without loss of generality we can assume that $\{u_j,u_j^{1,2}\},\{u_j,u_j^{2,2}\}\notin S$.
Let~$V_1:=U_j^1\cup \{u_j^{2,2}\}, V_2:=U_j^2\cup \{u_j^{1,2}\}$ and~${V_3:=U_j^3\cup \{u_j^{1,2}\}}$.
The induced subgraphs $G[V_1],G[V_2]$ and $G[V_3]$ are three $c$-colored $P_\ell$s.
This is a contradiction, since $S\cap(E_G(V_\alpha)\cap E_G(V_\beta))=\emptyset$ for~$\alpha,\beta\in[3]$ with $\alpha\neq\beta$ and $|S\cap E_G(Z_j)|=2$.
Hence,~${|S\cap E_G(Z_j)\cap E_b|=2}$. $\hfill \Diamond$
\end{claimproof}

Second, we specify  which edges from variable gadgets have to be deleted.

\begin{Claim} 
For each variable gadget $X_i$ we have ${S\cap E_G(T_i^1\cup T_i^2)=T_i^b}$ \linebreak[4]or ${S\cap E_G(F_i^1\cup F_i^2)=F_i^b}$.
\label{Claim:s}
 \end{Claim}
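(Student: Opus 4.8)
The plan is to combine the global edge-budget count of Claim~\ref{Claim:budget} with the local rigidity supplied by Lemma~\ref{lem:path}, applied separately to the four paths $G[T_i^1], G[T_i^2], G[F_i^1], G[F_i^2]$ inside the gadget $X_i$.

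The first step is bookkeeping inside $X_i$. The edge sets $E_G(T_i^1)$, $E_G(T_i^2)$, $E_G(F_i^1)$, $E_G(F_i^2)$ and the edge set of the path connecting $t_i$ and $f_i$ through $W_i$ are pairwise disjoint and together make up $E_G(X_i)$. By the construction, each of $G[T_i^q]$ and $G[F_i^q]$ contains $d$ edge-disjoint $c$-colored $P_\ell$s, so $G-S$ being $c$-colored $P_\ell$-free forces $|S \cap E_G(T_i^q)| \ge d$ and $|S \cap E_G(F_i^q)| \ge d$ for both $q \in [2]$. Since $|S \cap E_G(X_i)| = 4 \cdot d$ by Claim~\ref{Claim:budget}, all four inequalities must be equalities, and $S$ contains no edge of the $t_i$-$f_i$ path.

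The second step establishes a $t$-side/$f$-side dichotomy via the ``crossing'' $P_\ell$s. For all $q, q' \in [2]$ the induced subgraph $G[\{t_i^{q,2},t_i\} \cup W_i \cup \{f_i, f_i^{q',2}\}]$ is a $c$-colored $P_\ell$ by the construction, hence must meet $S$; every edge of it other than its two blue end-edges $\{t_i^{q,1}, t_i^{q,2}\}$ and $\{f_i^{q',1}, f_i^{q',2}\}$ lies on the $t_i$-$f_i$ path and thus avoids $S$, so one of these two blue edges is in $S$. Ranging over all $q, q'$ yields that either both blue edges incident with $t_i$ lie in $S$, or both blue edges incident with $f_i$ lie in $S$: if, say, $\{t_i^{1,1}, t_i^{1,2}\} \notin S$, then taking $q = 1$ and $q' \in [2]$ forces $\{f_i^{1,1}, f_i^{1,2}\}, \{f_i^{2,1}, f_i^{2,2}\} \in S$, and the case $\{t_i^{2,1}, t_i^{2,2}\} \notin S$ is symmetric. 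Using the $t \leftrightarrow f$ symmetry of $X_i$, I may assume the former.

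The third step pins down $S \cap E_G(T_i^q)$ for each $q$ using Lemma~\ref{lem:path}. Deleting the endpoint $t_i^{q,1}$ from the path $G[T_i^q]$ leaves the induced path $P_q := G[\{t_i^{q,2}, \dots, t_i^{q,z}\}]$ on $z - 1 = d \cdot (\ell-1)$ vertices, in which every $\ell$ consecutive vertices induce a $c$-colored $P_\ell$; moreover $S'_q := S \cap E(P_q)$ has size $d-1$, because $\{t_i^{q,1}, t_i^{q,2}\}$ is the only edge of the set $S \cap E_G(T_i^q)$ incident with $t_i^{q,1}$. Since $P_q - S'_q$ is an induced subgraph of $G - S$, it is $c$-colored $P_\ell$-free, so Lemma~\ref{lem:path} forces $S'_q = \{\{t_i^{q,s}, t_i^{q,s+1}\} : s \ge 2 \text{ and } s \mod (\ell-1) = 1\}$; adding back $\{t_i^{q,1}, t_i^{q,2}\}$ gives $S \cap E_G(T_i^q) = T_i^b \cap E_G(T_i^q)$. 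Taking the union over $q \in [2]$ gives $S \cap E_G(T_i^1 \cup T_i^2) = T_i^b$, and in the other case of the dichotomy the same argument with $t$ and $f$ interchanged gives $S \cap E_G(F_i^1 \cup F_i^2) = F_i^b$. I expect the delicate part to be the first step: the budget count has to be tight enough to conclude simultaneously that exactly $d$ edges of $S$ sit in each of the four paths \emph{and} that none sits on the $W_i$-path, since the crossing-path argument in the second step needs the interior edges of those $P_\ell$s to stay out of $S$; once this is arranged, reconciling the index convention of Lemma~\ref{lem:path} on $P_q$ with the definition of $T_i^b$ is a routine re-indexing.
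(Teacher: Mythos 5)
Your proof is correct and follows essentially the same route as the paper's: tighten the budget of Claim~\ref{Claim:budget} to exactly $d$ deletions per path and none on the $W_i$-connector, use the crossing $P_\ell$s $G[\{t_i^{q,2},t_i\}\cup W_i\cup\{f_i,f_i^{q',2}\}]$ to force both blue edges at $t_i$ or both at $f_i$ into $S$, and then apply Lemma~\ref{lem:path} to the paths with the first vertex removed. The only cosmetic difference is that you obtain the final equality directly from the exact characterization in Lemma~\ref{lem:path} within each $T_i^q$, whereas the paper re-invokes the budget against the $2d$ edge-disjoint $P_\ell$s on the $F$-side; both are valid.
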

\begin{claimproof}
First, we show that ${E_G(\{t_i,t_i^{1,2},t_i^{2,2}\})\subseteq S}$ or ${E_G(\{f_i,f_i^{1,2},f_i^{2,2}\})\subseteq S}$.
Consider the four vertex sets in ${\widetilde{X}_i:=\{T_i^1,T_i^2,F_i^1,F_i^2\}}$.
By construction, for each set~${A\in\widetilde{X}_i}$ the induced subgraph $G[A]$ contains $d$ edge-disjoint $c$-colored $P_\ell$s, and for each ${A,B\in\widetilde{X}_i}$ with ${A\neq B}$ we know~${E_G(A)\cap E_G(B)=\emptyset}$. 
From Claim~\ref{Claim:budget} we know that ${|S\cap E_G(X_i)|=4\cdot d}$.
Hence, we conclude that ${|S\cap E_G(A)|=d}$.
Furthermore, we know by construction that ${E_G(A)\cap E_G(W_i\cup\{t_i,f_i\})=\emptyset}$.
Hence, we conclude that ${S\cap E_G(W_i\cup\{t_i,f_i\})=\emptyset}$.
This implies~$E_G(\{t_i,t_i^{1,2},t_i^{2,2}\})\subseteq S$ or~$E_G(\{f_i,f_i^{1,2},f_i^{2,2}\})\subseteq S$, since the induced subgraphs~$G[W_i\cup\{t_i,f_i,t_i^{q,2},f_i^{q',2}\}]$ are $c$-colored $P_\ell$s for each ${q,q'\in[2]}$.

Without loss of generality, assume~${E_G(\{t_i,t_i^{1,2},t_i^{2,2}\})\subseteq S}$.
We can conclude that~$|S\cap (T_i^1\setminus\{t_i\})|=d-1=|S\cap (T_i^2\setminus\{t_i\})|$, since $|S\cap T_i^1|=d=|S\cap T_i^2|$, as shown above.
Furthermore, the induced subgraphs~${G[T_i^1\setminus\{t_i\}]}$ and~${G[T_i^2\setminus\{t_i\}]}$ are paths of $d\cdot(\ell-1)$ vertices where any $\ell$ consecutive vertices form a $c$-colored~$P_\ell$.
Thus, from Lemma \ref{lem:path} we conclude ${T_i^b\subseteq S}$.
Since $|T_i^b|=2\cdot d$ and the induced subgraph~${G[F_i^1\cup F_i^2]}$ contains~${2\cdot d}$ edge-disjoint $c$-colored $P_\ell$s, we conclude that~${S\cap E_G(T_i^1\cup T_i^2)=T_i^b}$ from Claim \ref{Claim:budget}.
Thus,~${S\cap E_G(T_i^1\cup T_i^2)=T_i^b}$ or~${S\cap E_G(F_i^1\cup F_i^2)=F_i^b}$. $\hfill \Diamond$
\end{claimproof}

We show how to construct an equivalent solution $S'$ such that either~$S'\cap E_G(X_i)=F_i^b\cup T_i^r$ or~${S'\cap E_G(X_i)=T_i^b\cup F_i^r}$ for each variable gadget~$X_i$.

If ${S\cap E_G(T_i^1\cup T_i^2)=T_i^b}$, then we can conclude that ${S\cap E_G(X_i)=T_i^b\cup F}$ such that~$F\subseteq E_G(F_i^1\cup F_i^2)$ from Claim \ref{Claim:s}.
We will show that ${S':=(S\setminus F)\cup F_i^r}$ is an equivalent solution.
Since~${|T_i^b|=2\cdot d}$, we can conclude that~${|F|=2\cdot d}$ from Claim~\ref{Claim:budget}.
Hence, it is easy to see that~${|S'|=|S|}$.
Now, we show that~${G':=G- S'}$ is $c$-colored $P_\ell$-free.
Since $G$ is constructed from a \textsc{(3,B2)-SAT} formula, we know that for each clause gadget $Z_j$ and each~${p\in[3]}$, the vertex $u_j^{p,2}$ is identified with exactly one vertex from a variable gadget.
Furthermore, we know that the three edges ${\{u_j,u_j^{p,2}\}}$ are the only blue edges in $E_G(Z_j)$ and that $G$ is a $c$-colored graph.
Hence, it is sufficient to show that for any variable gadget~$X_i$ there is no blue edge~${e_b\in E_{G'}(X_i)}$ that is part of a $c$-colored $P_\ell$ in $G'$ and for each clause gadget with~${u_j^{p,2}=t_i^{q,z}}$ or~$u_j^{p,2}=f_i^{q,z}$ for some~$p\in[3],q\in[2]$, the blue edge~${\{u_j,u_j^{p,2}\}}$ is not part of a $c$-colored $P_\ell$ in~$G'$.
Since~${S'\cap E_G(X_i)=F_i^b\cup T_i^r}$, it follows by Claim \ref{Claim:var_gadget} that no blue edge from~$E_{G'}(X_i)$ is part of a $c$-colored~$P_\ell$ in $G'$.
Let $Z_j$ be a clause gadget such that ${u_j^{p,2}=t_i^{q,z}}$ or~$u_j^{p,2}=f_i^{q,z}$ for some~${p\in[3],q\in[2]}$.
From Claim \ref{Claim:budget} we conclude that $S'$ contains exactly two blue edges from $E_G(Z_j)$.
Hence, $\deg_{G'}(u_j)=1$.
We consider two cases.\\
\textbf{Case 1:} $u_j^{p,2}=t_i^{q,z}$ for some $p\in[3],q\in[2]$.
We will show that ${\{u_j,u_j^{p,2}\}\in S'}$.
Assume towards a contradiction that $\{u_j,u_j^{p,2}\}\notin S'$.
Since $S'\cap E_G(Z_j)=S\cap E_G(Z_j)$, we conclude that ${\{u_j,u_j^{p,2}\}\notin S}$.
And since ${S\cap E_G(T_i^1\cup T_i^2)=T_i^b}$, we conclude that the induced subgraph ${G[\{u_j,t_i^{q,z},\dots,t_i^{q,z-\ell+2}\}]-S}$ is a $c$-colored $P_\ell$.
This is a contradiction, since $G- S$ is $c$-colored $P_\ell$ free.
Hence, ${\{u_j,u_j^{p,2}\}\in S'}$ and thus, ${\{u_j,u_j^{p,2}\}}$ is not part of a $c$-colored $P_\ell$ in $G'$.\\
\textbf{Case 2:} $u_j^{p,2}=f_i^{q,z}$ for some $p\in[3],q\in[2]$.
Since it is obvious that ${\{u_j,u_j^{p,2}\}}$ is not part of a $c$-colored $P_\ell$ in $G'$ if~${\{u_j,u_j^{p,2}\}\in S'}$, we only consider the case in which~${\{u_j,u_j^{p,2}\}\notin S'}$.
Since ${(z-\ell+2)\mod(\ell-1)=2}$, we know by definition of $F_i^r$ that~${\{f_i^{q,z-\ell+2},f_i^{q,z-\ell+3}\}\in F_i^r}$.
This implies that~${\deg_{G'}(f_i^{q,z-\ell+3})=1}$, since $F_i^r\subseteq S'$.
Thus, the longest path in $G'$ that contains~${\{u_j,u_j^{p,2}\}}$ is the induced subgraph ${G'[V':=\{u_j,f_i^{q,z},\dots,f_i^{q,z-\ell+3}\}]}$.
It is not hard to see that~${|V'|=\ell-1}$.
Thus,~$\{u_j,u_j^{p,2}\}$ is not part of a $c$-colored $P_\ell$ in $G'$.

Hence, no blue edge is part of a $c$-colored $P_\ell$ in $G'$ and therefor, $S'$ is a solution such that~$S'\cap E_G(X_i)=T_i^b\cup F_i^r$.

If ${S\cap E_G(T_i^1\cup T_i^2)\neq T_i^b}$, then we conclude that ${S\cap E_G(F_i^1\cup F_i^2)=F_i^b}$ from Claim~\ref{Claim:s}. Hence, ${S\cap E_G(X_i)=F_i^b\cup T}$ such that~${T\subseteq E_G(T_i^1\cup T_i^2)}$.
With an analogous argument, we can show that ${S':=(S\setminus T)\cup T_i^r}$ is an equivalent solution.
Thus, we have shown how to construct a solution $S'$ from~$S$ such that either~${S'\cap E_G(X_i)=F_i^b\cup T_i^r}$ or ${S'\cap E_G(X_i)=T_i^b\cup F_i^r}$

Now we can define a satisfying assignment ${\mathcal{A}:\mathcal{X}\rightarrow\{\texttt{true}, \texttt{false}\}}$ for~$\Phi$ as:
\begin{center}
$\mathcal{A}(x_i):=
\begin{cases} 
\texttt{true}&\text{if }S'\cap E_G(X_i)=F_i^b\cup T_i^r\\
\texttt{false}&\text{if }S'\cap E_G(X_i)=T_i^b\cup F_i^r.
\end{cases}$
\end{center}

It remains to show that $\mathcal{A}$ satisfies $\Phi$. 
Let $c_j\in\mathcal{C}$. 
We know that there is exactly one $p\in[3]$ such that $\{u_j, u_j^{p,2}\}\notin S$. 
Let ${x_i\in\mathcal{X}}$ be the variable that occurs as the $p$-th literal in $c_j$.
If the $p$-th literal in $c_j$ is a positive literal, we know that the vertex $u_j^{p,2}$ is identified with the vertex $t_i^{q,z}$ from the variable gadget $X_i$ for~${q\in[2]}$. 
Since $S'$ is a solution and ${\{u_j, u_j^{p,2}\}\notin S'}$, we conclude that ${T_i^r\subseteq S'}$.
Hence,~${S'\cap E_G(X_i)=F_i^b\cup T_i^r}$.
Thus, ${\mathcal{A}(x_i)=\texttt{true}}$ and therefore ${\mathcal{A}(x_i)}$ satisfies $c_j$.
If the $p$-th literal in $c_j$ is a negative literal, the argument works analogously.
Hence, each clause $c_j$ is satisfied by $\mathcal{A}$ and thus $\Phi$ is satisfied as well.
\end{proof}

If the girth of a graph $G$ is greater than $\ell$, then each subgraph of~$G$ that is isomorphic to a $c$-colored~$P_\ell$ is an induced subgraph.
This implies the following.

\begin{corollary}
\textsc{$cP_\ell$D} is NP-hard for each $\ell\geq 4$ and $c\in [2,\ell-2]$ on strictly non-cascading graphs.
\end{corollary}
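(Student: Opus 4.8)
The plan is to derive the corollary as an immediate consequence of Theorem~\ref{thm:cPD}. First I would recall the key combinatorial observation stated just before the corollary: if the girth of a graph $G$ exceeds $\ell$, then $G$ contains no cycle on at most $\ell$ vertices, so in particular every $P_\ell$ subgraph $G[\{v_1,\dots,v_\ell\}]$ (with the path edges $\{v_i,v_{i+1}\}$) can have no additional chord $\{v_i,v_j\}$ with $j \neq i+1$ — such a chord would close a cycle of length at most $\ell$. Hence every subgraph of $G$ isomorphic to a $c$-colored $P_\ell$ is automatically an induced subgraph, which is precisely the definition of $G$ being strictly non-cascading for the forbidden family $\mathcal F$ consisting of the $c$-colored $P_\ell$s.

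Next I would invoke Theorem~\ref{thm:cPD}: for every $\ell \geq 4$ and every $c \in [2,\ell-1]$, and in particular for every $c \in [2,\ell-2]$, the theorem produces in polynomial time an instance $(G,k)$ of \textsc{$cP_\ell$D} that is equivalent to the given \textsc{(3,B2)-SAT} instance and in which the girth of $G$ is greater than $2\cdot d \cdot \ell$ for any fixed constant $d \geq 1$. Choosing, say, $d = 1$ already gives girth greater than $2\ell > \ell$, so the constructed $G$ satisfies the girth hypothesis of the observation above and is therefore strictly non-cascading. Since the reduction in Theorem~\ref{thm:cPD} is a polynomial-time many-one reduction from an NP-hard problem and the produced instances all lie in the subclass of strictly non-cascading graphs, \textsc{$cP_\ell$D} restricted to strictly non-cascading graphs is NP-hard for each $\ell \geq 4$ and each $c \in [2,\ell-2]$.

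There is essentially no obstacle here; the only point requiring a line of care is checking that the girth bound delivered by the theorem is genuinely greater than $\ell$ (not merely greater than some quantity depending on $d$), which is handled by noting the bound $2 \cdot d \cdot \ell$ is at least $2\ell$ for $d \geq 1$ and $\ell \geq 4$. One could equivalently phrase the argument by observing that strict non-cascading is a hereditary-style structural restriction that the construction meets verbatim, so that the NP-hardness transfers without any modification of the reduction or the parameter $k := 4 \cdot d \cdot \eta + 2 \cdot \mu$. I would close by remarking, for completeness, that the same reasoning combined with ``strictly non-cascading implies non-cascading'' (noted right after the Definition) also yields NP-hardness on the broader class of non-cascading graphs.
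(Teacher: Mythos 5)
Your proposal is correct and takes essentially the same route as the paper: the paper likewise derives the corollary in one line by observing that girth greater than $\ell$ forces every subgraph isomorphic to a $c$-colored $P_\ell$ to be induced (hence the constructed graphs are strictly non-cascading) and then invoking Theorem~\ref{thm:cPD}. One tiny slip: Theorem~\ref{thm:cPD} is stated for $c\in[2,\ell-2]$ rather than $c\in[2,\ell-1]$ as you first write, but since you immediately restrict to $c\in[2,\ell-2]$ this does not affect the argument.
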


Furthermore, consider the problem~\textsc{All-$cP_\ell$D} where the input consists of a graph~$G$ and an integer~$k$ and the task is to destroy all (not necessarily induced)~$c$-colored~$P_\ell$s with at most~$k$ edge deletions. Theorem~\ref{thm:cPD} then implies the following.

\begin{corollary}
~\textsc{All-$cP_\ell$D} is NP-hard for each $\ell\geq 4$ and $c\in [2,\ell-2]$.
\end{corollary}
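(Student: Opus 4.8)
The plan is to derive \textsc{All-$cP_\ell$D} hardness directly from the instance constructed in the proof of Theorem~\ref{thm:cPD}, using the fact that the graph~$G$ produced there has girth greater than~$\ell$. First I would invoke the observation stated just before the first corollary: if the girth of~$G$ exceeds~$\ell$, then every subgraph of~$G$ isomorphic to a $c$-colored~$P_\ell$ is in fact an induced subgraph. Indeed, a non-induced~$P_\ell$ on a vertex set~$V'$ would require a chord, i.e. an edge $\{v_i,v_j\}$ with $j\neq i\pm 1$; together with the path between $v_i$ and $v_j$ this chord closes a cycle of length at most~$\ell$, contradicting the girth bound. In Theorem~\ref{thm:cPD} the constructed graph has girth at least $2\cdot d\cdot\ell > \ell$ (taking, say, $d=1$ in that construction suffices, and any $d\ge 1$ works), so this applies.

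Given this, the set of (not necessarily induced) $c$-colored~$P_\ell$s in~$G$ coincides exactly with the set of induced $c$-colored~$P_\ell$s in~$G$, and the same holds for every subgraph $G-S$ (deleting edges cannot decrease the girth). Therefore, for this particular family of instances, \textsc{All-$cP_\ell$D} and \textsc{$cP_\ell$D} have literally the same yes-instances: $(G,k)$ admits an edge set~$S$ of size at most~$k$ destroying all induced $c$-colored~$P_\ell$s if and only if it admits one destroying all $c$-colored~$P_\ell$s whatsoever. Since Theorem~\ref{thm:cPD} establishes that deciding the former is NP-hard (for each $\ell\ge 4$ and $c\in[2,\ell-2]$) via a polynomial-time reduction from \textsc{(3,B2)-SAT}, the very same reduction witnesses NP-hardness of \textsc{All-$cP_\ell$D} in the stated parameter range. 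I would also note that \textsc{All-$cP_\ell$D} is clearly in NP (a solution~$S$ can be verified in $\Oh(n^\ell)$ time since~$\ell$ is constant), so the problem is NP-complete, though only hardness is claimed.

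There is essentially no technical obstacle here; the one point requiring care is making sure the girth argument is applied to all of $G-S$ and not just to~$G$, and confirming that the construction in Theorem~\ref{thm:cPD} is explicitly stated to have girth at least $2\cdot d\cdot\ell$ for the constant~$d$ chosen in the reduction — which it is, as part of the theorem statement and verified in its proof. Hence the corollary follows immediately.
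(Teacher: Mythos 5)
Your proposal is correct and matches the paper's own (implicit) argument: the paper justifies this corollary purely by the observation that a girth greater than $\ell$ forces every $c$-colored $P_\ell$ subgraph to be induced, so the reduction of Theorem~\ref{thm:cPD} works verbatim for the non-induced variant. Your additional remark that the girth bound persists in $G-S$, so the equivalence of the two problems holds on every instance arising from the reduction, is a useful detail the paper leaves unstated but changes nothing about the approach.
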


Next, we show that also the remaining case where $c=\ell-1$ is NP-hard. 

\begin{theorem}
\textsc{$(\ell-1)P_\ell$D} is NP-hard for any $\ell\geq 4$ even if the maximum degree of~$G$ is $16$. 
\label{thm:lPD}
\end{theorem}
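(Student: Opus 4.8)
The plan is to adapt the polynomial-time reduction from \textsc{(3,B2)-SAT} used in the proof of Theorem~\ref{thm:cPD}. The key observation is that for $c=\ell-1$ a $c$-colored $P_\ell$ is exactly a \emph{rainbow} $P_\ell$: it has $\ell-1$ edges, so demanding $\ell-1$ distinct colors forces all edge colors to be pairwise different, and hence \textsc{$(\ell-1)P_\ell$D} is the problem of destroying all induced rainbow $P_\ell$s. Several ingredients of Theorem~\ref{thm:cPD} survive: one still uses long ``base'' paths whose edges are colored periodically with period $\ell-1$ by the colors $1,\dots,\ell-1$, so that every window of $\ell$ consecutive vertices is a rainbow $P_\ell$ and Lemma~\ref{lem:path} pins down their minimum edge-deletion sets; one still puts $4\cdot d$ pairwise edge-disjoint rainbow $P_\ell$s into each variable gadget; one sets $k:=4d\eta+b\cdot\mu$ where $b$ is the (constant) per-clause budget; and one proves both implications through analogues of Claims~\ref{Claim:var_gadget},~\ref{Claim:budget}, and~\ref{Claim:s}.

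What does not survive is the \emph{linking} of the true-side and false-side of a variable gadget and, more severely, the clause gadget. In Theorem~\ref{thm:cPD} the true/false choice is forced by a $(c-1)$-colored, blue-free path joining $t_i$ and $f_i$, but for $c=\ell-1$ such a path would need $\ell-2$ distinct colors on only $\ell-3$ edges. And the clause gadget uses prongs $G[U_j^p]$ that are ``$c$-colored $P_{\ell-1}$s'' together with three monochromatic spoke edges $\{u_j,u_j^{p,2}\}$; when every forbidden $P_\ell$ must be rainbow, a $P_{\ell-1}$ cannot carry $\ell-1$ colors, and one checks that the three forbidden $P_\ell$s that force $|S\cap E_G(Z_j)|\ge 2$ cannot all be rainbow at once. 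So the reduction for $c=\ell-1$ needs redesigned variable- and clause-gadgets, and accommodating the necessary extra local structure is exactly what raises the maximum degree from $3$ to the constant $16$ claimed in the statement.

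Concretely, I would (i) keep the four base paths $T_i^1,T_i^2,F_i^1,F_i^2$ of the variable gadget, but replace the single $W_i$-path by a small constant-size linking gadget that has enough room to carry all needed colors, still forces ``both $t_i$-pendants deleted'' \textsc{xor} ``both $f_i$-pendants deleted'' (hence, via Lemma~\ref{lem:path}, one of two ``true''/``false'' shapes), and still leaves, after the ``true'' shape, the positive-occurrence terminals of $x_i$ \emph{not} at the end of a rainbow $P_{\ell-1}$ completable by the incident clause edge while the negative-occurrence terminals \emph{are}, symmetrically for ``false''; and (ii) redesign the clause gadget so that it adds a constant $b$ to $k$, attainable iff at least one incident terminal carries a ``satisfied'' signal, and --- since monochromatic spokes are impossible --- attach positive and negative literals through a \emph{uniform} colored interface, the literal's sign being absorbed inside the variable gadget (each occurrence exports a terminal with the same interface regardless of sign).

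The routine parts mirror Theorem~\ref{thm:cPD}: checking that $G$ is built in polynomial time with maximum degree $16$, that each variable gadget contains $4\cdot d$ edge-disjoint rainbow $P_\ell$s, and that the two directions of the correctness claim follow from the claim analogues. The main obstacle is designing the linking and clause gadgets so that, \emph{simultaneously}, (a) the per-clause and per-variable budgets are tight, (b) a consistent true/false labeling is forced, and (c) ``clause $c_j$ can be made rainbow-$P_\ell$-free within its budget'' is equivalent to ``$c_j$ is satisfied'' --- all under the rigidity that every forbidden subgraph is a \emph{complete} rainbow $P_\ell$, so that there is no slack color to reuse. Re-establishing the analogues of Claims~\ref{Claim:var_gadget}--\ref{Claim:s} in this regime, while keeping every vertex degree at most $16$, is where the real work lies.
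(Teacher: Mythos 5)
Your proposal correctly diagnoses why the reduction of Theorem~\ref{thm:cPD} breaks down at $c=\ell-1$ (a rainbow $P_\ell$ leaves no slack color, so the blue-free linking path and the monochromatic clause spokes cannot be reproduced), but it stops exactly where the proof would have to begin: you never actually construct the replacement variable-linking and clause gadgets, you only list the properties they would need to have, and you yourself concede that establishing the analogues of Claims~\ref{Claim:var_gadget}--\ref{Claim:s} under these constraints ``is where the real work lies.'' As it stands this is a research plan with the central construction missing, not a proof; in particular nothing in the proposal certifies that gadgets with properties (a)--(c) and maximum degree $16$ exist at all, and the number $16$ is asserted rather than derived.

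The paper avoids all of this by taking a completely different and much lighter route: it reduces from \textsc{2$P_3$D}, which is already known to be NP-hard even on graphs of maximum degree eight. Given an instance $(G,k)$ of \textsc{2$P_3$D}, one attaches to each vertex $v$ exactly $\deg_G(v)$ pendant paths on $\ell-2$ new vertices whose edges carry the colors $3,4,\dots,\ell-1$ in order (so each pendant path is an $(\ell-3)$-colored $P_{\ell-2}$ using none of the colors red and blue). Since an $(\ell-1)$-colored $P_\ell$ has only $\ell-1$ edges, it must use every color exactly once; the only red and blue edges are the original edges of $G$, and the only way to pick up the remaining $\ell-3$ colors is to traverse one pendant tail entirely. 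Hence every induced $(\ell-1)$-colored $P_\ell$ in the new graph $H$ is an induced bicolored $P_3$ of $G$ extended by one pendant tail, the budget $k$ is unchanged, and the backward direction only requires normalizing a solution so that it uses red and blue edges only (swapping any deleted tail edge for an incident edge of $G$). The degree bound $16$ then falls out for free: each original vertex gets $\deg_G(v)$ new pendant neighbors, so its degree doubles from at most $8$ to at most $16$. If you want to salvage your approach you would need to actually exhibit the gadgets and verify the packing and exchange arguments from scratch; the reduction from \textsc{2$P_3$D} makes all of that unnecessary.
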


\begin{proof}
We prove this theorem by giving a polynomial time reduction from the NP-hard \textsc{2$P_3$D} problem~\cite{gruettemeier19}. 

\textit{Construction}:
Let $(G,k)$ be an instance of \textsc{2$P_3$D}.
We will show how to construct an equivalent instance~$(H,k)$ of \textsc{$(\ell-1)P_\ell$D} for any $\ell\geq 4$.
We use the instance~$(G,k)$ and add vertices and edges with new colors.
Hence,~${V(G)\subseteq V(H)}$ and~${E(G)\subseteq E(H)}$.
For each vertex $v\in V(G)$ we add $\deg_{G}(v)\cdot (\ell-3)$ new vertices~$v_i^j$ for $i\in[\deg_G(v)]$ and~$j\in[\ell-3]$ to $V(H)$.
Recall that yellow is the third color.
We add yellow edges $\{v,v_i^1\}$ for each~${i\in[\deg_G(v)]}$.
And for each $j\in[\ell-4]$ we add an edge~${\{v_i^j,v_i^{j+1}\}}$ with color~${j+3}$.
Note that for each~${i\in[\deg_G(v)]}$ the induced subgraph $H[\{v,v_i^1,\dots,v_i^{\ell-3}\}]$ is an~${(\ell-3)}$-colored~$P_{\ell-2}$.
Hence, each vertex~${v\in V(G)}$~is part of $\deg_G(v)$ edge-disjoint~$(\ell-3)$-colored~$P_{\ell-2}$ in~$H$ (see Fig.~\ref{fig:zcPz+1D_NP-hard}). 
The budget~$k$ remains the same.

By construction, we know that $\deg_{H}(v) = 2\cdot \deg_G(v)$ for each $v\in V(G)$, and that~$\deg_{H}(v_i^j)\in[2]$ for the new vertices $v_i^j\in V(H)\setminus V(G)$. 
Since \textsc{2$P_3$D} is NP-hard even if the maximum degree of $G$ is eight~\cite{gruettemeier19}, the correctness of the reduction will imply the NP-hardness even if the maximum degree of~$H$ is~16.

\textit{Correctness}:
We will now prove the correctness of the reduction by showing that~$(G,k)$ is a yes-instance of \textsc{2$P_3$D} if and only if $(H,k)$ is a yes-instance \linebreak[4]of~\textsc{${(\ell-1)P_\ell}$D}.

($\Rightarrow$)
Let $S$ be an edge-deletion set of size at most $k$ such that $G- S$ is bicolored~$P_3$-free. 
Since~$H$ is a $(\ell-1)$-colored graph, each $(\ell-1)$-colored~$P_{\ell}$ in $H$ has to include exactly one red edge and exactly one blue edge. 
By construction, we know that an edge $e\in E(H)$ is red or blue if and only if~$e\in E(G)$. 
Since $v_i^1$ is the only vertex from $\{v_i^1,\dots,v_i^{\ell-3}\}$ that is adjacent to a vertex from $V(G)$ for each $i\in[\deg_G(v)]$, we conclude that each~$(\ell-1)$-colored $P_{\ell}$ in $H$ has to include an induced bicolored~$P_3$ from~$G$. 
Since~${G- S}$ is bicolored $P_3$-free,~$H- S$ is $(\ell-1)$-colored $P_{\ell}$-free. 
Thus, $S$ is a solution for~$(H,k)$.

($\Leftarrow$)
Let $S$ be an edge-deletion set of size at most $k$ such that $H- S$ is $(\ell-1)$-colored $P_{\ell}$-free.

First, we will show how to construct an equivalent solution $S'$ with $|S'|\leq |S|$ such that $S'$ only contains blue, red and yellow edges.
Note that this is only necessary if~$\ell>4$.
Let $\{v_i^j,v_i^{j+1}\}\in S$ be an edge with color $c>3$.
Let~${V'\subseteq V(H)}$ be a vertex set such that $v_i^j,v_i^{j+1}\in V'$ and $H[V']$ is an $(\ell-1)$-colored~$P_\ell$.
Since~$H$ is an $(\ell-1)$-colored graph, each $(\ell-1)$-colored $P_\ell$ has to include a red edge and a blue edge.
By construction, the only vertices, to which blue and red edges can be incident, are the vertices from $V(G)$.
Hence, we conclude that $v\in V'$. 
Thus,~${S':=(S\setminus\{v_i^j,vi^{j+1}\})\cup \{\{v,v_i^1\}\}}$ is an equivalent solution that only contains blue, red, and yellow edges.

\begin{figure}[t!]
\centering
\subfloat{
\centering
\begin{tikzpicture}
\node[vertex](u) at (2,0){};\node at(2,-.4){$x$};
\node[vertex](v) at (4,0){};\node at(4,-.4){$y$};
\node[vertex](w) at (3,2){};\node at(3,2.5){$w$};
\node[vertex](x) at (2,4){};\node at(2,4.4){$u$};
\node[vertex](y) at (4,4){};\node at(4,4.4){$v$};
\path[red edge](u)--(v);
\path[red edge](u)--(w);
\path[red edge](v)--(w);
\path[blue edge](w)--(x);
\path[blue edge](w)--(y);
\end{tikzpicture}
}
\subfloat{
\centering
\begin{tikzpicture}
\node[vertex](u) at (2,0){};\node at(2,-.4){$x$};
\node[vertex](v) at (4,0){};\node at(4,-.4){$y$};
\node[vertex](w) at (3,2){};\node at(3,2.5){$w$};
\node[vertex](x) at (2,4){};\node at(2,4.4){$u$};
\node[vertex](y) at (4,4){};\node at(4,4.4){$v$};
\path[red edge](u)--(v);
\path[red edge](u)--(w);
\path[red edge](v)--(w);
\path[blue edge](w)--(x);
\path[blue edge](w)--(y);

\node[svertex](11) at (5.5,0){};\node at (5.5,.4){$y_2^1$};
\node[svertex](1z) at (7,0){};\node at (7.6,.2){$y_2^{c-2}$};
\path[yellow edge](v)--(11);
\path[black dots](11)--(1z);
\node[svertex](11) at (5.5,1){};\node at (5.5,1.4){$y_1^1$};
\node[svertex](1z) at (7,1){};\node at (7.6,1.2){$y_1^{c-2}$};
\path[yellow edge](v)--(11);
\path[black dots](11)--(1z);
\node[svertex](11) at (5.5,2){};\node at (5.5,2.4){$w_4^1$};
\node[svertex](1z) at (7,2){};\node at (7.6,2.2){$w_4^{c-2}$};
\path[yellow edge](w)--(11);
\path[black dots](11)--(1z);
\node[svertex](11) at (5.5,3){};\node at (5.5,3.4){$w_3^1$};
\node[svertex](1z) at (7,3){};\node at (7.6,3.2){$w_3^{c-2}$};
\path[yellow edge](w)--(11);
\path[black dots](11)--(1z);
\node[svertex](11) at (5.5,4){};\node at (5.5,4.4){$v_1^1$};
\node[svertex](1z) at (7,4){};\node at (7,4.4){$v_1^{c-2}$};
\path[yellow edge](y)--(11);
\path[black dots](11)--(1z);

\node[svertex](11) at (.5,0){};\node at (.5,.4){$x_2^1$};
\node[svertex](1z) at (-1,0){};\node at (-1.3,.2){$x_2^{c-2}$};
\path[yellow edge](u)--(11);
\path[black dots](11)--(1z);
\node[svertex](11) at (0.5,1){};\node at (.5,1.4){$x_1^1$};
\node[svertex](1z) at (-1,1){};\node at (-1.3,1.2){$x_1^{c-2}$};
\path[yellow edge](u)--(11);
\path[black dots](11)--(1z);
\node[svertex](11) at (0.5,2){};\node at (.5,2.4){$w_2^1$};
\node[svertex](1z) at (-1,2){};\node at (-1.3,2.2){$w_2^{c-2}$};
\path[yellow edge](w)--(11);
\path[black dots](11)--(1z);
\node[svertex](11) at (0.5,3){};\node at (.5,3.4){$w_1^1$};
\node[svertex](1z) at (-1,3){};\node at (-1.3,3.2){$w_1^{c-2}$};
\path[yellow edge](w)--(11);
\path[black dots](11)--(1z);
\node[svertex](11) at (0.5,4){};\node at (.5,4.4){$u_1^1$};
\node[svertex](1z) at (-1,4){};\node at (-1,4.4){$u_1^{c-2}$};
\path[yellow edge](x)--(11);
\path[black dots](11)--(1z);
\end{tikzpicture}
}
\caption{(a) A bicolored graph $G$. (b) The graph $H$ with the new vertices and edges. The black dotted lines each represent a $(\ell-4)$-colored path containing $\ell-3$ edges that are neither blue, red, nor yellow.}
\label{fig:zcPz+1D_NP-hard}
\end{figure}
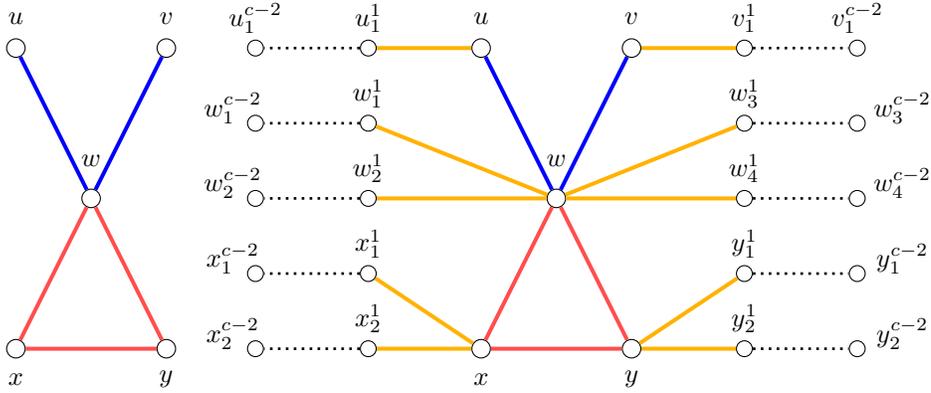

Second, we will show how to construct an equivalent solution $S''$ from $S'$ that only contains blue and red edges.
Recall that $E_{H}(\{v\},\{v_i^1\ |\ i\in[\deg_G(v)]\})$ denotes the set of all yellow edges that are incident to a vertex $v\in V(G)$.
Let $\{v,v_i^1\}\in S'$ be a yellow edge.
We have to consider two cases.\\
\textbf{Case 1:} $E_{H}(\{v\},\{v_i^1\ |\ i\in[\deg_G(v)]\})\subseteq S'$. 
Since $G$ is a $(\ell-1)$-colored graph, we know by construction that each $(\ell-1)$-colored $P_\ell$ that contains the vertex $v$, has to contain a red or blue edge that is incident to $v$. 
Thus, we can construct an equivalent solution by swapping the $\deg_G(v)$~many yellow edges that are incident to $v$ with the $\deg_G(v)$ blue or red edges that are incident to $v$.

We set~$S'':=(S'\setminus E_{H}(\{v\},\{v_i^1\ |\ i\in[\deg_G(v)]\}))\cup E_G(\{v\},N_G(v))$.
Since there are no red or blue edges in $H- S''$ that are incident to $v$, we conclude that $v$ is not part of a $(\ell-1)$-colored $P_\ell$ in~$H- S''$ and therefore no edge in~$E_{H}(\{v\},\{v_i^1\ |\ i\in[\deg_G(v)]\})$ is part of a $(\ell-1)$-colored $P_\ell$ in~$H- S''$.
Hence,~$S''$ is an equivalent solution with no yellow edges.\\
\textbf{Case 2:} $E_{H}(\{v\},\{v_i^1\ |\ i\in[\deg_G(v)]\})\nsubseteq S'$.
Then, for some $\alpha\in[\deg_G(v)]$ there is a yellow edge~${\{v,v_\alpha^1\}\notin S'}$.
Let $\{v,v_\beta^1\}\in S'$ such that $\beta\neq\alpha$.
Since $S'$ is a solution and $H[\{v,v_\alpha^1,\dots,v_\alpha^{c-2}\}]\cong H[\{v,v_\beta^1,\dots,v_\beta^{c-2}\}]$, we conclude that $S'\setminus \{v,v_\beta^1\}$ is a solution.
We set $S'':=(S'\setminus E_{H}(\{v\},\{v_i^1\ |\ i\in[\deg_G(v)]\}$ and get an equivalent solution with no yellow edges.

Thus, $S''$ only contains red and blue edges.
It remains to show that $S''$ is a solution for $(G,k)$. 
Assume towards a contradiction that there is an induced bicolored $P_3$ in $G- S''$. 
Without loss of generality, this implies that there is an edge set~${\{\{u,v\}\{v,w\}\}\subseteq (E(G)\setminus S'')}$~so that~$\{u,v\}$~is a blue edge, $\{v,w\}$ is a red edge and~${\{u,w\}\notin (E(G)\setminus S'')}$.
Since $S''$ only contains blue or red edges, we can conclude that there is a vertex set $U_\alpha:=\{u,u_\alpha^1,\dots,u_\alpha^{\ell-3}\}$ such that $H[U_\alpha]- S''$ is an induced $(\ell-3)$-colored $P_{\ell-2}$ with no blue or red edges. 
Hence, we conclude that~$H[U_\alpha\cup\{v,w\}]$ is an induced $(\ell-1)$-colored $P_\ell$.
This is a contradiction, since~$S''$ is an solution for $(H,k)$.
Hence, $G- S''$ is bicolored~$P_3$-free.
Thus, $S''$ is a solution for $(G,k)$.
\end{proof}

Since \textsc{2$P_3$D} is NP-hard~\cite{gruettemeier19} and \textsc{1$P_\ell$D} is NP-hard~\cite{Mallah88}, we can conclude the following from Theorem~\ref{thm:cPD} and Theorem~\ref{thm:lPD}.

\begin{corollary}
\textsc{$cP_\ell$D} is NP-hard for each $\ell\geq 3$ and each $c\in[\ell-1]$.
\end{corollary}

\subsection{$c$-colored $C_\ell$ Deletion}
\label{sec:ccd}
Next, we analyze the complexity of \textsc{$cC_\ell$D} which is known to be NP-hard for any $\ell\geq 3$ and $c=1$~\cite{Yannakakis81}. 
We show that \textsc{$cC_\ell$D} is NP-hard for any~${\ell\geq 3}$ and~$c\in[\ell]$.
The result is based on a reduction from the NP-hard problem \textsc{Vertex Cover (VC)}. In \textsc{VC} one is given a graph~$G$ and an integer~$k$ and asks if there is a subset~$S$ of vertices of~$G$ such that~$|S| \leq k$ and every edge of~$G$ has at least one endpoint in~$S$. 
But before we consider \textsc{$cC_\ell$D}, we establish an NP-hardness result for \textsc{VC} on $C_3$-free and tripartite graphs that we will use in our reduction.

\begin{lemma}
\textsc{VC} is NP-hard even if $G$ is $C_3$-free and tripartite.
\label{lem:vc}
\end{lemma}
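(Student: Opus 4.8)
The plan is to reduce from \textsc{VC} on general graphs, which is well known to be NP-hard, and to modify the input graph so that it becomes both triangle-free and tripartite while controlling the change in the optimal vertex cover size. The standard device for this is the \emph{$2$-subdivision}: given a \textsc{VC} instance $(G,k)$, I would let $G'$ be the $2$-subdivision of $G$, that is, replace each edge $\{u,v\}\in E(G)$ by a path $u - x_{uv} - y_{uv} - v$ on four vertices, introducing the two fresh internal vertices $x_{uv},y_{uv}$. The new instance is $(G', k + |E(G)|)$. First I would observe that $G'$ is automatically $C_3$-free: every cycle of $G'$ corresponds to a cycle of $G$ with each edge blown up into a path of length three, so the shortest cycle in $G'$ has length at least $9$; in particular there are no triangles (and the girth is large, which will also be convenient later). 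Second I would check tripartiteness: color each original vertex of $G$ with class $A$, each $x_{uv}$ with class $B$, and each $y_{uv}$ with class $C$; since every edge of $G'$ joins an $A$-vertex to a $B$-vertex, a $B$-vertex to a $C$-vertex, or a $C$-vertex to an $A$-vertex, the three classes are independent, so $G'$ is tripartite.

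The main work is the equivalence of the two instances, i.e.\ showing $\tau(G') = \tau(G) + |E(G)|$, where $\tau$ denotes the minimum vertex cover size. For the $(\le)$ direction: given a vertex cover $S$ of $G$ with $|S|\le k$, build $S'$ by keeping $S$ and, for each edge $\{u,v\}\in E(G)$, adding exactly one of the two subdivision vertices — namely $x_{uv}$ if $u\notin S$ (so that $v\in S$), and $y_{uv}$ otherwise. A short case check shows $S'$ covers all three edges of the path $u-x_{uv}-y_{uv}-v$, so $S'$ is a vertex cover of $G'$ of size $|S| + |E(G)| \le k + |E(G)|$. For the $(\ge)$ direction: given a vertex cover $S'$ of $G'$, note that each subdivision path $u-x_{uv}-y_{uv}-v$ contains the edge $\{x_{uv},y_{uv}\}$, so $S'$ must contain at least one of $x_{uv},y_{uv}$; these choices are disjoint across the $|E(G)|$ edges, accounting for at least $|E(G)|$ vertices of $S'$ lying on subdivision vertices. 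It then suffices to argue that $S := S' \cap V(G)$ is a vertex cover of $G$ of size at most $|S'| - |E(G)|$. If some edge $\{u,v\}\in E(G)$ had $u,v\notin S'$, then to cover the two edges $\{u,x_{uv}\}$ and $\{y_{uv},v\}$ the set $S'$ would have to contain both $x_{uv}$ and $y_{uv}$; the standard exchange argument then lets us replace one of them, say $x_{uv}$, by $u$ without increasing $|S'|$ and without losing coverage — so a minimum vertex cover of $G'$ can be assumed to use at most one subdivision vertex per edge, whence $|S' \cap V(G)| \le |S'| - |E(G)|$ and $S'\cap V(G)$ covers $G$.

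The step I expect to be the only real subtlety is this last exchange argument: one must be careful that after moving $x_{uv}$ to $u$, the edge $\{u,x_{uv}\}$ is still covered (it is, by $u$), that $\{x_{uv},y_{uv}\}$ is still covered (it is, by $y_{uv}$, which we kept), and that no \emph{other} edges incident to $x_{uv}$ exist — which holds because $x_{uv}$ has degree exactly $2$ in $G'$. So the clean way to phrase the reduction is: first normalize $S'$ to a minimum vertex cover that contains exactly one of $\{x_{uv},y_{uv}\}$ for every edge, then read off $S'\cap V(G)$. Everything else — polynomial-time computability of $G'$, the size bookkeeping, $C_3$-freeness and tripartiteness — is immediate, so the reduction is complete and \textsc{VC} remains NP-hard on $C_3$-free tripartite graphs.
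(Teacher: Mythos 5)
Your proposal is correct and follows essentially the same route as the paper: both rest on the observation that $2$-subdivision graphs are $C_3$-free and tripartite, and on the NP-hardness of covering/packing problems on such graphs. The only difference is that the paper simply cites Poljak's result that \textsc{Independent Set} is NP-hard on $2$-subdivision graphs and converts via the standard $\tau(G)=n-\alpha(G)$ duality, whereas you reprove the underlying fact $\tau(G')=\tau(G)+|E(G)|$ from scratch (correctly, including the degree-$2$ exchange argument).
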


\begin{proof}
\textsc{Independent Set (IS)} is NP-hard on 2-subdivision graphs~\cite{poljak74}.
Since $(G,k)$ is a yes-instance of \textsc{IS} if and only if $(G,n-k)$ is a yes-instance of \textsc{VC}~\cite{garey79}, we conclude that~\textsc{VC} is NP-hard on~2-subdivision graphs.
Since each 2-subdivision graph is~$C_3$-free and tripartite, \textsc{VC} is NP-hard on $C_3$-free and tripartite graphs.
\end{proof}

Now we can show the NP-hardness of \textsc{$cC_\ell$D} for $c=\ell$.
With this result we will be able to prove the NP-hardness for all $c\in[\ell]$.

\begin{lemma}
\textsc{$\ell C_\ell$D} is NP-hard for any $\ell\geq 3$ even if $G$ has girth $\ell$ and every~$C_\ell$ in $G$ is $\ell$-colored.
\label{thm:cd}
\end{lemma}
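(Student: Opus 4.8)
**Proof plan for Lemma (\textsc{$\ell C_\ell$D} is NP-hard for $\ell \geq 3$, girth $\ell$, every $C_\ell$ is $\ell$-colored).**

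The plan is to reduce from \textsc{VC} on $C_3$-free tripartite graphs, which is NP-hard by Lemma~\ref{lem:vc}. Given such an instance $(G,k)$ with a fixed tripartition $V(G) = A \sqcup B \sqcup C$, the idea is to replace each edge of $G$ by a cycle of length $\ell$ that is $\ell$-colored, attached to the two endpoints of the edge, in such a way that destroying all these $\ell$-colored $C_\ell$'s corresponds exactly to picking a vertex cover. The simplest mechanism: for each edge $\{u,v\}$ of $G$ we build a ``wire'' — a path of length $\ell-1$ from $u$ to $v$ through $\ell-2$ fresh private vertices — coloring the $\ell-1$ edges of this path with $\ell-1$ distinct colors, and then add one more edge directly between $u$ and $v$ with the one remaining color. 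This closes an $\ell$-colored $C_\ell$ on these $\ell$ vertices. (When $\ell=3$ the wire is a single internal vertex; when we would need the extra $\{u,v\}$ edge to coincide with an existing edge of $G$ we just keep it as a separate parallel edge, or equivalently route through one more internal vertex — I would check which is cleaner, but since $G$ may be a multigraph-free simple graph I'd likely use a fresh internal vertex for the ``return'' edge too and balance the colors; the key point is only that each gadget is a single $\ell$-colored $C_\ell$ sharing exactly the two ``port'' vertices $u,v$ with the rest of the construction and sharing nothing else.) Set the budget to $k$ unchanged.

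The correctness argument then runs as follows. First, I would argue that the only $C_\ell$'s in the constructed graph $H$ are the gadget cycles themselves: each gadget contributes exactly one $C_\ell$, the construction introduces no short cycles because $G$ is $C_3$-free and the gadgets are internally vertex-disjoint, and any cycle using vertices from two different gadgets must have length strictly greater than $\ell$ (it would have to traverse at least two full wires, or a wire plus a detour through $G$, both costing more than $\ell$ edges — here the tripartiteness and $C_3$-freeness of $G$ ensure there is no ``shortcut'' of length $\ell$ through the original graph). Hence $H$ has girth $\ell$ and every $C_\ell$ in $H$ is one of the $\ell$-colored gadget cycles, establishing the claimed structural restrictions. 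Second, for the equivalence: if $D \subseteq V(G)$ is a vertex cover of size $\leq k$, then for each edge $\{u,v\}$ at least one endpoint, say $u$, lies in $D$; we delete one gadget edge incident to $u$ — but to get a single solution set of size $\leq k$ we must be careful, since one vertex $u$ may cover many edges. The right move is: for $u \in D$, delete the one edge of each incident gadget that touches $u$ — this is potentially more than $k$ edges. So instead I would attach, for each vertex $u$ of $G$, a single ``cut edge'' $e_u$ that is shared by all gadgets incident to $u$: rework each gadget so that the port-vertex-$u$ side of every $C_\ell$ through $u$ uses a common pendant edge $e_u = \{u, u'\}$ with $u'$ a fresh vertex private to $u$. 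Then deleting $e_u$ simultaneously destroys every $C_\ell$ through $u$, and a vertex cover $D$ of size $\leq k$ yields the solution $\{e_u : u \in D\}$ of size $\leq k$. Conversely, any solution $S$ of size $\leq k$ can be assumed w.l.o.g. to consist only of such cut edges $e_u$ (any deleted edge inside a gadget can be replaced by one of the two relevant $e_u$'s without increasing the size and without creating new $C_\ell$'s, since every $C_\ell$ through that gadget passes through both $e_u$ and $e_v$), and then $\{u : e_u \in S\}$ is a vertex cover of size $\leq |S| \leq k$.

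The main obstacle I anticipate is exactly this bookkeeping around making the gadgets share a single deletable edge per vertex while still guaranteeing that (i) each gadget is genuinely an $\ell$-colored $C_\ell$ (the $\ell-1$ internal colors plus the closing color must all appear, and the shared edges $e_u$, $e_v$ must be assigned colors consistently across all gadgets incident to $u$, resp.\ $v$), and (ii) no \emph{new} $C_\ell$ is created by the sharing — in particular two gadgets on edges $\{u,v\}$ and $\{u,w\}$ now share the vertex $u'$ (or the edge $e_u$), and I must check this union does not close up into another $\ell$-cycle or shorter cycle. This is where the $C_3$-freeness and tripartiteness of $G$ are used: a cycle alternating between two gadgets would correspond to a short cycle in $G$. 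I would also need the ``exchange argument'' (any solution can be normalized to consist of cut edges only) to be airtight; the standard way is to observe that if $S$ contains an internal gadget edge for the $\{u,v\}$-gadget, then since that is the unique $C_\ell$ through those internal vertices, swapping it for $e_u$ strictly does not hurt. With the gadget designed so that each original vertex $u$ has its own private cut edge $e_u$ shared by all its incident gadgets, the two directions of the equivalence and the girth/coloring side conditions all follow by the routine checks sketched above.
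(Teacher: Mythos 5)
Your high-level route matches the paper's: reduce from \textsc{VC} on $C_3$-free tripartite graphs (Lemma~\ref{lem:vc}), turn each edge of $G$ into an $\ell$-colored $C_\ell$ gadget, and arrange that one edge deletion per vertex-cover vertex suffices. You also correctly identify the crux: a cover vertex $u$ may lie on many edge gadgets, so all $C_\ell$s through $u$ must share a single deletable edge $e_u$. But your realization of this --- a pendant edge $e_u=\{u,u'\}$ with $u'$ a fresh vertex \emph{private to} $u$ --- is not a working construction. A cycle through the gadget of $\{u,v\}$ that must contain both $e_u=\{u,u'\}$ and $e_v=\{v,v'\}$ with $u'\neq v'$ already uses four distinct vertices plus whatever closes it, so for $\ell=3$ the scheme is outright impossible, and for general $\ell$ you never commit to how the cycle closes (a direct edge $\{u,v\}$? an edge between $u'$ and $v'$?), which is exactly the point that has to be pinned down. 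If you close via the original edges $\{u,v\}$ of $G$, then any cycle of length exactly $\ell$ already present in $G$ survives in the constructed graph as a $C_\ell$ that is not a gadget cycle and need not be $\ell$-colored, so the claimed side conditions (girth $\ell$, every $C_\ell$ is $\ell$-colored) can fail; $C_3$-freeness alone does not rule this out. Finally, you flag but do not resolve the coloring-consistency problem: $e_u$ carries one fixed color appearing in every gadget through $u$, and for the gadget of $\{u,v\}$ to be $\ell$-colored the colors of $e_u$ and $e_v$ must differ whenever $\{u,v\}\in E(G)$; i.e., $u\mapsto$ (color of $e_u$) must be a proper coloring of $G$ with few colors. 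That is precisely what the tripartition supplies, yet you invoke tripartiteness only for the girth argument.

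The paper resolves all three difficulties with one device your sketch is missing: a single global apex vertex $\alpha$, adjacent to every $v\in V(G)$ by one edge of color $\varphi(v)$, where $\varphi$ is the proper $3$-coloring given by the tripartition. Identifying all your $u'$ into $\alpha$ closes every gadget cycle as $\alpha,u,(\text{wire}),v,\alpha$ --- no direct $\{u,v\}$ edges are needed for $\ell>3$, and $\ell=3$ works with no wire at all --- makes $\{u,\alpha\}$ the shared cut edge for every gadget through $u$, and coloring the wire with $[\ell]\setminus\{\varphi(u),\varphi(v)\}$ makes each gadget cycle $\ell$-colored. The normalization of an arbitrary solution to one consisting of edges incident with $\alpha$ then becomes easy because $\alpha$ lies on every $C_\ell$. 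Your argument would need to be rebuilt around such a closing mechanism before either direction of the equivalence or the girth/coloring side conditions can actually be verified.
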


\begin{proof}
We give a polynomial time reduction from the NP-hard \textsc{VC} problem on $C_3$-free and tripartite graphs (see Lemma \ref{lem:vc}).
Note that this reduction is very similar to the one given by Yannakakis~\cite{Yannakakis81} to prove the NP-hardness for~$c=1$.

Let $(H,k)$ be an instance of \textsc{VC} such that $H$ is tripartite and $C_3$-free.
Before we describe how to construct an equivalent instance $(G,k)$ of~\textsc{$\ell C_\ell$D}, we define two functions to color the vertices and edges of $H$.
Since $H$ is tripartite, there is a function~${\varphi:V(H)\rightarrow [3]}$ such that $\varphi(u)\neq\varphi(v)$ for each edge~${\{u,v\}\in E(H)}$.
Hence, there is exactly one ${\gamma\in[3]\setminus \{\varphi(u),\varphi(v)\}}$. Thus, $\psi:E(H)\rightarrow[3]$ with~$\psi(\{u,v\})=\gamma$ is a well defined function.

\textit{Construction}:
Now we show how to construct $(G,k)$.
If $\ell=3$, then we assign the color $\psi(\{u,v\})$ to each edge $\{u,v\}\in E(H)$. 
If $\ell>3$, then we subdivide each edge $\{u,v\}\in E(H)$ with vertices~$W^{uv}:=\{w_1^{uv},\dots,w_{\ell-3}^{uv}\}$.
We then color the edges as follows. 
We assign color $\psi(\{u,v\})$ to the edge $\{u,w_1^{uv}\}$ and color $\ell$ to the edge~$\{w_{\ell-3}^{uv},v\}$.
For $i\in\{1,\dots,\ell-4\}$, the edges $\{w_i^{uv},w_{i+1}^{uv}\}$ are assigned color $i+3$.
Hence, the induced subgraph $G[\{u,v\}\cup W^{uv}]$ is an $(\ell-2)$-colored $P_{\ell-1}$ with all colors except $\varphi(u)$ and~$\varphi(v)$.
Then, we add a vertex $\alpha$.
To complete the reduction we add an edge $\{v,\alpha\}$ with color $\varphi(v)$ for each $v\in V(H)$.
The budget $k$ remains the same.

Before we show the correctness of the reduction we prove the following Claim about the structure of~$G$.
\begin{Claim}
The girth of $G$ is $\ell$ and every $C_\ell$ in $G$ is $\ell$-colored.
 \end{Claim}
\begin{claimproof}
Obviously $(G,k)$ is a trivial yes-instance if the girth of $G$ is greater than~$\ell$.
Since $H$ is $C_3$-free and every edge in $E(H)$ corresponds to a induced $P_{\ell-1}$ in $G$, we conclude that the girth of $G[V\setminus\{\alpha\}]$ is greater than $\ell$.
Furthermore, we know by construction that for each edge $\{u,v\}\in E(H)$ the induced subgraph~${G[\{\alpha,u,v\}\cup W^{uv}]}$ is an $\ell$-colored $C_\ell$, since $u$ and $v$ are connected by an $(\ell-2)$-colored $P_{\ell-1}$ with all colors except $\varphi(u)$ and $\varphi(v)$, the edge $\{u,\alpha\}$ has color $\varphi(u)$ and the edge $\{v,\alpha\}$ has color $\varphi(v)$.
Hence, we conclude that the girth of $G$ is $\ell$ and every $C_\ell$ in $G$ is~$\ell$-colored. $\hfill \Diamond$
\end{claimproof}


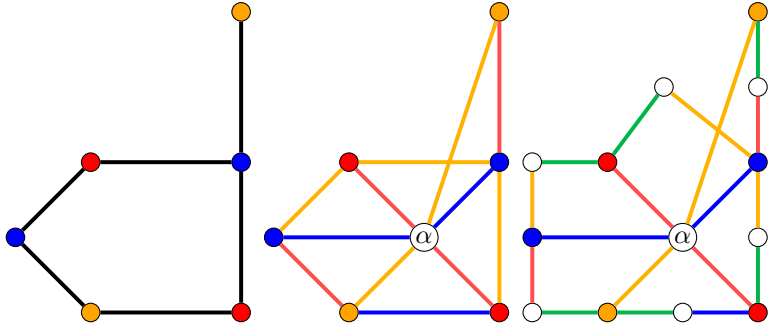
\begin{figure}[t!]
\centering
\subfloat{
\centering
\begin{tikzpicture}
\node[rvertex](u) at(0,2){};
\node[bvertex](v) at(2,2){};
\node[rvertex](w) at(2,0){};
\node[yvertex](x) at(0,0){};
\node[yvertex](y) at(2,4){};
\node[bvertex](z) at(-1,1){};
\path[black edge](v)--(u);
\path[black edge](v)--(w);
\path[black edge](x)--(z);
\path[black edge](x)--(w);
\path[black edge](v)--(y);
\path[black edge](z)--(u);

\end{tikzpicture}

}
\subfloat{
\centering
\begin{tikzpicture}
\node[rvertex](u) at(0,2){};
\node[bvertex](v) at(2,2){};
\node[rvertex](w) at(2,0){};
\node[yvertex](x) at(0,0){};
\node[yvertex](y) at(2,4){};
\node[bvertex](z) at(-1,1){};
\node[vertex](a)at(1,1){$\alpha$};
\path[yellow edge](v)--(u);
\path[yellow edge](v)--(w);
\path[red edge](x)--(z);
\path[yellow edge](u)--(z);
\path[blue edge](x)--(w);
\path[red edge](v)--(y);
\path[red edge](a)--(u);
\path[red edge](a)--(w);
\path[yellow edge](a)--(x);
\path[yellow edge](a)--(y);
\path[blue edge](a)--(v);
\path[blue edge](a)--(z);
\end{tikzpicture}

}
\subfloat{
\centering
\begin{tikzpicture}
\node[rvertex](u) at(0,2){};
\node[bvertex](v) at(2,2){};
\node[rvertex](w) at(2,0){};
\node[yvertex](x) at(0,0){};
\node[yvertex](y) at(2,4){};
\node[bvertex](z) at(-1,1){};
\node[vertex](a)at(1,1){$\alpha$};
\node[vertex](b)at(-1,0){};
\node[vertex](c)at(-1,2){};
\node[vertex](d)at(1,0){};
\node[vertex](e)at(2,1){};
\node[vertex](f)at(.75,3){};
\node[vertex](g)at(2,3){};
\path[yellow edge](v)--(f);
\path[green edge](f)--(u);
\path[yellow edge](v)--(e);
\path[green edge](e)--(w);
\path[red edge](b)--(z);
\path[green edge](b)--(x);
\path[yellow edge](c)--(z);
\path[green edge](c)--(u);
\path[blue edge](w)--(d);
\path[green edge](d)--(x);
\path[red edge](v)--(g);
\path[green edge](g)--(y);
\path[red edge](a)--(u);
\path[red edge](a)--(w);
\path[yellow edge](a)--(x);
\path[yellow edge](a)--(y);
\path[blue edge](a)--(v);
\path[blue edge](a)--(z);
\end{tikzpicture}
}
\caption{(a) A tripartite, $C_3$-free graph $H$. (b) The graph constructed on input (a) if $\ell=3$. (c) The graph constructed on input (a) if $\ell=4$.}
\end{figure}

\textit{Correctness}:
To show the correctness of the reduction, we prove that $(H,k)$ is a yes-instance of \textsc{VC} if and only if $(G,k)$ is a yes-instance of \textsc{$\ell C_\ell$D}.

($\Rightarrow$)
Let $V'\subseteq V(H)$ be a vertex cover of size at most $k$. 
Consider the edge-deletion set $S:=\{\{v,\alpha\}\ |\ v\in V'\}$.
Since $|V'|=|S|$, we know that $|S|\leq k$.
To show that $G'=G- S$ is $C_\ell$-free, it is sufficient to show that $\alpha$ is not part of a $C_\ell$ in $G'$.
Let $\{\alpha,v\}\in E(G')$. 
By construction, we know that $v\in V(H)$ and~${v\notin V'}$.
Assume towards a contradiction that $\{v,\alpha\}$ is part of a $C_\ell$.
Then, there is an edge~${\{u,\alpha\}\in E(G')}$ such that $u\neq v$ and $u$ is connected to $v$ by an induced $P_{\ell-1}$. 
This implies that $u\in V(H)$ and $u\notin V'$.
Since $u,v\in V(H)$ and $u$ is connected to $v$ by induced $P_{\ell-1}$ in $G$, we can conclude that $\{u,v\}\in E(H)$.
But that is a contradiction, since $V'$ is a vertex cover of $H$ and~$v,u\notin V'$.
Hence, $G'$ is $C_\ell$-free.

($\Leftarrow$)
Let $S$ be an edge-deletion set of size at most $k$ such that $G':=G- S$ is $\ell$-colored $C_\ell$-free.
Let~$E':=\{\{u,v\}\in E(G)\ |\ u,v\neq\alpha\}$ be the set of edges from $G$ that are not incident with $\alpha$.
We observe that each edge $e\in E'$ is part of at most one $C_\ell$ in $G$.
Since $\alpha$ is part of every $C_\ell$ in $G$, we know that there is a vertex~${\beta\in V(G)}$ such that $S'=(S\setminus\{e\})\cup\{\{\alpha,\beta\}\}$ is an equivalent solution.
Since~$\beta$ is adjacent to $\alpha$, we can conclude that $\beta\in V(H)$.

To finish the proof, we show that $V':=\{\beta\ |\ \{\alpha,\beta\}\in S'\}$ is a vertex cover for $H$.
Let $\{v_1,v_2\}\in E(H)$ be an edge from $H$. 
Assume towards a contradiction that $v_1,v_2\notin V'$.
This implies that $\{\alpha,v_1\},\{\alpha,v_2\}\notin S'$.
By construction, we know that $v_1$ and $v_2$ are connected by an induced $P_{\ell-1}$ that consists of edges from $E'$ and contains all colors except $\varphi(v_1)$ and~$\varphi(v_2)$.
Since $\{\alpha,v_1\}$ has color~$\varphi(v_1)$,~$\{\alpha,v_2\}$ has color~$\varphi(v_2)$, and $S'\cap E'=\emptyset$, we conclude that $v_1$ and $v_2$ are part of a $\ell$-colored~$C_\ell$ in $G'$. 
That is a contradiction, since $G'$ is $\ell$-colored $C_\ell$-free.
So $v_1\in V'$ or $v_2\in V'$. 
thus,~$V'$ is a vertex cover for~$H$.
\end{proof}

Now we will use this result to prove the NP-hardness of \textsc{$cC_\ell$D} for all $c\in[\ell-1]$.
\begin{lemma} 
\textsc{$cC_\ell$D} is NP-hard for each $c\in[\ell-1]$ even if $G$ has girth~$\ell$ and each $C_\ell$ in $G$ is $c$-colored.
\label{lem:cd}
\end{lemma}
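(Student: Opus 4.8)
The plan is to reduce from \textsc{$\ell C_\ell$D} restricted to the instances produced by Lemma~\ref{thm:cd}, i.e., instances $(G,k)$ in which $G$ has girth exactly $\ell$ and every $C_\ell$ in $G$ is $\ell$-colored. The construction is just a merging of colors: since $c\le \ell$, partition the color set $[\ell]$ of $G$ into $c$ non-empty groups — for concreteness, keep colors $1,\dots,c-1$ as singletons and merge all of $c,c+1,\dots,\ell$ into one group that is relabeled as color $c$. Let $G'$ be the resulting $c$-edge-colored graph on the same vertex set and edge set, and output $(G',k)$. This is clearly computable in polynomial time.

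First I would verify that $G'$ has the structure claimed in the lemma. The underlying (uncolored) graph is unchanged, so $G'$ still has girth $\ell$; in particular the only length-$\ell$ cycles of $G'$ are those of $G$. Every such cycle $C$ was $\ell$-colored in $G$, hence has $\ell$ edges carrying $\ell$ distinct colors, i.e., exactly one edge of each original color. After merging, $C$ therefore still carries at least one edge of each of the $c$ new colors and exactly $c$ distinct colors in total, so $C$ is $c$-colored in $G'$. Also each of the $c$ new color classes is non-empty, as it contains a non-empty original class, so $G'$ is a well-defined $c$-edge-colored graph with girth $\ell$ in which every $C_\ell$ is $c$-colored.

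Next I would check equivalence of the two instances. By the girth assumption, the set of induced $C_\ell$ subgraphs of $G$ equals the set of induced $C_\ell$ subgraphs of $G'$ (as uncolored subgraphs), and by the previous paragraph these are precisely the $\ell$-colored $C_\ell$s of $G$ and precisely the $c$-colored $C_\ell$s of $G'$. Deleting any edge set $S$ preserves this correspondence, so $G-S$ is $\ell$-colored-$C_\ell$-free if and only if $G'-S$ is $c$-colored-$C_\ell$-free. Hence $(G,k)$ is a yes-instance of \textsc{$\ell C_\ell$D} if and only if $(G',k)$ is a yes-instance of \textsc{$cC_\ell$D}, and NP-hardness of \textsc{$cC_\ell$D} for every $c\in[\ell-1]$ follows from Lemma~\ref{thm:cd}.

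The only point that needs a moment's care — and the closest thing to an obstacle — is arguing that merging colors neither turns some non-forbidden subgraph into a forbidden one nor lets a $C_\ell$ "lose" a color: there are no short cycles other than the length-$\ell$ ones by the girth-$\ell$ hypothesis, and an $\ell$-colored $C_\ell$ meets every color exactly once, so its image meets every merged color at least once. Everything else is routine bookkeeping.
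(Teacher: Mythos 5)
Your proposal is correct and follows essentially the same route as the paper: both reduce from \textsc{$\ell C_\ell$D} on the girth-$\ell$ instances of Lemma~\ref{thm:cd} by keeping colors $1,\dots,c-1$ and merging colors $c,\dots,\ell$ into a single class, then observing that the girth is unchanged, that every $\ell$-colored $C_\ell$ becomes exactly $c$-colored, and that the forbidden subgraphs are in one-to-one correspondence under edge deletions. Your extra remarks about non-emptiness of the merged classes and about cycles not ``losing'' a color are just more explicit versions of checks the paper makes implicitly.
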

\begin{proof}
We give a reduction from \textsc{$\ell C_\ell$D} on graphs where each $C_\ell$ is~$\ell$-colored.
Let $(H,k)$ be an instance of \textsc{$\ell C_\ell$D} where the girth of $H$ is $\ell$ and each $C_\ell$ in $H$ is $\ell$-colored.
To construct an equivalent instance~$(G,k)$ of~\textsc{$cC_\ell$D} we recolor the edges.
For each $\alpha\in[c-1]$ we set~${E_\alpha(G):=E_\alpha(H)}$. 
Next, we set~${E_c(G):=E_c(H)\cup\ldots\cup E_\ell(H)}$.
Note that the vertex set and the budget $k$ remains the same.
Since we do not add new edges, the girth remains the same, and since each $C_\ell$ in $H$ is $\ell$-colored, we know by construction that each $C_\ell$ in $G$ is~$c$-colored.
Since every $C_\ell$ in $H$ is $\ell$-colored, for each vertex set $V'\subseteq V(H)$ the induced subgraph $H[V']$ is an $\ell$-colored $C_\ell$ if and only if the induced subgraph~$G[V']$~is a $c$-colored $C_\ell$. 
Hence, $(H,k)$ is a yes-instance of \textsc{$\ell C_\ell$D} if and only if~$(G,k)$ is a yes-instance of \textsc{$cC_\ell$D}.
\end{proof}

If $G$ has girth~$\ell$, then every $c$-colored $C_\ell$ in~$G$ is an induced subgraph of~$G$.
Thus, we may conclude the following from Lemmas~\ref{thm:cd} and~\ref{lem:cd}.

\begin{theorem}
\textsc{$cC_\ell$D} is NP-hard for each $\ell\geq 3$ and each $c\in[\ell]$ even if~$G$ has girth~$\ell$ and each $C_\ell$ in $G$ is $c$-colored on strictly non-cascading graphs.
\end{theorem}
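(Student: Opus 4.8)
The plan is to obtain this statement simply by assembling Lemma~\ref{thm:cd}, Lemma~\ref{lem:cd}, and the girth remark stated immediately before the theorem. Recall that \textsc{$cC_\ell$D} is the special case of \textsc{$\mathcal{F}$-Deletion} in which~$\mathcal{F}$ is the set of all $c$-colored~$C_\ell$. For $c=\ell$, Lemma~\ref{thm:cd} already gives NP-hardness of \textsc{$\ell C_\ell$D} restricted to instances $(G,k)$ in which $G$ has girth~$\ell$ and every $C_\ell$ in $G$ is $\ell$-colored. For $c\in[\ell-1]$, Lemma~\ref{lem:cd} gives NP-hardness of \textsc{$cC_\ell$D} restricted to instances $(G,k)$ in which $G$ has girth~$\ell$ and every $C_\ell$ in $G$ is $c$-colored. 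Taken together these two statements cover all $c\in[\ell]$, so the only thing left to check is that every such instance is strictly non-cascading in the sense of the definition.

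For this I would argue the elementary fact that in a graph of girth~$\ell$ every subgraph isomorphic to a~$C_\ell$ is automatically an \emph{induced} subgraph. Indeed, if the $\ell$ vertices $v_1,\dots,v_\ell$ of such a cycle carried an additional chord $\{v_i,v_j\}$, then this chord together with one of the two arcs of the cycle between $v_i$ and $v_j$ would form a cycle of length at most $\lfloor \ell/2\rfloor+1<\ell$ (and for $\ell=3$ no chord is even possible), contradicting the girth assumption. Consequently the graph~$G$ produced by either reduction contains no non-induced $c$-colored~$C_\ell$, which is exactly the statement that $G$ is strictly non-cascading with respect to the family of $c$-colored~$C_\ell$. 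Combining this with the hardness furnished by Lemmas~\ref{thm:cd} and~\ref{lem:cd} then yields the theorem.

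I do not expect a genuine obstacle here, since all the technical work already resides in the two cited lemmas; the proof is essentially a corollary. The only points deserving a line of care are (i) confirming that the instances built in the proofs of Lemmas~\ref{thm:cd} and~\ref{lem:cd} do satisfy the promised girth bound — which is explicitly part of those lemma statements — and (ii) noting that the extra promise ``each $C_\ell$ in $G$ is $c$-colored'' only further restricts the input and is therefore compatible with the problem \textsc{$cC_\ell$D} as stated.
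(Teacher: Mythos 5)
Your proposal is correct and matches the paper's argument exactly: the paper derives this theorem directly from Lemmas~\ref{thm:cd} and~\ref{lem:cd} together with the observation that girth~$\ell$ forces every $c$-colored $C_\ell$ to be induced, hence the constructed graphs are strictly non-cascading. Your chord-based justification of that girth observation is a correct (and slightly more explicit) version of what the paper leaves as a one-line remark.
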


\section{Parameterized Complexity}
\label{sec:fpt}
Motivated by the NP-hardness of \textsc{$cP_\ell$D} and \textsc{$cC_\ell$D}, we  now study the parameterized complexity of these problems. We first show that both problems admit an FPT-algorithm for a new parameter which we call colored neighborhood diversity. Then, we outline the limits of parameterization by the  solution size~$k$.

\subsection{Parameterization by Colored Neighborhood Diversity}
We extend the notion of the well-known parameter neighborhood diversity to a similar parameter for edge-colored graphs. In uncolored graphs, two vertices~$u$ and~$v$ belong to the same neighborhood class if~$N[u]=N[v]$ or~$N(u)=N(v)$. This defines an equivalence relation over the vertex set of the graph. The \emph{neighborhood diversity} is then defined as the number of equivalence classes induced by this relation~\cite{L12}. 
We show that all problems~\textsc{$cP_\ell$D} with~$\ell \geq 3$ and~\textsc{$cC_\ell$D} with~$\ell \geq 5$ are fixed-parameter tractable for a parameter we call \emph{colored neighborhood diversity}.

\begin{definition} \label{Def:ColNeighDiv}
Let~$G$ be a $c$-colored graph, and let~$u,v$ be vertices of~$G$. We say that~$u$ and~$v$ belong to the same \emph{colored neighborhood class} if either
\begin{enumerate}
\item[a)] $N^i[u]=N^i[v]$ for some color~$i$, and~$N^j(u) = N^j(u)$ for every other color~$j \in [c] \setminus \{i\}$, or
\item[b)] $N^j(u) = N^j(v)$ for every color~$j \in [c]$.
\end{enumerate}
If~$u$ and~$v$ belong to the same colored neighborhood class we write~$u \sim v$.
\end{definition}

We define the~\emph{colored neighborhood diversity}~$\gamma:= \gamma(G)$ as the number of equivalence classes induced by~$\sim$. 
To see that this definition is sound we first show that~$\sim$ is in fact an equivalence relation.
\begin{proposition}

Let~$G$ be a~$c$-colored graph. Then, $\sim$ is an equivalence relation on the vertices of~$G$.
\end{proposition}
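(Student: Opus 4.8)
The goal is to verify that $\sim$ is reflexive, symmetric, and transitive. Reflexivity and symmetry are immediate from the definition (clause~b) with $j$ ranging over all colors handles reflexivity, and both clauses are symmetric in $u$ and $v$). The only real content is transitivity: given $u \sim v$ and $v \sim w$, we must show $u \sim w$. The plan is to split into cases according to which clause of Definition~\ref{Def:ColNeighDiv} witnesses each of the two given relations.

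\textbf{Case analysis for transitivity.} If both $u \sim v$ and $v \sim w$ are witnessed by clause~b), then $N^j(u) = N^j(v) = N^j(w)$ for every color $j$, so $u \sim w$ via clause~b). If one is witnessed by b) and the other by a), say $N^j(u) = N^j(v)$ for all $j$ while $N^i[v] = N^i[w]$ for some color $i$ and $N^j(v) = N^j(w)$ for all $j \neq i$: then $N^j(u) = N^j(w)$ for all $j \neq i$, and $N^i[u] = N^i(u) \cup \{u\}$; since $N^i(u) = N^i(v)$, either $v \in N^i(u)$ or not. The subtle point is reconciling open and closed neighborhoods. If $N^i(v) = N^i(w)$ happens to also hold (the closed-equality is not strict), one falls back to clause~b); otherwise one argues that $u$ and $w$ must be adjacent in color $i$ and that $N^i[u] = N^i[w]$. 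The genuinely delicate case is when \emph{both} relations are witnessed by clause~a): $N^{i_1}[u] = N^{i_1}[v]$ with $N^j(u) = N^j(v)$ for $j \neq i_1$, and $N^{i_2}[v] = N^{i_2}[w]$ with $N^j(v) = N^j(w)$ for $j \neq i_2$. Here I would subdivide further according to whether $i_1 = i_2$ or $i_1 \neq i_2$.

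\textbf{The main obstacle.} The hard subcase is $i_1 \neq i_2$ with both relations coming from clause~a). Then for color $i_1$ we know $N^{i_1}[u] = N^{i_1}[v]$ but only $N^{i_1}(v) = N^{i_1}(w)$, and symmetrically for $i_2$. One must show that $u$, $v$, $w$ are pairwise adjacent in \emph{both} colors $i_1$ and $i_2$, so that all the open/closed distinctions collapse: the standard trick is that $N^{i_1}[u] = N^{i_1}[v]$ forces $u \in N^{i_1}[v] = N^{i_1}[u]$ trivially, but more usefully $v \in N^{i_1}[u]$ so either $v \in N^{i_1}(u)$ (adjacency) or $v = u$; ruling out $v=u$ (else nothing to prove), we get the $i_1$-edge $\{u,v\}$, hence $u \in N^{i_1}(v) = N^{i_1}(w)$, hence the $i_1$-edge $\{u,w\}$ and similarly $\{v,w\}$. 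Doing the same with $i_2$ shows the triangle is present in color $i_2$ as well, and then $N^{i_1}[v] = N^{i_1}(v)\cup\{v\} \subseteq N^{i_1}(w)\cup\{w\}\cup\{v\}$; using the $i_1$-edges just derived one checks $N^{i_1}[u] = N^{i_1}[w]$, and for every remaining color $j \notin \{i_1,i_2\}$ we directly have $N^j(u) = N^j(v) = N^j(w)$, while for $j = i_2$ the closed-neighborhood equality between $v$ and $w$ plus $N^{i_2}(u) = N^{i_2}(v)$ combined with the established $i_2$-triangle again upgrades to $N^{i_2}[u] = N^{i_2}[w]$ — but since we can only afford \emph{one} color with closed-equality in clause~a), I would finally observe that the $i_1$-triangle makes $N^{i_1}(u) = N^{i_1}(w)$ hold as an open equality too, so $u \sim w$ is witnessed by clause~a) with distinguished color $i_2$ (or by clause~b) if even the $i_2$-equality degrades to open). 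Once these adjacency-forcing observations are in place, every subcase reduces to routine set manipulation, so I would state the adjacency lemma explicitly first and then dispatch the cases.
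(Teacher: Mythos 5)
Your key observation---that $N^i[x]=N^i[y]$ for distinct $x,y$ forces $\{x,y\}\in E_i$---is exactly the engine of the paper's proof, and your subcases b)\,+\,b) and a)\,+\,a) with $i_1=i_2$ are fine. The gap is in the two subcases you single out as delicate: they are in fact \emph{vacuous}, and the arguments you sketch for them derive impossible configurations without noticing the contradiction. In the subcase where both relations are witnessed by clause~a) with $i_1\neq i_2$, your own adjacency-forcing argument yields $\{v,w\}\in E_{i_2}$ (from $N^{i_2}[v]=N^{i_2}[w]$) and also $\{v,w\}\in E_{i_1}$ (from $w\in N^{i_1}(u)$ together with $N^{i_1}[u]=N^{i_1}[v]$); since $E_{i_1}\cap E_{i_2}=\emptyset$ by definition of a $c$-colored graph, no edge carries two colors, so this subcase cannot occur. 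Your statement that ``the triangle is present in color $i_2$ as well'' is the contradiction itself, not a step toward $u\sim w$. The mixed b)/a) subcase is empty for the same reason: from $N^i(u)=N^i(v)$ and $\{v,w\}\in E_i$ one gets $\{u,w\}\in E_i$, hence $u\in N^i[w]=N^i[v]$, hence $u\in N^i(v)=N^i(u)$, which is absurd. (Also, your fallback ``if $N^i(v)=N^i(w)$ happens to also hold'' never triggers: open and closed neighborhood equality cannot both hold for distinct vertices.) As written, the proof asserts conclusions in these subcases that are not justified; the correct resolution is that they do not arise.

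The paper avoids these phantom cases by splitting on adjacency rather than on the witnessing clause: either $\{u,v\}\notin E$ and $\{v,w\}\notin E$ (then both witnesses must be clause~b) and the open equalities compose directly), or $\{u,v\}\in E_i$ for some color $i$ (then $u$, $v$, $w$ form a triangle in the single color $i$ and all closed and open equalities line up for that one color). If you prefer to keep your decomposition, the repair is short: note that clause~a) between distinct vertices pins the edge $\{x,y\}$ to the distinguished color, so the only consistent combinations are b)\,+\,b) and a)\,+\,a) with $i_1=i_2$, and dismiss the remaining subcases as empty rather than arguing inside them.
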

\begin{proof}
By definition,~$\sim$ is reflexive and symmetric. It remains to show transitivity. Let~$u \sim v$ and~$v \sim w$. Consider the following cases.

\textbf{Case 1: $\{u,v\}\notin E$ and $\{v,w\} \not\in E$.} Then, $u \sim v$ and~$v \sim w$ implies~$N^i(u)=N^i(v)=N^i(w)$ for each color~$i$. Thus, we have~$u \sim w$.

\textbf{Case 2: $\{u,v\} \in E_i$ for some color~$i$.} Then,~$v \sim w$ implies~$\{u,w\} \in E_i$ and~$u \sim v$ implies~$\{v,w\} \in E_i$. Consequently, it holds that~$N^i[u]=N^i[v]=N^i[w]$ and~$N^j(u)=N^j(v)=N^j(w)$ for all~$j \in[c] \setminus \{i\}$. Thus, we have~$u \sim w$. 
\end{proof}

We refer to the equivalence classes of~$\sim$ as \emph{colored neighborhood classes}. Observe that each colored neighborhood class~$K$ is either an independent set or a clique where all edges of~$E(K)$ have the same color. Moreover, observe that the neighborhood~$N(K)$ can be partitioned into non-empty vertex sets~$K_1', K_2' , \dots, K_t'$ such that each~$K_i'$ forms a colored neighborhood class in~$G$. Given~$K$, we let~$\mathcal{N}(K):=\{K_1', \dots, K_t'\}$ denote the set of these colored neighborhood classes.

Throughout this section, we call a graph~$F$~\emph{color diverse} \todog{Name "color diverse"?} if it holds that~$|K|=1$ for every colored neighborhood class~$K$ of~$F$. Let~$\mathcal{F}$ be a set of~$c$-colored graphs such that every~$F \in \mathcal{F}$ is color diverse and it can be checked in polynomial time whether a graph contains some~$F \in \mathcal{F}$ as induced subgraph. We show that, in this case,~\textsc{$\mathcal{F}$-Deletion} is fixed-parameter tractable when parameterized by~$\gamma$. For~$c \geq 2$, this implies fixed-parameter tractability of~\textsc{$cP_\ell$D} with~$\ell \geq 3$ and~\textsc{$cC_\ell$D} with~$\ell \geq 5$ since these problems can be modeled as a special case of~$\mathcal{F}$-Deletion since~$c$-colored~$P_\ell$s with~$\ell \geq 3$ and~$c$-colored~$C_\ell$s with~$\ell \geq 5$ are color diverse. The following definition is important for our fixed-parameter algorithm.

\begin{definition}
Let~$G=(V,E)$ be a~$c$-colored graph, let~$K$ be a colored neighborhood class, and let~$S \subseteq E$ be an edge-deletion set. Then,~$S$ is called~\emph{consistent with~$K$} if for every vertex~$v \in N(K)$ either~$E(\{v\},K) \subseteq S$ or~$E(\{v\},K) \cap S = \emptyset$.
\end{definition}

Intuitively, a vertex deletion set~$S$ is consistent with a colored neighborhood class all its vertices behave in the same way with respect to~$S$. 
The fixed-parameter algorithm exploits that there is always a solution~$S$ that is consistent with every colored neighborhood class. 

\begin{lemma} \label{Proposition: Equal Choice Neighborhood Class}
Let~$\mathcal{F}$ be a set of~$c$-colored graphs such that each colored neighborhood class of every~$F \in \mathcal{F}$ has size one. Let~$G$ be a~colored graph and let~$K$ be a colored neighborhood class of~$G$. Moreover, let~$S$ be an edge-deletion set such that~$G-S$ has no induced~$F\in \mathcal{F}$. Then, there exists an edge-deletion set~$S'$ with~$|S'|\le |S|$ such that
\begin{enumerate}
\item[a)] $S'$ is consistent with~$K$ 
\item[b)] $G-S'$ is~$F$-free for every~$F \in \mathcal{F}$,
\item[c)] for each~$e \not \in E(K) \cup E(K, N(K))$, we have~$e\in S'$ if and only if~$e \in S$.
\item[d)] if~$S$ is consistent with a class~$K' \in \mathcal{N}(K)$, then~$S'$ is consistent with~$K'$.
\end{enumerate}
\end{lemma}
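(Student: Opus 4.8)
The plan is to construct $S'$ from $S$ by a local "rewiring" operation inside the class $K$ and its neighborhood. The key observation is that $K$, being a colored neighborhood class, is either an independent set or a monochromatic clique, and that all vertices of $K$ have identical neighborhoods outside $K$ (identical open neighborhoods in every color, and if $K$ is a clique, identical closed neighborhoods in the clique's color). So the vertices of $K$ are "twins" up to the internal structure of $K$. The idea is: for each external vertex $v \in N(K)$, count how many of the edges $E(\{v\},K)$ lie in $S$; if this count is positive, we will put \emph{all} of $E(\{v\},K)$ into $S'$; if it is zero, we put \emph{none} of them. This immediately gives consistency (property a)). To keep $|S'| \le |S|$ we must be careful: naively this could increase the size, so instead we should think of it as follows. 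Fix one "representative" vertex $v^\ast \in K$. For each $v \in N(K)$, if $\{v,v^\ast\}\in S$ (or more precisely if $E(\{v\},K)\cap S \ne \emptyset$), delete all of $E(\{v\},K)$; otherwise delete none. Also, handle the internal edges $E(K)$ uniformly: either keep them all or delete them all, mirroring whether $E(K)\cap S$ is empty. One then argues that this does not increase the total count, and then verifies that $G-S'$ is still $\mathcal F$-free.

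\textbf{The steps, in order.} First I would set up notation: let $A = \{v\in N(K): E(\{v\},K)\cap S \ne\emptyset\}$, and for each $v\in A$ pick (via the pigeonhole/averaging idea) whether it is "cheaper" to delete all edges from $v$ to $K$ or to leave them. Actually the cleaner route, which I expect the authors take, is: since all vertices of $K$ are twins outside $K$, replace $S$'s behavior on $E(K)\cup E(K,N(K))$ by the behavior it exhibits on a \emph{single cheapest vertex} $w\in K$ — that is, let $S_w \subseteq E$ be the edges of $S$ incident to $w$ inside $K\cup\{w\}$'s "star", copy that pattern to every vertex of $K$, and take $S'$ to be $S$ with all of $E(K)\cup E(K,N(K))$ removed and then this uniform pattern added back. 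Choosing $w$ to be the vertex of $K$ whose incident edges in $E(K)\cup E(K,N(K))$ meet $S$ in the fewest elements guarantees $|S'|\le|S|$ — this is the standard twin-reduction argument. Property c) is then immediate from the construction (we only touched $E(K)\cup E(K,N(K))$). Property a) (consistency with $K$) holds because every vertex of $K$ now has the identical pattern, so for $v\in N(K)$ the set $E(\{v\},K)$ is either fully in $S'$ or disjoint from it. Property d) needs a short check: if $S$ is consistent with some $K'\in\mathcal N(K)$, i.e.\ treats all of $K'$ uniformly w.r.t.\ each of its external neighbors — in particular w.r.t.\ $K$ — then the rewiring, which copies a single representative's pattern across $K$, still treats $K'$ uniformly, because $K'$ sat on one side of the $K$–$K'$ edge set as a block and $S$'s value on that block was already constant.

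\textbf{The main obstacle — property b), that $G-S'$ stays $\mathcal F$-free.} This is where the hypothesis that every $F\in\mathcal F$ is color diverse (each colored neighborhood class of $F$ has size one) is essential. Suppose for contradiction $G-S'$ contains an induced copy of some $F\in\mathcal F$ on vertex set $Z$. Since $F$ is color diverse, no two vertices of $Z$ can be twins in the induced subgraph $G[Z]-S'$; in particular $|Z\cap K|\le 1$, for if $u_1,u_2\in Z\cap K$ then — because $G-S'$ treats $K$ consistently and $K$ is a (mono-colored clique or independent) twin class — $u_1$ and $u_2$ would be twins in $(G-S')[Z]$, contradicting color diversity of $F$. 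So at most one vertex of the forbidden copy lies in $K$. If $Z\cap K=\emptyset$, then since $S$ and $S'$ agree outside $E(K)\cup E(K,N(K))$ and $Z$ avoids $K$, we get $(G-S)[Z]=(G-S')[Z]$, so the copy already existed in $G-S$, contradiction. If $Z\cap K=\{u\}$ for a single $u$, replace $u$ by the representative vertex $w$; since $w$ and $u$ have exactly the same $S'$-adjacencies to $N(K)\supseteq Z\setminus\{u\}$ (that is the whole point of the uniform rewiring) \emph{and} the same original adjacencies, and since $w$'s incident-edge pattern in $S'$ equals its pattern in $S$, the set $(Z\setminus\{u\})\cup\{w\}$ induces in $G-S$ a copy of $F$ isomorphic to the one on $Z$ — contradiction again. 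The care needed here is to make sure the edge $\{u,w'\}$ for $w'\in Z\setminus\{u\}$ in $G-S'$ has the same presence/color as $\{w,w'\}$ in $G-S$; this follows because $w'\in N(K)$, $u,w\in K$ are twins in $G$, and $S'$ restricted to $E(K,N(K))$ was defined by copying $w$'s $S$-pattern to all of $K$. I would write out exactly this substitution argument as the heart of the proof, with the twin property of $K$ doing all the work, and flag the color-diversity hypothesis as the reason $|Z\cap K|\le 1$.
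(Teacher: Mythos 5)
Your proposal matches the paper's proof essentially step for step: choose a vertex $v\in K$ minimizing the number of deleted edges to $N(K)$, copy its deletion pattern to every other vertex of $K$, read off a), c), d) from the construction, and prove b) by the twin-substitution argument in which color diversity forces $|Z\cap K|\le 1$ and the single vertex of $Z\cap K$ is swapped for the representative $v$. The one point to fix is the treatment of $E(K)$: your first variant (``delete all of $E(K)$ whenever $S$ meets it'') could increase $|S'|$ when $K$ is a large clique and $S$ deletes only one internal edge, whereas the paper simply removes \emph{all} internal edges of $K$ from $S'$ (i.e.\ never deletes them from the graph), which is safe because $K$ then remains a colored neighborhood class and only shrinks the solution --- your ``cleaner route'' amounts to this.
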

\begin{proof}
We transform~$S$ into~$S'$: For each~$u \in K$ we define~$S_u:=\{w \in N(K) \mid \{u,w\} \in S\}$. Let~$v \in K$ such that~$|S_v| = \min_{u \in K} |S_u|$. The new edge-deletion set~$S'$ contains the same edges as~$S$ but for every~$u \in K \setminus \{v\}$ we do the following: for every~$w \in S_v$ we add the edge~$\{u,w\}$ and for every~$w \in N(K) \setminus S_v$ we remove the edge~$\{u,w\}$. Moreover, if~$K$ is a clique, we remove all edges in~$E_G(K)$.

Since~$|S_v|$ is minimal, it follows that~$|S'| \leq |S|$. Moreover, Statement~$c)$ holds by the construction of~$S'$. For Statement~$a)$, let~$w \in N(K)$. If~$\{v,w\} \in S$ it follows that~$E(\{w\},K) \subseteq S'$. Otherwise, if~$\{v,w\} \not \in S$ it follows that~$E(\{w\},K) \cap S' = \emptyset$. Thus, Statement~$a)$~holds.
We next show Statement~$b)$. That is, we show that~$G-S'$ is~$F$-free for every~$F \in \mathcal{F}$. To this end, observe that~$K$ is a colored neighborhood class in~$G-S'$ by the construction of~$S'$. 
Let~$e := \{u,w\} \in E(K) \cup E(K, N(K))$. Assume towards a contradiction that there exists a vertex set~$Z$ containing~$u$ and~$w$ such that~$(G-S')[Z]$ is an induced~$F \in \mathcal{F}$. We show that the following two cases are contradictory.

\textbf{Case 1: $|Z \cap K| \geq 2$.} Then, since~$K$ is a colored neighborhood class in~$G-S'$, this contradicts the fact that~$F$ has only colored neighborhood classes of size one. 

\textbf{Case 2:~$|Z \cap K| = 1$.} Without loss of generality let~$Z \cap K= \{u\}$. We then define~$Z':= Z \setminus \{u\} \cup \{v\}$. Observe that in~$G-S$, vertex~$v$ has the same colored neighbors in~$N(K)$ as~$u$ in~$G-S'$. We conclude that~$G-S[Z \setminus \{u\} \cup \{v\}]$ is an induced~$F$ which contradicts the fact that~$G-S$ is~$\mathcal{F}$-free.

It remains to prove~$d)$. Let~$K' \in  \mathcal{N}(K)$ such that~$S$ is consistent with~$K'$. Statement~$c)$ implies that~$E(\{w\}, K') \subseteq S'$ or $E(\{w\}, K') \cap S'= \emptyset$ for every~$w \in N(K')\setminus K$. So, let~$u \in K$. Since~$S$ is consistent with~$K'$ it holds that~$E(\{v\}, K') \subseteq S$ or $E(\{v\}, K') \cap S= \emptyset$. Then, since in~$G-S'$ every~$u \in K$ has the same colored neighbors in~$N(K)$ as~$v$ has in~$G-S$, it follows that~$S'$ is consistent with~$K'$.
\end{proof}

In the proof of Lemma~\ref{Proposition: Equal Choice Neighborhood Class} we exploit that the neighborhood classes in every~$F \in \mathcal{F}$ have size at most one. In fact, this condition
is necessary since the lemma does not hold for~\textsc{$2C_4$D}~as we can see in the example in Fig.~\ref{Figure: C4Del}.
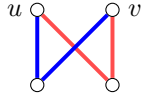
\begin{figure}[t]
\begin{center}
\begin{tikzpicture}
\tikzstyle{knoten}=[circle,fill=white,draw=black,minimum size=5pt,inner sep=0pt]
\tikzstyle{bez}=[inner sep=0pt]

\node[knoten] (u)  at (0,0) {};
\node[knoten] (v)  at (1,0) {};
\node[bez]   at (-0.3,0) {$u$};
\node[bez]   at (1.3,0) {$v$};
\node[knoten] (w)  at (0,-1) {};
\node[knoten] (x)  at (1,-1) {};

\draw[blue edge]  (u) to (w);
\draw[red edge]  (u) to (x);

\draw[blue edge]  (v) to (w);
\draw[red edge]  (v) to (x);

\end{tikzpicture}
\end{center} 
\caption{An instance of~\textsc{$2C_4$D} that has a solution of size~$1$. This single edge deletion is either incident with~$u$ or incident with~$v$.}
\label{Figure: C4Del} 
\end{figure}

 By Lemma~\ref{Proposition: Equal Choice Neighborhood Class}, we may assume that, for such~$\mathcal{F}$ a solution~$S$ of an instance of \textsc{$\mathcal{F}$-Deletion} is consistent with every colored neighborhood class. Then, for every pair~$K_1$, $K_2$ of colored neighborhood classes either~$E(K_1,K_2) \subseteq S$ or~$E(K_1,K_2) \cap S = \emptyset$. This assumption can be used to obtain the following.

\begin{theorem}
Let~$\mathcal{F}$ be a set of~$c$-colored graphs such that every~$F \in \mathcal{F}$ is~color diverse and we can check in polynomial time whether a graph contains some~$F\in \mathcal{F}$ as induced subgraph. Then, \textsc{$\mathcal{F}$-Deletion} can be solved in~$2^{\gamma^2} \cdot n^{\Oh(1)}$ time, where~$\gamma$ denotes the colored neighborhood diversity.
\end{theorem}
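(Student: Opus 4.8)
The plan is to exploit Lemma~\ref{Proposition: Equal Choice Neighborhood Class} to reduce the search for a solution to a search over ``types'' of solutions indexed by which pairs of colored neighborhood classes have their connecting edges deleted. First I would compute the colored neighborhood classes~$K_1, \dots, K_\gamma$ of~$G$; this can be done in polynomial time since~$\sim$ is decidable in polynomial time. By iterating Lemma~\ref{Proposition: Equal Choice Neighborhood Class} over all classes (using Statement~$d)$ to ensure that consistency established for earlier classes is preserved, and Statement~$c)$ to ensure that edges outside the relevant class-pairs are untouched), we obtain that there is an optimal solution~$S$ that is simultaneously consistent with every colored neighborhood class. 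For such~$S$, for every pair~$K_a, K_b$ we have either~$E_G(K_a,K_b) \subseteq S$ or~$E_G(K_a,K_b) \cap S = \emptyset$, and similarly~$E_G(K_a)\subseteq S$ or~$E_G(K_a)\cap S=\emptyset$ within a clique class. Thus~$S$ is completely determined by a symmetric Boolean choice on the~$\binom{\gamma}{2}+\gamma \le \gamma^2$ pairs~$(a,b)$ with~$a\le b$.

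The algorithm then simply branches over all~$2^{\gamma^2}$ such choices. For each choice~$\chi$, we form the candidate deletion set~$S_\chi := \bigcup_{\chi(a,b)=1} E_G(K_a,K_b)$ (including the diagonal terms~$E_G(K_a)$ for clique classes), compute~$|S_\chi|$, check~$|S_\chi|\le k$, and then check in polynomial time — using the hypothesis that membership of some~$F \in \mathcal{F}$ as an induced subgraph is polynomial-time decidable — whether~$G-S_\chi$ is~$F$-free for every~$F\in\mathcal{F}$. We accept if some choice passes both tests. Correctness in one direction is immediate: every~$S_\chi$ that passes is a genuine solution of size at most~$k$. For the other direction, if~$(G,k)$ is a yes-instance, take an optimal solution and apply the iterated version of Lemma~\ref{Proposition: Equal Choice Neighborhood Class} to get a solution~$S$ of size at most~$k$ consistent with every class; then~$S = S_\chi$ for the choice~$\chi$ reading off which class-pairs are fully deleted, so that choice is examined and accepted. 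The running time is~$2^{\gamma^2}\cdot n^{\Oh(1)}$ as claimed, since each of the~$2^{\gamma^2}$ iterations does polynomial work.

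The step I expect to require the most care is the iterated application of Lemma~\ref{Proposition: Equal Choice Neighborhood Class}: I must argue that processing the classes one at a time does not destroy consistency achieved in earlier rounds. This is exactly what Statements~$c)$ and~$d)$ of the lemma are for — Statement~$c)$ guarantees that processing~$K$ leaves all edges outside~$E(K)\cup E(K,N(K))$ alone, and Statement~$d)$ guarantees that if the current set is already consistent with some~$K' \in \mathcal{N}(K)$, the new set still is. A subtle point is that consistency with a class~$K'$ could in principle be affected by processing a class~$K$ only when~$K \in \mathcal{N}(K')$, equivalently~$K' \in \mathcal{N}(K)$; in all other cases~$E(K,N(K))$ is disjoint from~$E(K', N(K'))$ and Statement~$c)$ alone suffices. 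So the induction invariant ``consistent with~$K_1, \dots, K_i$'' is maintained when we process~$K_{i+1}$. I would also note (as the paper already hints via Figure~\ref{Figure: C4Del}) that the color-diverse hypothesis on~$\mathcal{F}$ is genuinely used inside Lemma~\ref{Proposition: Equal Choice Neighborhood Class} and hence is needed here too; without it the reduction to~$2^{\gamma^2}$ types fails. The remaining bookkeeping — counting that there are at most~$\gamma^2$ relevant class-pairs and that each candidate set and each~$F$-freeness check is polynomial — is routine.
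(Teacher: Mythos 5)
Your proposal is correct and follows essentially the same route as the paper: compute the colored neighborhood classes, invoke Lemma~\ref{Proposition: Equal Choice Neighborhood Class} to restrict attention to solutions that are consistent with every class, and then enumerate the $2^{\gamma^2}$ choices of which inter-class edge sets to delete, checking each candidate for $\mathcal{F}$-freeness in polynomial time. The only difference is that you spell out the iterated application of the lemma (using statements~$c)$ and~$d)$ to preserve consistency across rounds), a detail the paper leaves implicit with ``we may assume''.
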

\begin{proof}
First, compute all colored neighborhood classes in polynomial time. Note that there are~$\Oh(\gamma^2)$ edge sets between different colored neighborhood classes. Then, iterate over all~$\Oh(2^{\gamma^2})$ possibilities to delete some of these edge sets. If one of these edge deletions leads to a solution, return \textit{yes}. Otherwise, return \textit{no}. 
\end{proof}

\begin{corollary}
\textsc{$cP_\ell$D} with~$\ell \geq 3$ and~\textsc{$cC_\ell$D} with~$\ell \geq 5$ can be solved in~$2^{\gamma^2} \cdot n^{\mathcal{O}(1)}$~time.
\end{corollary}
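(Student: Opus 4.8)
The plan is to read the corollary off the preceding theorem. Recall from the preliminaries that, for fixed~$c$ and~$\ell$, the problem \textsc{$cP_\ell$D} (resp.\ \textsc{$cC_\ell$D}) is precisely \textsc{$\mathcal{F}$-Deletion} for~$\mathcal{F}:=\{F\mid F\text{ is a }c\text{-colored }P_\ell\}$ (resp.\ $\mathcal{F}:=\{F\mid F\text{ is a }c\text{-colored }C_\ell\}$). Hence it suffices to verify the two hypotheses of the theorem for these families: that one can test in polynomial time whether a given graph contains an induced member of~$\mathcal{F}$, and that every~$F\in\mathcal{F}$ is color diverse. The first hypothesis is immediate since~$\ell$ is a constant: iterate over all~$\Oh(n^\ell)$ vertex subsets of size~$\ell$ and check each in~$\Oh(\ell^2)$ time whether it induces a $c$-colored~$P_\ell$ (resp.\ $C_\ell$). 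Once both hypotheses are established, the theorem yields the claimed running time~$2^{\gamma^2}\cdot n^{\Oh(1)}$.

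The substance of the argument is the second hypothesis, i.e., showing that a $c$-colored~$P_\ell$ with~$\ell\ge 3$ and a $c$-colored~$C_\ell$ with~$\ell\ge 5$ have only singleton colored neighborhood classes. I would show that no two distinct vertices~$u\neq v$ of such a graph satisfy condition~a) or~b) of Definition~\ref{Def:ColNeighDiv}. If~a) holds, then~$u\in N^i[u]=N^i[v]$ forces~$\{u,v\}\in E_i$ with~$N^i[u]=N^i[v]=\{u,v\}$; inspecting a possible further neighbor~$w$ of~$u$ (which would then lie in~$N^j(u)=N^j(v)$, hence be a common neighbor of~$u$ and~$v$) produces a triangle on~$\{u,v,w\}$ unless both~$u$ and~$v$ have degree one, and neither situation occurs in a~$P_\ell$ with~$\ell\ge 3$ nor in the $2$-regular graph~$C_\ell$. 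If~b) holds, then~$N(u)=N(v)$ with matching colors; a common neighborhood of size two yields a~$C_4$ subgraph, which is impossible in a path (being acyclic) and, since~$C_\ell$ with~$\ell\ge 5$ is connected, $2$-regular and has more than four vertices, also impossible in such a~$C_\ell$; the only remaining possibility is that~$u$ and~$v$ are the two degree-one endpoints of a~$P_3$, but then the two edges at their common neighbor carry distinct colors, contradicting the color-matching requirement of~b). Thus every colored neighborhood class is a singleton and the graph is color diverse.

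The one delicate point, and the reason the statement is restricted to~$\ell\ge 5$ for cycles, is exactly this~$C_4$-subgraph step: a $2$-colored~$C_4$ with two vertices having identical colored neighborhoods (as in Fig.~\ref{Figure: C4Del}) is \emph{not} color diverse, so for~$\ell=4$ the family~$\mathcal{F}$ would contain such a graph and the theorem would not be applicable. Apart from tracking this boundary case, the proof is a routine unfolding of the definitions followed by an application of the theorem.
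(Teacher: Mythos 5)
Your proposal is correct and follows essentially the same route as the paper: the corollary is obtained by instantiating the $\mathcal{F}$-Deletion theorem with the family of $c$-colored $P_\ell$s (resp.\ $C_\ell$s), and the paper merely asserts that these graphs are color diverse, whereas you spell out the verification of both hypotheses. Your case analysis of Definition~\ref{Def:ColNeighDiv} (the triangle argument for condition~a), the $C_4$-subgraph argument for condition~b), and the $P_3$-endpoint case using $c\ge 2$) is sound, and you correctly identify the $\ell=4$ cycle case as the reason for the restriction $\ell\ge 5$.
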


\subsection{Parameterization by Solution Size}

Observe that if~$\ell$ is a constant, \textsc{$cP_\ell$D} and \textsc{$cC_\ell$D} can be solved by a naive branching algorithm  with running time~$\Oh(\ell^k \cdot n^\ell)$: For a given instance~$(G,k)$ check in~$\Oh(n^\ell)$ time if~$G$ contains a~$c$-colored~$P_\ell$ (or~$C_\ell$, respectively). If this is not the case, then answer yes. Otherwise, answer no if~$k<1$. If~$k \geq 1$, then compute a~$c$-colored~$P_\ell$ (or~$C_\ell$) with edges~$e_1, \dots, e_\ell$ and branch into the cases~$(G-e_i,k-1)$ for~$i \in \{1, \dots, \ell\}$.

To put the naive branching algorithm into context, we next study both problems parameterized by~$k$ and~$\ell$ when~$\ell$ is not a constant. More precisely, we study the problems~\textsc{Colored Path Deletion (CPD)} and~\textsc{Colored Cycle Deletion (CCD)} which are versions of~$cP_\ell$D and~$cC_\ell$D where~$\ell$ and~$c$ are part of the input. For the parameter~$\ell$ it is W[1]-hard to decide whether a given graph has an induced~$P_\ell$ and to decide whether a given graph has an induced~$C_\ell$~\cite{CF07}. Consequently,~CPD and CCD are~W[1]-hard for~$\ell$ even if~$c=1$ and~$k=0$. Thus, it is hopeless to obtain fixed-parameter tractability for~$k$ or even~$k+\ell$.

The above might give the impression that the hardness of~CPD and CCD is rooted in the problem of detecting the forbidden subgraphs. However, we show that even if the forbidden subgraphs are given,~CPD and CCD are still unlikely to be fixed-parameter tractable for~$k$. More precisely, we show that both problems are~W[2]-hard when parameterized by~$k$ even if the induced subgraphs can be enumerated within polynomial time.

\begin{theorem} \label{Theorem: W2-h}
\textsc{CPD} is $W[2]$-hard when parameterized by~$k$ even if

\begin{enumerate}
\item[a)] all induced~$c$-colored~$P_\ell$s can be enumerated in polynomial time, and
\item[b)] the input is limited to instances where~$c=3$ and the input graph is non-cascading and has~$n^{\Oh(1)}$ induced~$c$-colored~$P_\ell$s.
\end{enumerate}

\end{theorem}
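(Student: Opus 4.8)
The plan is to reduce from \textsc{Set Cover}, which is $W[2]$-hard when parameterized by the size~$k$ of the sought cover. Given an instance with universe~$U=\{u_1,\dots,u_q\}$, family~$\mathcal{S}=\{S_1,\dots,S_m\}$, and parameter~$k$ (we may assume every~$u_i$ is in at least one~$S_j$), I would build a $3$-colored graph~$G$, an integer~$\ell$ polynomial in~$q+m$, and keep the budget~$k$, as follows. For every set~$S_j$ introduce a \emph{selector edge}~$e_j=\{a_j,b_j\}$ of colour~$1$. For every element~$u_i$, let~$S_{j_1},\dots,S_{j_t}$ with~$j_1<\dots<j_t$ be the sets containing~$u_i$, and add an induced $3$-colored~$P_\ell$, call it~$P^i$, whose edge sequence threads through~$e_{j_1},\dots,e_{j_t}$ in this order: between two consecutive selectors we insert a fresh private \emph{corridor} path, and the two ends of the chain are extended by fresh private \emph{pendant} paths. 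All new corridor and pendant vertices have degree~$2$ in~$G$, except the two pendant tips, which have degree~$1$. Every corridor and pendant edge gets colour~$2$, except the two extreme edges of~$P^i$ (incident to the tips), which get colour~$3$; thus~$P^i$ uses exactly the colours~$1,2,3$. The corridor and pendant lengths, together with a common value of~$\ell$, are chosen so that every~$P^i$ has exactly~$\ell$ vertices and every corridor/pendant is ``long''; since~$t\le m$ this is possible with~$\ell$ polynomial in the input. The budget stays~$k$, so this is a parameterized reduction for the parameter~$k$.

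The structural heart of the proof is the claim that the only subgraphs of~$G$ isomorphic to a $3$-colored~$P_\ell$ are~$P^1,\dots,P^q$ themselves. The skeleton of the argument: since corridor and pendant vertices have degree~$2$ (or~$1$ at the tips), every path of~$G$ is an alternation of \emph{whole} corridor/pendant segments and single selector edges glued at the junction vertices~$a_j,b_j$; a $3$-colored path must use a colour-$3$ edge, so one of its endpoints is a pendant tip, and it uses at most two colour-$3$ edges; and it must use at least one selector. Following the forced degree-$2$ vertices from a pendant tip, a path either stays on some~$P^i$ (and then, counting vertices, must be all of~$P^i$) or at some junction it leaves~$P^i$ for the chain of another element; an edge count, using that corridors and pendants are long and that~$\ell$ has been tuned accordingly, shows that any such ``hybrid'' path has a number of vertices that cannot equal~$\ell$. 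Granting the claim, the structural guarantees are immediate: every~$P^i$ is induced, so~$G$ contains no non-induced $3$-colored~$P_\ell$ and is in particular (strictly) non-cascading; $G$ has exactly~$q\le n$ induced $3$-colored~$P_\ell$s, which can be listed in polynomial time by outputting~$P^1,\dots,P^q$; and~$c=3$.

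For correctness I would argue the equivalence directly. If~$\{S_j: j\in J\}$ is a cover of size at most~$k$, delete~$\{e_j: j\in J\}$: each~$P^i$ contains a selector~$e_j$ with~$u_i\in S_j$ and~$j\in J$, so all~$P^i$ are destroyed, and by the claim no new $3$-colored~$P_\ell$ appears. Conversely, let~$D$ be a solution of size at most~$k$. Every edge of~$D$ that is not a selector is a colour-$2$ or colour-$3$ edge lying on exactly one~$P^i$; replacing it by an arbitrary selector~$e_j$ with~$u_i\in S_j$ produces a set~$D'$ with~$|D'|\le|D|$ that still destroys every~$P^i$ (and, by the claim, nothing else) and consists only of selector edges. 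Then~$\{S_j: e_j\in D'\}$ is a set cover of size at most~$k$, since every~$u_i$ is covered by the set whose selector lies on~$P^i$ in~$D'$.

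I expect the main obstacle to be the claim that~$G$ has no unintended $3$-colored~$P_\ell$ subgraph --- equivalently, that a path cannot profitably splice together segments of two different element chains at a shared selector endpoint. This is exactly where the lengths of the corridors and pendants, and the value of~$\ell$, must be chosen with care, since a naive realisation does admit such hybrid paths and would then fail to be non-cascading. Once this claim is established, the remaining items --- the polynomial bound on the number of forbidden subgraphs, their enumeration, non-cascading, and both directions of the correctness equivalence --- are routine.
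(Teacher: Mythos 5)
Your reduction has the same skeleton as the paper's (the paper reduces from \textsc{Hitting Set}, the dual of your \textsc{Set Cover}; in both cases one long $3$-colored $P_\ell$ per constraint threads through ``selector'' edges whose deletion encodes the cover), but your proof has a genuine gap at exactly the point you flag: the claim that the only $3$-colored $P_\ell$s in $G$ are the intended paths $P^1,\dots,P^q$. This claim cannot be rescued by tuning corridor and pendant lengths, because an induced path in $G$ is free to \emph{terminate at an internal corridor vertex}. Concretely, take two elements $u_i,u_{i'}$ lying in a common set $S_j$, start at the tip of $P^i$, follow $P^i$ through its pendant and across the selector $e_j$, and then continue from $b_j$ along the corridor of $P^{i'}$, stopping after exactly $\ell$ vertices. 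This truncation is always possible whenever $P^{i'}$ has at least as many vertices remaining after $b_j$ as $P^i$ does (which holds for at least one orientation of the pair, since both have total length $\ell$), the resulting vertex set induces a path (the pendants and corridors are private, so there are no chords), and it carries all three colors (the tip edge, the pendant, and $e_j$). So unintended induced $3$-colored $P_\ell$s exist for \emph{every} choice of lengths, the graph is not strictly non-cascading, and the forward direction of your equivalence fails: a cover may satisfy $u_i$ and $u_{i'}$ via sets other than $S_j$, leaving this hybrid path intact.

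The missing idea --- and the one the paper uses --- is structural rather than arithmetic: between any two gadgets that share a junction, the paper adds extra \emph{fixed} edges (forming triangles with the yellow edge $\{v^p,u_1^p\}$ of each subset gadget) so that \emph{any} vertex set meeting two different gadgets induces a triangle and hence cannot induce a path. These fixed edges lie on no induced forbidden subgraph, so they are conflict-free and a minimal solution never deletes them; this is what makes the instance non-cascading and keeps the count of induced $3$-colored $P_\ell$s at exactly $\mu$. Without an analogous chord-adding device at your shared selector endpoints $a_j,b_j$, your construction admits the spliced paths above and the reduction is not correct.
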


\begin{proof}
We give  a parameterized reduction from the~$W[2]$-hard problem \textsc{Hitting Set (HS)} parameterized by $k$~\cite{Cyg+15}. In \textsc{HS} one is given a universe~$U=\{x_1,\dots,x_\eta\}$, a family~$\mathcal{F}=\{F_1\ldots,F_\mu\}$ of subsets of~$U$, and an integer~$k$. The question is if there is some~$H \subseteq U$ with~$|H| \leq k$ and~$H \cap S \neq \emptyset$ for every~$F \in \mathcal{F}$.
Let $(U,\mathcal{F}, k)$ be an instance of \textsc{HS}.
We can assume that each set $F_j$ is non-empty and that each~$x_i\in U$ occurs in at least one subset $F_j\in\mathcal{F}$.

\textit{Construction}:
To construct an equivalent instance $(G,c,\ell,k)$ of \textsc{CPD} we first set $c=3$ and~$\ell = 1 + 3\cdot \eta$. 
Then, we construct the following gadgets.

For each $x_i\in U$ we construct an \textit{element gadget} $W_i$ as follows:
We add two vertices~$w_i, \widetilde{w}_i$ to~$W_i$ and connect $w_i$ with $\widetilde{w}_i$ by a blue edge.
By $W$ we denote the set of all element gadgets.

Next, we construct a \textit{subset gadget} $Z_j$ for each set ${F}_j\in\mathcal{F}$.
We add a vertex~$v^j$, and for each $i\in[\eta]$ a vertex $u_i^j$ to $Z_j$. 
Then, we add a yellow edge $\{v^j,u_1^j\}$.
If~$x_i\in {F}_j$, then we connect the corresponding element gadget by adding a red edge $\{u_i^j,w_i\}$ and if~$i<\eta$ a red edge $\{\widetilde{w}_i, u_{i+1}^j\}$.
Else if $x_i\notin {F}_j$, then we add two vertices $w_i^j, \widetilde{w}_i^j$ to $Z_j$, and we add red edges $\{u_i^j,w_i^j\}, \{w_i^j, \widetilde{w}_i^j\}$ to $G$ and if $i<\eta$ we add a red edge~${\{\widetilde{w}_i^j, u_{i+1}^j\}}$ to $G$.
All edges that we have added so far are called \textit{unfixed} edges.

Finally, we connect the subset gadgets as follows.
Let ${F}_p,{F}_q$ be subsets such that~$x_i\in {F}_p$ and~$x_i\in {F}_q$ for an element $x_i\in U$.
We add red edges 
$\{u_i^p,u_i^q\}$,
$\{v^p,u_i^q\}$, 
$\{v^q,u_i^p\}$, 
$\{u_1^p,u_i^q\}$,
$\{u_1^q,u_i^p\}$ 
and if $i<\eta$ we add red edges 
$\{u_{i+1}^p,u_{i+1}^q\}$, 
$\{v^p,u_{i+1}^q\}$, 
$\{v^q,u_{i+1}^p\}$, 
$\{u_1^p,u_{i+1}^q\}$, 
$\{u_1^q,u_{i+1}^p\}$ (see Fig.~\ref{fig:w2}).
We call these edges \textit{fixed}.
Observe that for any edge $e\in E$ we call $e$ a fixed edge if $e$ connects two vertices from different subset gadgets and otherwise, we call $e$ an unfixed edge.

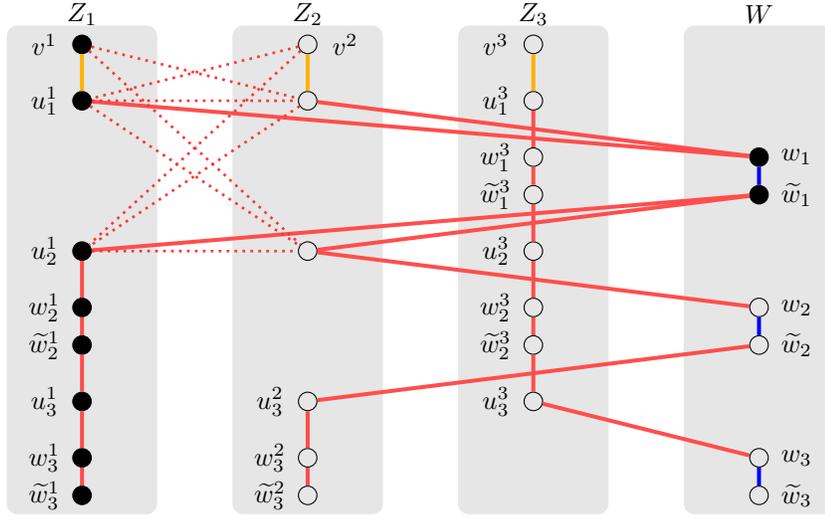
\begin{figure}[t!]
\centering
\begin{tikzpicture}[yscale=.5]
  \begin{pgfonlayer}{background}
    \begin{scope}[blend mode=multiply]
    \draw[rounded corners, fill=black!10,draw=none](-1,-.5)--(1,-.5)--(1,12.5)--(-1,12.5)--cycle;\node at(0,12.8){$Z_1$};
    \draw[rounded corners, fill=black!10,draw=none](2,-.5)--(4,-.5)--(4,12.5)--(2,12.5)--cycle;\node at(3,12.8){$Z_2$};
    \draw[rounded corners, fill=black!10,draw=none](5,-.5)--(7,-.5)--(7,12.5)--(5,12.5)--cycle;\node at(6,12.8){$Z_3$};
    \draw[rounded corners, fill=black!10,draw=none](8,-.5)--(10,-.5)--(10,12.5)--(8,12.5)--cycle;\node at(9,12.8){$W$};
      
    \end{scope}
    \end{pgfonlayer}


\node[black vertex](a1) at (9,9){};\node at(9.5,9){$w_1$};
\node[black vertex](a2) at (9,8){};\node at(9.5,8){$\widetilde{w}_1$};
\node[vertex](b1) at (9,5){};\node at(9.5,5){$w_2$};
\node[vertex](b2) at (9,4){};\node at(9.5,4){$\widetilde{w}_2$};
\node[vertex](c1) at (9,1){};\node at(9.5,1){$w_3$};
\node[vertex](c2) at (9,0){};\node at(9.5,0){$\widetilde{w}_3$};
\path[blue edge](a1)--(a2);
\path[blue edge](b1)--(b2);
\path[blue edge](c1)--(c2);

\node[black vertex](v2) at (0,10.5){};\node at(-.5,10.5){$u_1^1$};
\node[black vertex](v5) at (0,6.5){};\node at(-.5,6.5){$u_2^1$};
\node[black vertex](v6) at (0,5){};\node at(-.5,5){$w_2^1$};
\node[black vertex](v7) at (0,4){};\node at(-.5,4){$\widetilde{w}_2^1$};
\node[black vertex](v8) at (0,2.5){};\node at(-.5,2.5){$u_3^1$};
\node[black vertex](v9) at (0,1){};\node at(-.5,1){$w_3^1$};
\node[black vertex](v10) at (0,0){};\node at(-.5,0){$\widetilde{w}_3^1$};
\path[red edge](v6)--(v7);
\path[red edge](v9)--(v10);
\path[red edge](v2)--(a1);
\path[red edge](a2)--(v5);
\path[red edge](v5)--(v6);
\path[red edge](v7)--(v8);
\path[red edge](v8)--(v9);

\node[vertex](w2) at (3,10.5){};
\node[vertex](w5) at (3,6.5){};
\node[vertex](w8) at (3,2.5){};\node at(2.5,2.5){$u_3^2$};
\node[vertex](w9) at (3,1){};\node at(2.5,1){$w_3^2$};
\node[vertex](w10) at (3,0){};\node at(2.5,0){$\widetilde{w}_3^2$};
\path[red edge](w9)--(w10);
\path[red edge](w2)--(a1);
\path[red edge](a2)--(w5);
\path[red edge](w5)--(b1);
\path[red edge](b2)--(w8);
\path[red edge](w8)--(w9);

\node[vertex](x2) at (6,10.5){};\node at(5.5,10.5){$u_1^3$};
\node[vertex](x3) at (6,9){};\node at(5.5,9){$w_1^3$};
\node[vertex](x4) at (6,8){};\node at(5.5,8){$\widetilde{w}_1^3$};
\node[vertex](x5) at (6,6.5){};\node at(5.5,6.5){$u_2^3$};
\node[vertex](x6) at (6,5){};\node at(5.5,5){$w_2^3$};
\node[vertex](x7) at (6,4){};\node at(5.5,4){$\widetilde{w}_2^3$};
\node[vertex](x8) at (6,2.5){};\node at(5.5,2.5){$u_3^3$};
\path[red edge](x3)--(x4);
\path[red edge](x6)--(x7);
\path[red edge](x2)--(x3);
\path[red edge](x4)--(x5);
\path[red edge](x5)--(x6);
\path[red edge](x7)--(x8);
\path[red edge](x8)--(c1);

\node[black vertex](a1)at (0,12){};\node at(-.5,12){$v^1$};
\node[vertex](a2)at (3,12){};\node at(3.5,12){$v^2$};
\node[vertex](a3)at (6,12){};\node at(5.5,12){$v^3$};

\path[yellow edge](a1)--(v2);
\path[yellow edge](a2)--(w2);
\path[yellow edge](a3)--(x2);

\path[red dots](v2)--(w2);
\path[red dots](v5)--(w5);
\path[red dots](a1)--(w2);
\path[red dots](a1)--(w5);
\path[red dots](v2)--(w5);
\path[red dots](a2)--(v2);
\path[red dots](a2)--(v5);
\path[red dots](w2)--(v5);

\end{tikzpicture}
\caption{The constructed graph for ${\mathcal{F}=\{{F}_1=\{1\},{F}_2=\{1,2\},{F}_3=\{3\}\}}$ and~${U=[3]}$. The dotted lines represent fixed edges, while the solid lines represent unfixed edges. The filled vertices induce the 3-colored $P_{10}$ in $G[{F}_1\cup W]$.}
\label{fig:w2}
\end{figure}

\textit{Non-Cascading}: We first argue that the constructed graph is non-cascading. By construction, every unfixed edge is a conflict edge and every fixed edge is conflict-free. Moreover, every non-induced~$c$-colored~$P_\ell$ contains one yellow edge~$\{v^i,v^i_1\}$ from a subset gadget of~$Z_i$ and some vertex~$x$ from a subset gadget of some~$Z_j$ with~$j \neq i$. Then, there are fixed edges~$\{v^i,x\}$ and~$\{v^i_1,x\}$. Since all fixed edges are non-conflict edges we conclude that~$G$ is non-cascading.

\textit{Enumerating~$c$-colored~$P_\ell$s}: By construction, for every induced~$c$-colored~$P_\ell$ there is one unique~$v \in \{v^i \mid i \in [\mu]\}$ and one unique~$w \in \{\tilde{w}_\eta\} \cup \{\tilde{w}_\eta^i \mid i \in [\mu]\}$ such that~$v$ and~$w$ are the endpoints of the induced~$c$-colored~$P_\ell$. Thus, there are exactly~$\mu$ induced~$c$-colored~$P_\ell$ in~$G$ than can be enumerated in polynomial~time.

\textit{Intuition}:
Before we prove the correctness of the reduction, we describe its idea.
We connected the subset gadgets to element gadgets such that for each subset gadget~$Z_j$, the induced subgraph $G[Z_j\cup W]$ contains exactly one induced 3-colored $P_\ell$. 
We then connected the subset gadgets such that there is no induced 3-colored $P_\ell$ in $G$ that contains vertices from two different subset gadgets. 
So we can model a hitting set for a collection $\mathcal{F}$ by deleting the edges from the corresponding element gadgets. 

\textit{Correctness}:
To prove the correctness of the reduction, we show the following claims.
First, we show that each 3-colored $P_\ell$ contains vertices of exactly one subset gadget if we do not delete fixed edges.

\begin{Claim}
Let $S$ be an edge-deletion set that does not contain fixed edges.
Furthermore, let~$G':=G- S$ and $V'\subseteq V(G)$ such that the induced subgraph $G'[V']$ is a~3-colored $P_\ell$.
Then, $V'$ contains vertices of at most one subset gadget $Z_j$.
\label{Claim:w2_1}
 \end{Claim}

\begin{claimproof}
Since each 3-colored $P_\ell$ has to include a yellow edge, we conclude \linebreak[4]that~${v^p,u_1^p\in V'}$ for some $p\in[\mu]$.
Assume towards a contradiction that $V'$ contains a vertex from a clause gadget $Z_q$ such that $q\neq p$.
Without loss of generality there are adjacent vertices $\alpha, \beta\in V'$ such that $\alpha\in Z_p$, $\beta\in Z_q$ and $\{\alpha,\beta\}\in E(G)$, since we added the fixed edges.
By construction, we know that $\{v^p,\beta\},\{u_1^p,\beta\}\in E\setminus S$ since $\{v^p,\beta\},\{u_1^p,\beta\}$ are fixed edges.
This is a contradiction since the induced subgraph $G'[\{v^p, u_1^p,\beta\}]$ is a $C_3$. $\hfill \Diamond$
\end{claimproof}

Second, we observe  the following about 3-colored $P_\ell$s that are induced by one subset gadget.

\begin{Claim}
For each subset gadget $Z_j$, the graph $G[Z_j\cup W]$ contains exactly one 3-colored $P_\ell$.
This 3-colored $P_\ell$ contains a blue edge $\{w_i,\widetilde{w}_i\}$ if and only if $x_i\in {F}_j$.
\label{Claim:w2_2}
 \end{Claim}

\begin{claimproof}
By construction, we know that $G[Z_j\cup W]$ contains at least one 3-colored $P_\ell$, and that $\{v^j,u_1^j\}$ is the only yellow edge in $E_G(Z_j\cup W)$.
Since $\deg_{G[Z_j\cup W]}(v^j)=1$ and $\deg_{G[Z_j\cup W]}(\alpha)\leq 2$ for each $\alpha\in Z_j\cup W$, we conclude that there is at most one~3-colored $P_\ell$ in $G[Z_j\cup W]$.

By construction, this 3-colored $P_\ell$ contains a blue edge $\{w_i,\widetilde{w}_i\}$ if $x_i\in {F}_j$.
Otherwise, if~${x_i\notin {F}_j}$, then the 3-colored $P_\ell$ contains a red edge $\{w_i^j,\widetilde{w}_i^j\}$ and does not contain the blue edge $\{w_i,\widetilde{w}_i\}$. $\hfill \Diamond$
\end{claimproof}

Now we prove the correctness of the reduction by showing that $(U, \mathcal{F}, k)$ is a yes-instance of \textsc{HS} if and only if $(G,c,\ell,k)$ is a yes-instance of \textsc{CPD}.

($\Rightarrow$) 
Let $H\subseteq U$ be a hitting set of size at most $k$ for $(U,\mathcal{F})$.
Consider the set~$S:=\{\{w_i,\widetilde{w}_i\}\ |\ x_i\in H\}$.
We will show that~$G':= G- S$ is 3-colored $P_\ell$-free.
Since $S$ does not contain fixed edges, we know from Claim \ref{Claim:w2_1} that there is no induced 3-colored $P_\ell$ in $G'$ that contains vertices from two different subset gadgets.
Hence, it remains to show that no subgraph~$G'[Z_j\cup W]$ contains an induced 3-colored $P_\ell$ for any subset gadget $Z_j$.

Let $Z_j$ be a subset gadget.
By Claim \ref{Claim:w2_2} we know that there is exactly one 3-colored $P_\ell$ in the induced subgraph $G[Z_j\cup X]$ and that this 3-colored $P_\ell$ includes exactly one blue edge from each element gadget~$W_i$ where $x_i\in {F}_j$.
Since $H$ is a hitting set for $\mathcal{F}$, we conclude that $S$ includes at least one blue edge from an element gadget~$W_i$ that is part of the 3-colored $P_\ell$ in $G[Z_j\cup W]$.
Hence,~$G'[Z_j\cup W]$ is 3-colored~$P_\ell$-free, which implies that $G'$ is 3-colored $P_\ell$-free.

($\Leftarrow$)
Conversely, let $S$ be an edge-deletion set of size at most $k$ such that $G- S$ is 3-colored $P_\ell$-free. 
First, we show that $S':=S\setminus \{e\in S\ |\ e\text{ is fixed}\}$ is an equivalent solution. 
Assume towards a contradiction that $G':=G- S'$ contains an induced 3-colored $P_\ell$. 
Let $V'\subseteq V(G)$ such that $G'[V']$ is a 3-colored~$P_\ell$.
Since~$S'$~does not contain fixed edges, we know by Claim \ref{Claim:w2_1} that~$V'$ contains vertices of at most one subset gadget $Z_j$.
Hence, $G'[V']$ only contains unfixed edges.
This implies that~$G'[V']$ does not contain an edge from $S$.
Hence, the induced subgraph~$G[V']$ is a 3-colored $P_\ell$ in $G'$.
This is a contradiction, since $G- S$ is 3-colored $P_\ell$-free.
Hence,~$S'$ is an equivalent solution.

Next, we will show how to construct another equivalent solution $S''\subseteq E_G(W)$ from $S'$.
Let~$\{\alpha,\beta\}\in S'$ such that $\{\alpha,\beta\}\notin E_G(W)$.
We will show that $\{\alpha,\beta\}$ is part of at most one 3-colored $P_\ell$.
From Claim~\ref{Claim:w2_1}, we conclude that any 3-colored~$P_\ell$ including $\{\alpha,\beta\}$ only includes vertices from at most one subset gadget, since $S'$ only contains unfixed edges.
Hence, $\{\alpha,\beta\}$ can only be part of a 3-colored $P_\ell$ in an induced subgraph $G[Z_j\cup W]$ where $\alpha\in Z_j$ or $\beta\in Z_j$.
From Claim \ref{Claim:w2_2} we know that there is exactly one 3-colored $P_\ell$ in $G[Z_j\cup W]$.
Hence, $\{\alpha,\beta\}$ is part of at most one 3-colored $P_\ell$.
This 3-colored $P_\ell$ has to include a blue edge $e\in E_G(W)$, since $G$ is 3-colored and the edges in $E_G(W)$ are the only blue edges.
Hence,~$(S\setminus \{\{\alpha,\beta\}\})\cup\{e\}$ is an equivalent solution.
Thus, we can construct an equivalent solution $S''$ that only contains edges from $E_G(W)$. 

To finish the proof, we show that the set $H:=\{x_i\ |\ \{w_i,\widetilde{w}_i\}\in S''\}$ is a hitting set for $\mathcal{F}$.
Let ${F}_j\in\mathcal{F}$. 
From Claim~\ref{Claim:w2_2} we know that $G[Z_j\cup W]$ includes a 3-colored $P_\ell$. 
Hence, there is at least one edge from that 3-colored $P_\ell$ in every solution.
Since $S''$ is a solution that only contains edges~$\{w_i,\widetilde{w}_i\}\in E_G(W)$, we conclude that there is at least one $x_i\in H$ such that $x_i\in {F}_j$.
Hence, $H$ is a hitting set for~$\mathcal{F}$. 
\end{proof} 

We next extend the construction from the proof of Theorem~\ref{Theorem: W2-h} to obtain a similar W[2]-hardness for~\textsc{CCD}.
\begin{theorem} \textsc{CCD} is $W[2]$-hard when parameterized by~$k$ even if
  \begin{enumerate}
  \item[a)] all induced~$c$-colored~$C_\ell$s can be enumerated in polynomial~time, and
  \item[b)] the input is limited to instances where~$c=4$ and the input graph is non-cascading and has~$n^{\Oh(1)}$ induced~$c$-colored~$C_\ell$s.
  \end{enumerate}
\end{theorem}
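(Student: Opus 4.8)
The plan is to adapt the reduction from the proof of Theorem~\ref{Theorem: W2-h}, again from \textsc{Hitting Set} parameterized by~$k$, turning the long induced \emph{paths} of that construction into long induced \emph{cycles} with the help of a fourth color. Concretely, I would keep the element gadgets~$W_i$, the subset gadgets~$Z_j$ with their blue, red, and yellow edges, and all fixed edges between subset gadgets exactly as before; recall that then~$G[Z_j\cup W]$ is the disjoint union of isolated blue edges~$\{w_i,\widetilde{w}_i\}$ with~$x_i\notin F_j$ and a single induced~$3$-colored~$P_\ell$ (with~$\ell=1+3\eta$) with endpoints~$v^j$ and a degree-one vertex~$b^j\in\{\widetilde{w}_\eta\}\cup\{\widetilde{w}_\eta^j\}$. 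Now set~$c=4$ and, for each~$j\in[\mu]$, add a fresh vertex~$t^j$ together with two green edges~$\{v^j,t^j\}$ and~$\{t^j,b^j\}$ (if it is convenient to have every new vertex private to one~$Z_j$, one may first append one more private vertex to each~$Z_j$ so that~$b^j$ becomes private). Then~$G[Z_j\cup W\cup\{t^j\}]$ is the disjoint union of isolated blue edges and a single induced~$4$-colored cycle~$C_{\ell'}$ with~$\ell'=\ell+1=2+3\eta$ which, exactly as in the path case, uses the blue edge of~$W_i$ precisely when~$x_i\in F_j$. All of~$c$,~$\ell'$,~$k$,~$|V(G)|$ are polynomially bounded in the input, so this is a parameterized reduction, and it reuses the structure of the proof of Theorem~\ref{Theorem: W2-h} almost verbatim.

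I would then reprove, in the cyclic setting, the two structural claims used for \textsc{CPD}. The analogue of Claim~\ref{Claim:w2_1} — an induced~$4$-colored~$C_{\ell'}$ using no fixed edge is contained in a single~$G[Z_j\cup W\cup\{t^j\}]$ — carries over by the same argument: such a cycle contains a yellow edge, hence some~$\{v^p,u_1^p\}$, and two adjacent cycle vertices in different subset gadgets span, just as in the proof of Claim~\ref{Claim:w2_1}, fixed edges that form a triangle, contradicting inducedness. The analogue of Claim~\ref{Claim:w2_2} — each~$G[Z_j\cup W\cup\{t^j\}]$ contains exactly one induced~$4$-colored~$C_{\ell'}$, the one using the blue edges of the~$W_i$ with~$x_i\in F_j$ — is immediate since that subgraph has a single nontrivial connected component, a cycle on exactly~$\ell'$ vertices. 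Together these facts show that~$G$ has exactly~$\mu$ induced~$4$-colored~$C_{\ell'}$s, all enumerable in polynomial time (part~(a) and half of part~(b)), and that~$G$ is non-cascading: the conflict edges are precisely the (unfixed) edges of these~$\mu$ cycles, whereas any non-induced~$4$-colored~$C_{\ell'}$ must bridge two subset gadgets — either through a fixed edge, or through a shared element-gadget vertex~$w_i$, in which case the parallel fixed edge~$\{u_i^p,u_i^q\}$ is a chord — so it keeps a conflict-free fixed chord under any deletion of conflict edges.

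The correctness proof would mirror the path case. Forward: a hitting set~$H$ gives~$S=\{\{w_i,\widetilde{w}_i\}\mid x_i\in H\}$, and by the two claims every induced~$4$-colored~$C_{\ell'}$ lies inside one~$G[Z_j\cup W\cup\{t^j\}]$ and uses a blue edge of some~$W_i$ with~$x_i\in F_j$, which~$S$ hits. Backward: from a solution~$S$ with~$|S|\le k$, discard its fixed edges (they lie in no induced~$C_{\ell'}$ by the claims), then replace each remaining non-blue deleted edge by a blue edge~$\{w_i,\widetilde{w}_i\}$ of the unique induced~$C_{\ell'}$ it destroys, obtaining an equivalent solution~$S''\subseteq E_G(W)$; then~$H:=\{x_i\mid\{w_i,\widetilde{w}_i\}\in S''\}$ is a hitting set, since each~$G[Z_j\cup W\cup\{t^j\}]$ contains an induced~$4$-colored~$C_{\ell'}$ that~$S''$ must hit.

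I expect the main obstacle to be ruling out \emph{spurious} induced~$4$-colored~$C_{\ell'}$s: unlike for paths, the green closing edges together with the dense web of fixed edges and the shared vertices~$w_i$,~$\widetilde{w}_\eta$ could a priori create extra~$4$-colored cycles of exactly length~$\ell'$. The key is a short length computation: a cycle that deviates from the intended route inside a gadget — for instance one that jumps from~$Z_p$ to~$Z_q$ through~$\{u_i^p,u_i^q\}$ or through~$w_i$ and then follows~$Z_q$'s path — acquires an extra vertex and thus has~$\ell'+1$ (in general, not~$\ell'$) vertices, while a cycle that uses two green edges at~$\widetilde{w}_\eta$ is either too long or, after collecting a blue edge, again of the wrong length; any genuinely bridging cycle that does have~$\ell'$ vertices is non-induced by the triangle argument. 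Carrying out this case analysis so that~$\ell'=2+3\eta$ admits no exception is the real work; the rest transcribes the proof of Theorem~\ref{Theorem: W2-h}.
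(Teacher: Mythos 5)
Your overall strategy (close each subset gadget's unique induced $3$-colored $P_\ell$ into a $4$-colored cycle with a fourth color and rerun the Hitting Set equivalence) is reasonable, and the backward direction via the ``replace a deleted green/red edge by a blue element edge of the unique cycle it destroys'' argument would go through. But the proposal has a genuine gap exactly where you flag it: the analogues of Claims~\ref{Claim:w2_1} and~\ref{Claim:w2_2} for cycles, and the non-cascading property of the modified graph, are asserted but not proven, and your fallback heuristic (a length computation) does not suffice to close them. Concretely, with a single closing vertex $t^j$ per gadget, two gadgets $Z_p,Z_q$ with $x_\eta\in F_p\cap F_q$ both attach a green edge to the shared vertex $\widetilde{w}_\eta$, and a closed walk $v^p,t^p,\widetilde{w}_\eta,t^q,v^q,u_1^q,\dots,u_i^q,w_i,u_i^p,\dots,u_1^p,v^p$ through a shared element $w_i$ with $x_i\in F_p\cap F_q$ has exactly $6i+2$ vertices, which equals $\ell'=3\eta+2$ whenever $\eta=2i$. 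So spurious $4$-colored cycles of precisely the target length do exist as subgraphs; excluding them as \emph{induced} cycles requires exhibiting a fixed-edge chord (here $\{u_i^p,u_i^q\}$ or $\{v^p,u_i^q\}$) in every such configuration, including all the variants created by $\widetilde{w}_\eta$ being shared, by crossings through the blue edge $\{w_i,\widetilde{w}_i\}$ (which put $u_{i+1}^p$ rather than $u_i^p$ on the cycle), and by cycles that use a fixed edge as a cycle edge rather than a chord. The same case analysis is needed a second time to certify that every \emph{non-induced} $4$-colored $C_{\ell'}$ retains a conflict-free chord, i.e., that the graph is non-cascading. None of this is carried out, and it is precisely the content of the theorem for instances restricted as in part~b).

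The paper avoids this entirely by a different construction: starting from the already-built \textsc{CPD} instance, it adds a set $Q$ of $k+1$ new vertices joined by color-$4$ edges to \emph{all} potential path endpoints $A\cup B$ so that $Q\cup(A\cup B)$ is a biclique. Because every $Q$-vertex is adjacent to all of $A\cup B$, an induced cycle of length $\ell+1\geq 5$ can contain at most one vertex of $Q$ and exactly two vertices of $A\cup B$, so every induced $4$-colored $C_{\ell+1}$ decomposes as (one already-classified induced $3$-colored $P_\ell$ of $G$) plus one $Q$-vertex -- the cycle analysis reduces to the path analysis of Theorem~\ref{Theorem: W2-h} with no new case distinctions. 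Moreover, the multiplicity $k+1$ makes it unaffordable to destroy a cycle by deleting closing edges ($k+1$ edge-disjoint two-edge closings per surviving path), so a budget-$k$ solution must destroy the underlying paths, and the backward direction follows from the non-cascading property of $G$ and Proposition~\ref{Prop: Non-Cascading} rather than from an edge-replacement argument. If you want to pursue your single-closing-vertex variant, you should either complete the chord case analysis in full (making $b^j$ private first, as you suggest, removes the $\widetilde{w}_\eta$ complications) or adopt the biclique-of-$k+1$-vertices device, which is what makes the paper's proof short.
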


\begin{proof}
Let~$I:=(G,c=3,\ell,k)$ be an instance that has been constructed from an instance of HS according to the proof of Theorem~\ref{Theorem: W2-h}. Recall that~$G$ is non-cascading and that every induced~$c$-colored~$P_\ell$ in~$G$ there is one unique~$v \in A:= \{ v^i \mid i \in [\mu]\}$ and one unique~$w \in B:=\{\tilde{w}_\eta\} \cup \{\tilde{w}_\eta^i \mid i \in [\mu]\}$ that are the endpoints of the induced~$c$-colored~$P_\ell$.

We describe how to construct an instance~$I':=(G',c=4,\ell+1,k)$ such that~$I'$ is a yes-instance of~CCD if and only if~$I$ is a yes-instance of~CPD.

\textit{Construction:} Let~$1$, $2$, and~$3$ be the colors present in~$G$. We describe how to obtain~$G'$ that is colored with edge colors~$1$, $2$, $3$, and~$4$. 
The graph~$G'$ can be computed from~$G$ by adding a vertex set~$Q$ of size~$k+1$, and we add edges with color~$4$ between~$Q$ and~$A \cup B$ such that~$Q \cup (A \cup B)$ forms a biclique. 

Analogously to Theorem~\ref{Theorem: W2-h}, every non-induced~$4$-colored~$P_{\ell+1}$ contains vertices from different subset gadgets of~$G$ and thus,~$G'$ is non-cascading. Moreover, observe that every induced~$4$-colored~$C_{\ell+1}$ in~$G'$ consists of a~$3$-colored~$P_\ell$ with one endpoint~$v \in A$ and one endpoint~$w \in B$ and two edges~$\{v,q\}$, $\{q,w\}$ with~$q \in Q$. Thus, there are~$\mu \cdot (k+1)$ induced~$4$-colored~$C_{\ell+1}$ in~$G'$ which can be enumerated by enumerating all all~$3$-colored~$P_\ell$ in~$G$ in~$\text{poly}(n)$ time and combining each with one vertex from~$Q$.

\textit{Correctness:} We next show the correctness.

$(\Rightarrow)$ Let~$S$ with~$|S| \leq k$ be an edge-deletion set such that~$G-S$ is~$3$-colored~$P_\ell$-free. 
Then, in~$G'-S$ there is no induced~$3$-colored~$P_\ell$ with one endpoint in~$A$ and one endpoint in~$B$. 
Therefore, there is no induced~$4$-colored~$C_{\ell+1}$ in~$G'-S$. 
Thus,~$S$ is a solution of size at most~$k$ for~$I'$.

$(\Leftarrow)$ Let~$S'$ with~$|S'| \leq k$ be an edge-deletion set such that~$G'-S$ is~$4$-colored~$C_{\ell+1}$-free. 

First, assume towards a contradiction that there is some~$v \in A$ and some~$w \in B$ such that there is an induced~$3$-colored~$P_\ell$ that starts in~$v$, ends in~$w$ and only contains vertices from~$G$. 
Note that there are~$k+1$ edge-disjoint paths~$(v,q,w)$ with~$q \in Q$. 
Each such path has subsequent colors~$4,4$ and forms an induced~$4$-colored~$C_{\ell+2}$ with the~$c$-colored~$P_\ell$ connecting~$v$ and~$w$. 
Consequently, for each such path~$(v,q,w)$ at least one edge on the path belongs to~$S'$ contradicting the fact that~$|S'| \leq k$.

We next show that this implies that~$I$ has a solution of size at most~$k$. Let~$X$ be the set of conflict edges in~$G$, which is the set of all edges of~$G$ that are part of an induced~$c$-colored~$P_\ell$. Furthermore, let~$S \subseteq S'$ be the set of edge-deletions between the vertices of~$G$. By the above, every induced~$c$-colored~$P_\ell$ in~$G$ is not an induced~$c$-colored~$P_\ell$ in~$G-S$. Then, since~$G$ is non-cascading, we conclude by Proposition~\ref{Prop: Non-Cascading} that~$G-(S \cap X)$ has no induced~$c$-colored~$P_\ell$. Thus, $S \cap X$ is a solution of size at most~$k$ for~$I$. 
\end{proof}


\section{$2P_4$ Deletion on Double Cluster Graphs}
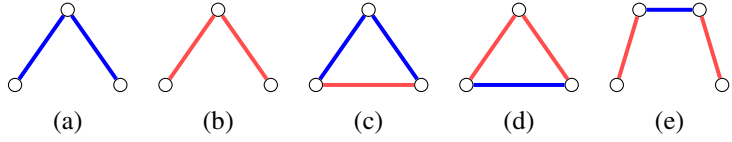
\begin{figure}[t]
\begin{center}
\begin{tikzpicture}
\tikzstyle{knoten}=[circle,fill=white,draw=black,minimum size=5pt,inner sep=0pt]
\tikzstyle{bez}=[inner sep=0pt]

\node[knoten] (u)  at (0,0) {};
\node[knoten] (v)  at (0.7,1) {};
\node[knoten] (w)  at (1.4,0) {};
\node[bez] at (0.7,-0.5) {(a)};

\draw[blue edge]  (u) to (v);
\draw[blue edge]  (w) to (v);

\begin{scope}[xshift=2cm]
\node[knoten] (u)  at (0,0) {};
\node[knoten] (v)  at (0.7,1) {};
\node[knoten] (w)  at (1.4,0) {};
\node[bez] at (0.7,-0.5) {(b)};

\draw[red edge]  (u) to (v);
\draw[red edge]  (w) to (v);
\end{scope}

\begin{scope}[xshift=4cm]
\node[knoten] (u)  at (0,0) {};
\node[knoten] (v)  at (0.7,1) {};
\node[knoten] (w)  at (1.4,0) {};
\node[bez] at (0.7,-0.5) {(c)};

\draw[blue edge]  (u) to (v);
\draw[blue edge]  (v) to (w);
\draw[red edge]  (w) to (u);
\end{scope}

\begin{scope}[xshift=6cm]
\node[knoten] (u)  at (0,0) {};
\node[knoten] (v)  at (0.7,1) {};
\node[knoten] (w)  at (1.4,0) {};
\node[bez] at (0.7,-0.5) {(d)};

\draw[red edge]  (u) to (v);
\draw[red edge]  (w) to (v);
\draw[blue edge]  (w) to (u);
\end{scope}

\begin{scope}[xshift=8cm]
\node[knoten] (u)  at (0,0) {};
\node[knoten] (v)  at (0.3,1) {};
\node[knoten] (x)  at (1.1,1) {};
\node[knoten] (w)  at (1.4,0) {};
\node[bez] at (0.7,-0.5) {(e)};

\draw[red edge]  (u) to (v);
\draw[blue edge]  (v) to (x);
\draw[red edge]  (x) to (w);
\end{scope}

\end{tikzpicture}
\caption{The double cluster graphs are the graphs that have none of~(a) to~(d) as induced subgraph; $\mathcal{T}$ is the class of graphs that do not have~(a) to~(e) as induced~subgraph.} \label{Figure: forbidden subgraphs}
\end{center} 
\end{figure}

We now study~\textsc{$2P_4$D} on bicolored graphs where both the blue and the red subgraph are a cluster graph; we refer to such graphs as \emph{double cluster graphs}. Equivalently, a graph is a double cluster graph if it has no graph from Fig.~\ref{Figure: forbidden subgraphs} (a)--(d) as induced subgraph. 
As we show,~$2P_4$D remains NP-hard on double cluster graphs. This is in sharp contrast to \textsc{$2P_3$D} which is solvable in polynomial time in this graph class~\cite{gruettemeier19}. 
This hardness is tight in the following sense: In double cluster graphs, there exist only two types of induced bicolored~$P_4$s, the~$P_4$ with subsequent edge colors red, blue, red and the~$P_4$ with subsequent edge colors blue, red, blue. If the input is a double cluster graph where one of these bicolored~$P_4$ is forbidden, say the~$P_4$ shown in Fig.~\ref{Figure: forbidden subgraphs}~(e), then~$2P_4$D can be solved in polynomial time.
In the following we denote this graph class by~$\mathcal{T}$.

\begin{theorem}
\textsc{$2P_4$D} remains NP-hard on bicolored graphs where both the blue and the red subgraph are a cluster graph and the maximum degree is five.
\end{theorem}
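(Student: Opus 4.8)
The plan is a polynomial-time reduction from \textsc{(3,B2)-SAT}, the restriction of \textsc{3SAT} used in the proof of Theorem~\ref{thm:cPD} in which every clause has exactly three literals and each of the literals $x_i$, $\neg x_i$ occurs exactly twice; the bounded number of occurrences is what will let us cap the maximum degree at five. Given a formula $\Phi$ with variables $x_1,\dots,x_\eta$ and clauses $c_1,\dots,c_\mu$ I would construct a bicolored graph $G$ in which both color classes are disjoint unions of small cliques, together with the budget $k:=2\eta$, such that $\Phi$ is satisfiable if and only if $(G,k)$ is a yes-instance of \textsc{$2P_4$D}. The structural fact I would use repeatedly is that in a double cluster graph the only induced bicolored $P_4$s have color pattern \emph{red-blue-red} or \emph{blue-red-blue}: a pattern with two consecutive equal colors would place three vertices in a common clique and hence create a chord.

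Each variable $x_i$ gets a \emph{variable gadget} $X_i$ consisting of a ``tight cycle'' of four induced bicolored $P_4$s $Q^i_1,\dots,Q^i_4$, cyclically arranged so that consecutive conflicts share exactly one edge, $f^i_1\in Q^i_1\cap Q^i_2$, $f^i_2\in Q^i_2\cap Q^i_3$, $f^i_3\in Q^i_3\cap Q^i_4$, $f^i_4\in Q^i_4\cap Q^i_1$, and so that these are the only edges lying in two of the $Q$'s and no edge lies in three of them. Then the only two edge sets of size at most two that destroy all four conflicts are $\{f^i_1,f^i_3\}$ and $\{f^i_2,f^i_4\}$, and I would let the former encode $\mathcal{A}(x_i)=\texttt{true}$ and the latter $\mathcal{A}(x_i)=\texttt{false}$; the four edges $f^i_t$ are the interfaces to the four clauses in which $x_i$ occurs, with $f^i_1,f^i_3$ handling the two positive occurrences and $f^i_2,f^i_4$ the two negative ones. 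For a clause $c_j$ with literals $\ell_1,\ell_2,\ell_3$ I would add a \emph{clause gadget} that is essentially a single induced bicolored $P_4$ realised from the three interface edges of the variable gadgets corresponding to $\ell_1,\ell_2,\ell_3$, so that this conflict is destroyed precisely when at least one of these interface edges is deleted, i.e.\ precisely when $\mathcal{A}$ satisfies $c_j$. The construction has to be carried out so that both color classes stay disjoint unions of cliques, every clique containing an interface edge has size two, the maximum degree stays at five (which uses that every literal occurs only twice), and deleting either canonical pair of a variable gadget creates no new induced bicolored $P_4$; meeting all of these at once is the genuinely delicate part.

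For correctness, the forward direction is short: from a satisfying assignment $\mathcal{A}$ delete in each $X_i$ the pair $\{f^i_1,f^i_3\}$ or $\{f^i_2,f^i_4\}$ according to $\mathcal{A}(x_i)$, which costs $2\eta=k$ edges, destroys all four conflicts of every variable gadget, and, because $\mathcal{A}$ satisfies every clause, contains at least one interface edge of every clause gadget; together with the design guarantee that these deletions create no fresh conflicts, $G$ becomes bicolored-$P_4$-free. For the backward direction, let $S$ be a solution with $|S|\le 2\eta$. Since the four conflicts of $X_i$ are pairwise edge-disjoint from those of $X_j$ for $i\neq j$ and use only edges internal to or on the interface of $X_i$ — and interface edges of distinct variable gadgets are distinct — $S$ must contain at least two edges ``belonging to'' each $X_i$, hence exactly two, hence nothing else. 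An exchange argument, using the tightness of the gadget (the four-conflict hypergraph has exactly the two canonical transversals of size two) and the no-cascading guarantee, normalises $S$ so that $S\cap E(X_i)\in\{\{f^i_1,f^i_3\},\{f^i_2,f^i_4\}\}$ for every $i$; defining $\mathcal{A}(x_i)$ accordingly and observing that $S$ still destroys every clause gadget — whose conflict consists of three interface edges — forces at least one literal of each clause to be satisfied, so $\mathcal{A}$ satisfies $\Phi$. I expect the hardest step to be the gadget design itself: forcing the tight four-conflict cycle and the three-interface clause conflict into a double cluster graph of maximum degree five while keeping the family of induced bicolored $P_4$s exactly as intended and robust against cascading; the counting and exchange arguments above are then comparatively routine.
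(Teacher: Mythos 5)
Your high-level plan---reduce from \textsc{(3,B2)-SAT}, build a variable gadget whose four conflicts admit exactly two transversals of size two, and let clause conflicts be killed for free by the deleted interface edges---is coherent, and the counting and normalisation arguments in your last paragraph would indeed go through \emph{if} a double cluster graph of maximum degree five with exactly the conflict structure you postulate existed. But that existence is the entire content of the theorem, and you do not construct the gadgets; you explicitly defer ``the genuinely delicate part''. Moreover, the specific clause gadget you propose---a single induced bicolored $P_4$ whose three edges \emph{are} the interface edges of the three variable gadgets, with zero budget reserved for clauses---runs into concrete obstructions. In a double cluster graph every induced bicolored $P_4$ has colour pattern red--blue--red or blue--red--blue, so of the three interface edges forming a clause path one must carry a different colour (the middle edge) than the other two; which occurrence of which literal plays the middle role is dictated by the clause, while the colour of an interface edge is fixed once and for all by its variable gadget, so you need a global consistency argument that you never give. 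In addition, the two internal vertices $b,c$ of the clause path each belong to two different variable gadgets, and by your own stipulation that every clique containing an interface edge has size two, $b$ and $c$ have degree exactly $2$ with both neighbours already consumed by the clause path; hence a variable gadget can extend its interface edge $\{a,b\}$ into the conflicts $Q^i_t,Q^i_{t+1}$ only through the endpoint $a$. Two induced bicolored $P_4$s of the form $b\hbox{-}a\hbox{-}u\hbox{-}v$ and $b\hbox{-}a\hbox{-}u'\hbox{-}v'$ sharing only the edge $\{a,b\}$ force $u\neq u'$ and hence, by the cluster property, the blue edge $\{u,u'\}$; then $v\hbox{-}u\hbox{-}u'\hbox{-}v'$ is a red--blue--red path whose inducedness you must rule out or absorb into your intended conflict list. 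None of this is a provable impossibility, but it means the proposal is a plan for a proof rather than a proof.

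For comparison, the paper sidesteps exactly these difficulties by giving the clause gadget its own budget: each clause contributes a blue triangle with one red pendant edge per literal (cost $2$), each variable gadget is a larger, explicitly drawn double cluster graph requiring exactly $9$ deletions with two canonical deletion sets $\mathcal{E}_1$ and $\mathcal{E}_2$, the budget is $k=9\eta+2\mu$ rather than $2\eta$, and the gadgets are joined by identifying single low-degree vertices (not edges), so that after the canonical deletions every connector vertex has degree at most one and no conflict can straddle two gadgets. If you want to salvage your tighter construction you would have to exhibit the variable gadget explicitly, fix a globally consistent colour assignment for the interface edges across all clauses, and enumerate \emph{all} induced bicolored $P_4$s of the resulting graph; until then the reduction is not established.
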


\begin{proof}
We give a polynomial-time reduction from the NP-complete \textsc{(3,B2)-SAT} problem~\cite{berman03}. 
Recall that \textsc{(3,B2)-SAT} is a version of \textsc{3SAT} where one is given a CNF formula~$\Phi$ on variables~$x_1, \dots, x_\eta$ where every clause~$c_1,\dots,c_{\mu}$ contains exactly three literals and each literal~$x_i$ and~$\neg x_i$ occurs exactly twice in~$\Phi$.

\textit{Construction:} The instance~$(G,k)$ of \textsc{$2P_4$D} consists of one clause gadget~$Z_i$ for each clause~$c_i$ and one variable gadget~$X_j$ for each variable~$x_j$. 
The clause gadget~$Z_i$ contains a blue triangle with vertices~$d_i^1, d_i^2$, and~$d_i^3$.
Furthermore, for each~$z\in[3]$ we attach one leaf vertex~$c_i^z$ to vertex~$d_i^z$ in the triangle. 
Each edge~$\{c_i^z,d_i^z\}$ is colored red.

The variable gadget~$X_j$ is constructed as follows:

\begin{itemize}
\item We add a red clique on four vertices~$r_j^1$, $r_j^2$, $r_j^3$, and~$r_j^4$ to~$G$.
\item We add six blue triangles~$\{r^1_j,p^1_j,p^2_j\}$,~$\{r^2_j,q^1_j,q^2_j\}$,~$\{p^3_j,p^5_j,p^7_j\}$,~$\{q^3_j,q^5_j,q^7_j\}$, $\{p^4_j,p^6_j,p^8_j\}$, and $\{q^4_j,q^6_j,q^8_j\}$ to~$G$.
\item We add four blue edges~$\{p^9_j,t^1_j\}$,~$\{p^{10}_j,t^2_j\}$,~$\{q^9_j,f^1_j\}$, and~$\{q^{10}_j,f^2_j\}$ to~$G$.
\item We add the eight red edges~$\{p^1_j,p^3_j\}$,~$\{p^2_j,p^4_j\}$,~$\{p^5_j,p^9_j\}$,~$\{p^6_j,p^{10}_j\}$,~$\{q^1_j,q^3_j\}$, $\{q^2_j,q^4_j\}$,~$\{q^5_j,q^9_j\}$, and~$\{q^6_j,q^{10}_j\}$ to~$G$.
\end{itemize}

\begin{figure}[t]
\begin{center}
\begin{tikzpicture}
\tikzstyle{knoten}=[circle,fill=white,draw=black,minimum size=5pt,inner sep=0pt]
\tikzstyle{bez}=[inner sep=0pt]

\node[knoten] (r3)  at (0,0) {};\node at(0,-0.3){$r^3_j$};
\node[knoten] (r1)  at (0,1) {};\node at(0,1.3){$r^1_j$};
\node[knoten] (r4)  at (0.9,0) {};\node at(0.9,-0.3){$r^4_j$};
\node[knoten] (r2)  at (0.9,1) {};\node at(0.9,1.3){$r^2_j$};

\node[knoten] (p1)  at (-0.9,0) {};\node at(-0.9,-0.3){$p^2_j$};
\node[knoten] (p2)  at (-0.9,1) {};\node at(-0.9,1.3){$p^1_j$};
\node[knoten] (p3)  at (-1.8,0) {};\node at(-1.75,-0.3){$p^4_j$};
\node[knoten] (p4)  at (-1.8,1) {};\node at(-1.75,1.3){$p^3_j$};
\node[knoten] (p5)  at (-2.7,0) {};\node at(-2.9,-0.3){$p^6_j$};
\node[knoten] (p6)  at (-2.7,1) {};\node at(-2.9,1.3){$p^5_j$};
\node[knoten] (p7)  at (-2.25,-0.7) {};\node at(-2.25,-1){$p^8_j$};
\node[knoten] (p8)  at (-2.25,1.7) {};\node at(-2.25,2){$p^7_j$};
\node[knoten] (p9)  at (-3.6,0) {};\node at(-3.7,-0.3){$p^{10}_j$};
\node[knoten] (p10)  at (-3.6,1) {};\node at(-3.7,1.3){$p^9_j$};
\node[knoten] (t1)  at (-4.5,0) {};\node at(-4.6,0.3){$t^2_j=c^3_i$};
\node[knoten] (t2)  at (-4.5,1) {};\node at(-4.6,1.3){$t^1_j$};

\draw[red edge]  (r1) to (r2);
\draw[red edge]  (r1) to (r3);
\draw[red edge]  (r1) to (r4);
\draw[red edge]  (r2) to (r3);
\draw[red edge]  (r2) to (r4);
\draw[red edge]  (r3) to (r4);

\draw[red edge]  (p1) to (p3);
\draw[red edge]  (p2) to (p4);
\draw[red edge]  (p5) to (p9);
\draw[red edge]  (p6) to (p10);

\draw[blue edge]  (r1) to (p1);
\draw[blue edge]  (r1) to (p2);
\draw[blue edge]  (p1) to (p2);
\draw[blue edge]  (p3) to (p5);
\draw[blue edge]  (p3) to (p7);
\draw[blue edge]  (p5) to (p7);
\draw[blue edge]  (p4) to (p6);
\draw[blue edge]  (p4) to (p8);
\draw[blue edge]  (p6) to (p8);
\draw[blue edge]  (t2) to (p10);
\draw[blue edge]  (t1) to (p9);

\node[knoten] (q1)  at (1.8,0) {};\node at(1.8,-0.3){$q^2_j$};
\node[knoten] (q2)  at (1.8,1) {};\node at(1.8,1.3){$q^1_j$};
\node[knoten] (q3)  at (2.7,0) {};\node at(2.55,-0.3){$q^4_j$};
\node[knoten] (q4)  at (2.7,1) {};\node at(2.65,1.3){$q^3_j$};
\node[knoten] (q5)  at (3.6,0) {};\node at(3.78,-0.3){$q^6_j$};
\node[knoten] (q6)  at (3.6,1) {};\node at(3.78,1.3){$q^5_j$};
\node[knoten] (q7)  at (3.15,-0.7) {};\node at(3.15,-1){$q^8_j$};
\node[knoten] (q8)  at (3.15,1.7) {};\node at(3.15,2){$q^7_j$};
\node[knoten] (q9)  at (4.5,0) {};\node at(4.5,-0.3){$q^{10}_j$};
\node[knoten] (q10)  at (4.5,1) {};\node at(4.5,1.3){$q^9_j$};
\node[knoten] (f1)  at (5.4,0) {};\node at(5.4,-0.3){$f^2_j$};
\node[knoten] (f2)  at (5.4,1) {};\node at(5.4,1.3){$f^1_j$};

\draw[red edge]  (q1) to (q3);
\draw[red edge]  (q2) to (q4);
\draw[red edge]  (q5) to (q9);
\draw[red edge]  (q6) to (q10);

\draw[blue edge]  (r2) to (q1);
\draw[blue edge]  (r2) to (q2);
\draw[blue edge]  (q1) to (q2);
\draw[blue edge]  (q3) to (q5);
\draw[blue edge]  (q3) to (q7);
\draw[blue edge]  (q5) to (q7);
\draw[blue edge]  (q4) to (q6);
\draw[blue edge]  (q4) to (q8);
\draw[blue edge]  (q6) to (q8);
\draw[blue edge]  (f2) to (q10);
\draw[blue edge]  (f1) to (q9);

\node[knoten] (u)  at (-6.3,0) {};\node at(-6.3,-0.3){$d^1_i$};
\node[knoten] (v)  at (-5.85,1) {};\node at(-5.58,1){$d^2_i$};
\node[knoten] (w)  at (-5.4,0) {};\node at(-5.4,-0.3){$d^3_i$};
\node[knoten] (a)  at (-6.3,1) {};\node at(-6.3,1.3){$c^1_i$};
\node[knoten] (b)  at (-5.85,2) {};\node at(-5.85,2.3){$c^2_i$};

\draw[blue edge]  (u) to (v);
\draw[blue edge]  (w) to (v);
\draw[blue edge]  (w) to (u);
\draw[red edge]  (u) to (a);
\draw[red edge]  (v) to (b);
\draw[red edge]  (w) to (t1);
\end{tikzpicture}
\caption{Visualization of the construction for some clause~$c_i$ and a variable~$x_j$ where~$x_j$ has its second occurrence as the third literal in~$c_i$.} \label{Figure: double cluster graphs}
\end{center} 
\end{figure}
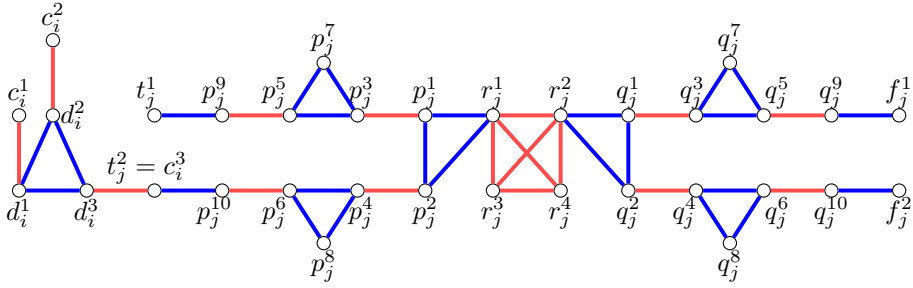

To connect the variable and the clause gadgets we identify vertices as follows. For~$z\in[3]$,~$y\in[2]$, any clause~$c_i$, and any variable~$x_j$ we set
\begin{center}
$c_i^z= 
\begin{cases}
t_j^y&\text{if the literal }x_j\text{ has its }y\text{-th occurence as the }z\text{-th literal in }c_i,\\
f_j^y&\text{if the literal }\lnot x_j\text{ has its }y\text{-th occurence as the }z\text{-th literal in }c_i.
\end{cases}$
\end{center}

Observe that since each clause contains exactly three literals and since each literal~$x_j$ and~$\lnot x_j$ occurs exactly twice in~$\Phi$, each vertex~$c_i^z$ is identified with exactly one vertex~$t_j^y$ or~$f_j^y$ and vice versa.
For an illustration of this construction see Fig.~\ref{Figure: double cluster graphs}.

It is easy to see that the maximum degree of~$G$ is five, that the blue subgraph of~$G$ is a cluster graph, and that the red subgraph of~$G$ is a cluster graph.
We complete the construction by setting~$k:=9\eta +2\mu$. 

\textit{Intuition:} 
Before we prove the correctness, we informally describe its idea. 
Since the budget is tight, exactly two edges of each clause gadget and exactly nine edges of each variable gadget will be deleted. 
There are exactly two possibilities to delete nine edges in a variable gadget to make it conflict-free.
One possibility corresponds to setting this variable to \texttt{true} and the other possibility corresponds to setting this variable to \texttt{false}. 
Furthermore, one remaining red edge in a clause gadget will represent the literal fulfilling that clause.

Before we prove the correctness of the reduction, we make the following observations.
\begin{Claim}
\label{Claim-clause-gadget-2-col-p4-del}
For any edge-deletion set~$S$ of~$G$ and any clause gadget~$Z_i$ we have~$|S\cap E(Z_i)|\ge 2$.
Furthermore, if~$|S\cap E(Z_i)|=2$, then~$|S\cap E_r(Z_i)|=2$.
\end{Claim}
\begin{claimproof}
Consider the three bicolored~$P_4$s:~$P_1:=\{c^1_i,d^1_i,d^2_i,c^2_i\}$,~$P_2:=\{c^1_i,d^1_i,d^3_i,c^3_i\}$, and~$P_3:=\{c^2_i,d^2_i,d^3_i,c^3_i\}$.
Note that no edge is contained in all three~$P_4$s.
Thus,~$|S\cap E(Z_i)|\ge 2$.

Next, consider that case that~$|S\cap E(Z_i)|= 2$.
Assume towards a contradiction that~$S$ contains a blue edge of~$Z_i$.
Without loss of generality, assume that~$\{d^1_i,d^2_i\}\in S$.
Then,~$G\setminus\{\{d^1_i,d^2_i\}\}$ contains the three bicolored~$P_4$s:~$P_2$,~$P_3$, and~$\{d^1_i,d^3_i,d^2_i,c^2_i\}$.
Note that no edge is contained in all three~$P_4$s.
Hence,~$|S\cap E(Z_i)|\ge 3$, a contradiction.
Thus,~$|S\cap E_r(Z_i)|=2$.
$\hfill \Diamond$
\end{claimproof}

\begin{Claim}
\label{Claim-variable-gadget-2-col-p4-del}
For any edge-deletion set~$S$ of~$G$ and any variable gadget~$X_j$ we have~$|S\cap E(X_j)|\ge 9$.
Furthermore, if~$|S\cap E(X_j)|=9$, then there exists an edge-deletion set~$S'$ of~$G$ such that~$|S'\cap E(X_j)|=9$ and either
\begin{enumerate}
\item $S'\cap E(X_j)=\mathcal{E}_1$ which consists of the edges of the blue triangle~$\{r_j^2,q_j^1,q_j^2\}$, the four red edges~$\{p^1_j,p^3_j\}$,~$\{p^2_j,p^4_j\}$,~$\{q^5_j,q^9_j\}$, and~$\{q^6_j,q^{10}_j\}$ and the two blue edges~$\{p_j^9,t_j^1\}$ and~$\{p_j^{10},t_j^2\}$, or
\item $S'\cap E(X_j)=\mathcal{E}_2$ which consists of the edges of the blue triangle~$\{r_j^1,p_j^1,p_j^2\}$, the four red edges~$\{q^1_j,q^3_j\}$,~$\{q^2_j,q^4_j\}$,~$\{p^5_j,p^9_j\}$, and~$\{p^6_j,p^{10}_j\}$ and the two blue edges~$\{q_j^9,f_j^1\}$ and~$\{q_j^{10},f_j^2\}$.
\end{enumerate}
\end{Claim}
\begin{claimproof}
The fact that~$|S\cap E(X_j)|\ge 9$ follows from the existence of an edge-disjoint bicolored~$P_4$ packing~$\mathcal{P}$ containing the following nine~$P_4$s. These are~$\{t^1_j,p^9_j,p^5_j,p^7_j\}$, $\{t^2_j,p^{10}_j,p^6_j,p^8_j\}$, $\{f^1_j,q^9_j,q^5_j,q^7_j\}$, $\{f^2_j,q^{10}_j,q^6_j,q^8_j\}$, $\{p^7_j,p^3_j,p^1_j,r^1_j\}$, $\{q^7_j,q^3_j,q^1_j,r^2_j\}$, $\{p^8_j,p^4_j,p^2_j,p^1_j\}$, $\{q^8_j,q^4_j,q^2_j,q^1_j\}$ and~$\{p^2_j,r^1_j,r^2_j,q^2_j\}$ in~$G$.

Next, we consider an edge-deletion set~$S$ of~$G$ with~$|S\cap E(X_j)|=9$.
First, we prove that $\{r^1_j,p^1_j\}$, $\{r^1_j,p^2_j\}\in S$ or~$\{r^2_j,q^1_j\},\{r^2_j,q^2_j\}\in S$.
Assume towards a contradiction that this is not the case.
Thus, assume without loss of generality that~$\{r^1_j,p^1_j\},\{r^2_j,q^1_j\}\in G-S$.
Hence,~$\{r^1_j,r^2_j\}\in S$.
Next, observe that~$G-\{\{r^1_j,r^2_j\}\}$ contains nine bicolored~$P_4$s. These are~$\{p^1_j,r^1_j,r^3_j,r^2_j\}$, $\{q^1_j,r^2_j,r^4_j,r^1_j\}$, $\{t^1_j,p^9_j,p^5_j,p^7_j\}$,~$\{t^2_j,p^{10}_j,p^6_j,p^8_j\}$, $\{f^1_j,q^9_j,q^5_j,q^7_j\}$, $\{f^2_j,q^{10}_j,q^6_j,q^8_j\}$, $\{p^7_j,p^{3}_j,p^1_j,r^1_j\}$, $\{p^8_j,p^{4}_j,p^2_j,r^1_j\}$, and finally $\{q^7_j,q^{3}_j,q^1_j,r^2_j\}$ only sharing the edges~$\{r^1_j,p^1_j\}$, and~$\{r^2_j,q^1_j\}$.
Thus,~$|S\cap E(X_j)|\ge 10$, a contradiction.

Hence, in the following we assume without loss of generality that~$\{r^2_j,q^1_j\}\in S$ and that~$\{r^2_j,q^2_j\}\in S$.
Next, assume towards a contradiction that~$\{q^5_j,q^9_j\}\notin S$. 
Since~$\{q^3_j,q^5_j,q^9_j,f^1_j\}$ is a bicolored~$P_4$ only sharing the edges~$\{q^5_j,q^9_j\}$ and~$\{q^9_j,f^1_j\}$ with any bicolored~$P_4$ in~$\mathcal{P}$, we conclude that~$\{f^1_j,q^9_j\}\in S$.
But then,~$\mathcal{P}\setminus\{\{q^8_j,q^4_j,q^2_j,q^1_j\},\{f^1_j,q^9_j,q^5_j,q^7_j\}\}\cup\{\{q^{10}_j,q^6_j,q^4_j,q^2_j\},\{q^2_j,q^1_j,q^3_j,q^7_j\},\{q^7_j,q^5_j,q^9_j,f^1_j\}\}$ is an edge-disjoint packing of size nine, a contradiction.
Hence,~$\{q^5_j,q^9_j\}\in S$.
Analogously, we can show that~$\{q^6_j,q^{10}_j\}\in S$.
Since~$\{q^7_j,q^3_j,q^1_j,q^2_j\}$ and $\{q^8_j,q^4_j,q^2_j,q^1_j\}$ are two bicolored~$P_4$s not in the packing~$\mathcal{P}$, we conclude that~$\{q^1_j,q^2_j\}\in S$.

Hence, in the graph~$G[t^1_j,t^2_j,r^1_j,\ldots, r^4_j,p^1_j,\ldots, p^{10}_j]$ the edge-deletion set~$S$ does exactly four deletions.
Furthermore, observe that~$$\mathcal{Q}:=\{\{t^1_j,p^9_j,p^5_j,p^7_j\}, \{t^2_j,p^{10}_j,p^6_j,p^8_j\},\{p^7_j,p^3_j,p^1_j,r^1_j\},\{p^8_j,p^4_j,p^2_j,p^1_j\}\}$$ is a packing of four edge-disjoint bicolored~$P_4$s.
Also, observe that~$\{p^5_j,p^3_j,p^1_j,p^2_j\}$ is also a bicolored~$P_4$ in~$G$ which only has the edge~$\{p^1_j,p^3_j\}$ in common with each bicolored~$P_4$ in~$\mathcal{Q}$.
Thus,~$\{p^1_j,p^3_j\}\in S$.
Analogously, we can show that~$\{p^2_j,p^4_j\}\in S$.

Furthermore, observe that~$\{t^1_j,p^9_j,p^5_j,p^3_j\}$ in also a bicolored~$P_4$ in~$G$.
Hence, $\{t^1_j,p^9_j\}\in S$ or $\{p^9_j,p^5_j\}\in S$. 
Assume that~$\{p^9_j,p^5_j\}\in S$.
Since~$t^1_j$ is the only vertex in that~$P_4$ which has neighbors outside~$X_j$, the set~$S':=S-\{p^9_j,p^5_j\}\cup\{\{t^1_j,p^9_j\}\}$ is also an edge-deletion set with exactly nine deletions in~$X_j$.
Analogously, we can show that also~$\{t^2_j,p^{10}_j\}\in S'$.
 $\hfill \Diamond$
\end{claimproof}

Now, we are ready to prove the correctness.

$(\Rightarrow)$ Let~$\mathcal{A}:\mathcal{X}\rightarrow\{\texttt{true},\texttt{false}\}$ be a satisfying assignment for~$\Phi$. We will construct an edge-deletion set~$S$ of size~$k$ such that~$G- S$ is bicolored $P_4$-free.

For each variable~$x_j$ we do the following: 
If~$\mathcal{A}(x_j)=\texttt{true}$, then we delete the first edge set~$\mathcal{E}_1$ described in Claim~\ref{Claim-variable-gadget-2-col-p4-del}.
Otherwise, if~$\mathcal{A}(x_j)=\texttt{false}$, then we delete the second edge set~$\mathcal{E}_2$ described in Claim~\ref{Claim-variable-gadget-2-col-p4-del}.
Since~$\mathcal{A}$ satisfies~$\Phi$, there is at least one variable~$x_j$ such that~$\mathcal{A}(x_j)$ satisfies clause~$c_i$. 
Let~$z\in[3]$ be the~$z$-th literal of~$c_i$ that satisfies clause~$c_i$. 
Furthermore, let~$\{\alpha,\beta\}:=[3]\setminus\{z\}$. 
We delete the red edges~$\{c_i^\alpha,d_i^\alpha\},$~and~$\{c_i^\beta,d_i^\beta\}$. 
Observe that we deleted exactly nine edges per variable gadget and exactly two edges per clause gadget.
Hence,~$|S|=k=9\eta +2\mu$. 

It remains to show that~$G-S$ is bicolored~$P_4$-free. 
From Claims~\ref{Claim-clause-gadget-2-col-p4-del} and~\ref{Claim-variable-gadget-2-col-p4-del} we conclude that there is no bicolored~$P_4$ whose vertex set is entirely contained in one clause or variable gadget.
Recall that the vertices~$c_i^z$ for~$z\in[3]$ which are identified with the vertices~$t_j^y,$ and~$f_j^y$ for~$y\in[2]$ are the only vertices which connect variable and clause gadgets in~$G$. 
Observe that all of these vertices have degree two.
Consider a fixed vertex~$c_i^z$.
Next, we prove that in~$G-S$ vertex~$c_i^z$ has degree at most one. 
Without loss of generality we assume that vertex~$c_i^z$ is identified with vertex~$t_j^y$ from variable~$x_j$. 
If clause~$c_i$ is not satisfied by its~$z$-th literal, then~$\{c_i^z,d_i^z\}\in S$, and vertex~$c_i^z$ has degree at most one in~$G-S$.
Hence, in the following we assume that clause~$c_i$ is satisfied by its~$z$-th literal which is~$x_j$. 
Thus~$\mathcal{A}(x_j)=\texttt{true}$ and according to our definition of~$S$ we observe that~$\{t_j^y,p_j^{8+y}\}=\{c_i^z,p_j^{8+y}\}\in S$. 
We conclude that each vertex~$c_i^z$ in~$G-S$ has degree at most one. 
Thus, there is no bicolored~$P_4$ in at least two gadgets. 
Hence,~$G-S$ is bicolored~$P_4$-free.

$(\Leftarrow)$ Let~$S\subseteq E(G)$ with~$|S|\le k$ be an edge set such that~$G-S$ is bicolored~$P_4$-free. 
In the following we will construct a satisfying assignment~$\mathcal{A}:\mathcal{X}\to\{\texttt{true, false}\}$ for~$\Phi$.

Since~$|S|=9\eta +2\mu$,~$|S\cap E(Z_i)|\ge 2$ for each clause gadget~$Z_i$ by Claim~\ref{Claim-clause-gadget-2-col-p4-del}, and~$|S\cap E(X_j)|\ge 9$ for each variable gadget~$X_j$ by Claim~\ref{Claim-variable-gadget-2-col-p4-del}, we conclude that~$|S\cap E_r(Z_i)|=2$ and that~$|S\cap E(X_j)|= 9$.
Next, we construct an edge-deletion set~$S'$ with~$|S'|=|S|=k$ which fulfills the conditions of Claim~\ref{Claim-variable-gadget-2-col-p4-del}, that is for each variable~$X_j$ the set~$S'$ contains either the edge set~$\mathcal{E}_1$ or the edge set~$\mathcal{E}_2$ described in Claim~\ref{Claim-variable-gadget-2-col-p4-del}.
If~$S'$ contains the first edge set~$\mathcal{E}_1$ described in Claim~\ref{Claim-variable-gadget-2-col-p4-del}, we set~$\mathcal{A}(x_j)=\texttt{true}$ and otherwise, if~$S'$ contains the second edge set~$\mathcal{E}_2$ described in Claim~\ref{Claim-variable-gadget-2-col-p4-del}, we set~$\mathcal{A}(x_j)=\texttt{false}$.

It remains to show that~$\mathcal{A}$ satisfies~$\Phi$.
Consider the clause gadget~$Z_i$. 
By Claim~\ref{Claim-clause-gadget-2-col-p4-del} we conclude that~$|S'\cap E_r(Z_i)|=2$. 
Hence, there exist a~$z\in[3]$ such that~$\{c_i^z,d_i^z\}\notin S$.
Without loss of generality we assume that vertex~$c_i^z$ is identified with vertex~$t_j^y$ for some variable~$x_j$ and some~$y\in[2]$. Observe that~$G[\{d_i^\alpha,d_i^z,c_i^z,p_j^{8+y}\}]$ for some~$\alpha\in[3]\setminus\{z\}$ is an induced bicolored~$P_4$ in~$G$. 
Since~$G-S$ is bicolored~$P_4$-free, we conclude that~$\{c_i^z,p_j^{8+y}\}\in S$.
Thus,~$\mathcal{A}(x_j)=\texttt{true}$ and hence clause~$c_i$ is satisfied by~$\mathcal{A}$.
We conclude that~$\Phi$ is satisfied by~$\mathcal{A}$. 
\end{proof}

Before we present the polynomial-time algorithm for~$2P_4$D when the input graph is from~$\mathcal{T}$, we characterize~$\mathcal{T}$. More precisely, we show that each connected component in some~$G \in \mathcal{T}$ is one of the following two graphs (see Fig.~\ref{Figure: fence and clique-star}): An \emph{rb-fence} is a graph that consists of exactly two blue cliques~$X$ and~$Y$ of size at least two and the red edges form a matching between vertices of~$X$ and~$Y$. An \emph{rb-clique-star} is a graph that consists of exactly one red clique~$X$ and up to~$|X|$ non-overlapping blue cliques where each blue clique intersects with~$X$ in a unique vertex. To show this characterization of~$\mathcal{T}$, we first prove the following.

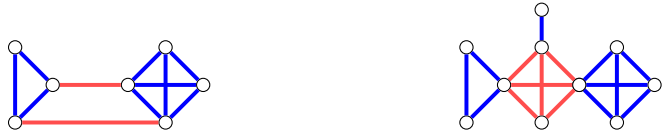
\begin{figure}[t]
\begin{center}
\begin{tikzpicture}
\tikzstyle{knoten}=[circle,fill=white,draw=black,minimum size=5pt,inner sep=0pt]

\tikzstyle{bez}=[inner sep=0pt]

\node[knoten] (u)  at (0,0) {};
\node[knoten] (v)  at (0,1) {};
\node[knoten] (w)  at (0.5,0.5) {};

\node[knoten] (x)  at (2,0) {};
\node[knoten] (y)  at (2,1) {};
\node[knoten] (z)  at (1.5,0.5) {};
\node[knoten] (q)  at (2.5,0.5) {};

\draw[blue edge]  (u) to (v);
\draw[blue edge]  (u) to (w);
\draw[blue edge]  (w) to (v);
\draw[blue edge]  (x) to (y);
\draw[blue edge]  (z) to (y);
\draw[blue edge]  (z) to (x);
\draw[blue edge]  (q) to (x);
\draw[blue edge]  (z) to (q);
\draw[blue edge]  (y) to (q);

\draw[red edge]  (w) to (z);
\draw[red edge]  (u) to (x);

\begin{scope}[xshift=7cm]
\node[knoten] (x)  at (0,0) {};
\node[knoten] (y)  at (-0.5,0.5) {};
\node[knoten] (z)  at (0.5,0.5) {};
\node[knoten] (q)  at (0,1) {};

\node[knoten] (a) at (-1, 1) {};
\node[knoten] (b) at (-1, 0) {};

\node[knoten] (c) at (0,1.5) {};

\node[knoten] (s)  at (1,0) {};
\node[knoten] (t)  at (0.5,0.5) {};
\node[knoten] (u)  at (1.5,0.5) {};
\node[knoten] (v)  at (1,1) {};

\draw[red edge]  (x) to (y);
\draw[red edge]  (z) to (y);
\draw[red edge]  (z) to (x);
\draw[red edge]  (q) to (x);
\draw[red edge]  (z) to (q);
\draw[red edge]  (y) to (q);

\draw[blue edge]  (a) to (y);
\draw[blue edge]  (b) to (y);
\draw[blue edge]  (a) to (b);

\draw[blue edge]  (q) to (c);

\draw[blue edge]  (s) to (t);
\draw[blue edge]  (s) to (u);
\draw[blue edge]  (s) to (v);
\draw[blue edge]  (u) to (v);
\draw[blue edge]  (u) to (t);
\draw[blue edge]  (v) to (t);
\end{scope}

\end{tikzpicture}
\caption{Left: An example of an rb-fence. Right: An example of an rb-clique-star.}\label{Figure: fence and clique-star} 
\end{center} 
\end{figure}

\begin{lemma}
\label{claim-charactierzation-bicolored-p4-free-cluster}
Let~$\{v,w\}$ be a blue edge in a bicolored graph~$G \in \mathcal{T}$ such that~$v$ has a red neighbor~$u$ and~$w$ has a red neighbor~$x$. 
Then,~$u  \neq x$ and~$\{u,x\}\in E(G)$ with color blue. 
Furthermore, neither~$u$ and~$v$ nor~$w$ and~$x$ have a common red neighbor.
\end{lemma}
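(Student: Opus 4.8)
The plan is to verify every assertion by exhibiting an induced copy of one of the forbidden subgraphs (a)--(e) of Fig.~\ref{Figure: forbidden subgraphs} whenever the assertion fails; recall that (b) is a red~$P_3$, (c) a triangle with two blue edges and one red edge, (d) a triangle with two red edges and one blue edge, and (e) the induced~$P_4$ with color pattern red, blue, red. Two preliminary remarks simplify the bookkeeping. First, the hypothesis is invariant under simultaneously exchanging~$v \leftrightarrow w$ and~$u \leftrightarrow x$, so it suffices to prove one representative of each symmetric pair of statements. Second, the vertices~$u,v,w$ are pairwise distinct and~$v,w,x$ are pairwise distinct: for instance~$u = w$ would force~$\{u,v\}$ to coincide with the blue edge~$\{v,w\}$ although~$\{u,v\}$ is red, and similarly for the others.

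First I would prove~$u \neq x$: if~$u = x$ then~$G[\{u,v,w\}]$ is a triangle with red edges~$\{u,v\}$,~$\{u,w\}$ and blue edge~$\{v,w\}$, an induced copy of~(d). Next I would determine~$\{u,w\}$ and~$\{v,x\}$. If~$\{u,w\} \in E(G)$, then~$G[\{u,v,w\}]$ is a triangle in which~$\{u,v\}$ is red,~$\{v,w\}$ is blue, and~$\{u,w\}$ is red or blue; the two cases are induced copies of~(d) and~(c) respectively, so~$\{u,w\} \notin E(G)$, and by the swap symmetry~$\{v,x\} \notin E(G)$. Now~$u,v,w,x$ are pairwise distinct and~$\{u,w\},\{v,x\} \notin E(G)$; if in addition~$\{u,x\} \notin E(G)$, then~$G[\{u,v,w,x\}]$ is exactly the induced path on~$u,v,w,x$ (in this order) with edge colors red, blue, red, an induced copy of~(e), impossible since~$G \in \mathcal{T}$. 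Therefore~$\{u,x\} \in E(G)$. Finally, if~$\{u,x\}$ were red, then~$G[\{u,v,x\}]$ would be an induced red~$P_3$ centered at~$u$ (using~$\{v,x\} \notin E(G)$), i.e., an induced copy of~(b); hence~$\{u,x\}$ is blue. (Equivalently, since the red edges form a cluster graph,~$u$ cannot lie in two distinct red cliques.)

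For the last sentence, suppose towards a contradiction that~$y$ is a common red neighbor of~$u$ and~$v$. I would first check~$y \notin \{u,v,w,x\}$: $y \neq u,v$ as~$y$ is a neighbor of both,~$y = w$ would make~$\{v,w\}$ red, and~$y = x$ would put~$\{v,y\} = \{v,x\}$ into~$E(G)$, contradicting the previous step; hence~$\{y,v,w,x\}$ is a genuine four-vertex set. I then determine two further adjacencies. First,~$\{y,x\} \notin E(G)$, since otherwise~$G[\{y,u,x\}]$ is a triangle with~$\{y,u\}$ red,~$\{u,x\}$ blue, and~$\{y,x\}$ of either color, an induced copy of~(d) or~(c). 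Second,~$\{y,w\} \in E(G)$, since otherwise~$G[\{y,v,w,x\}]$ is the induced path on~$y,v,w,x$ with edge colors red, blue, red, an induced copy of~(e). But then~$G[\{y,v,w\}]$ is a triangle with~$\{y,v\}$ red,~$\{v,w\}$ blue, and~$\{y,w\}$ red or blue, again an induced copy of~(d) or~(c) --- a contradiction. Thus~$u$ and~$v$ have no common red neighbor, and by the swap symmetry neither do~$w$ and~$x$.

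The proof is a finite case analysis, so there is no conceptual obstacle; the only point requiring care is to ensure, before citing a forbidden induced subgraph, that the chosen vertices are pairwise distinct and that exactly the intended edges are present. In particular the non-edges~$\{u,w\}$,~$\{v,x\}$, and~$\{y,x\}$ are precisely what is needed to upgrade the relevant three- and four-vertex subgraphs to \emph{induced} copies of~(c)--(e). Exploiting the~$v \leftrightarrow w$,~$u \leftrightarrow x$ symmetry roughly halves the number of cases to check.
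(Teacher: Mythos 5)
Your proof is correct and follows essentially the same route as the paper's: a case analysis that exhibits one of the forbidden induced subgraphs (b)--(e) whenever an assertion fails, first establishing the non-edges~$\{u,w\}$ and~$\{v,x\}$ so that the red-blue-red path on~$u,v,w,x$ forces the blue edge~$\{u,x\}$, and then ruling out a common red neighbor. If anything, your final step (contradiction via a bicolored triangle on~$\{y,v,w\}$ after deriving~$\{y,x\}\notin E(G)$ and~$\{y,w\}\in E(G)$) is spelled out more carefully than the paper's, which treats the symmetric pair~$w,x$ and only sketches the corresponding case distinction.
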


\begin{proof}
Observe that~$u$ and~$x$ are distinct vertices, since otherwise~$G[\{u,v,w\}]$ is a bicolored triangle, a contradiction to~$G \in \mathcal{T}$. 
Since the vertices~$u,v,w$, and~$x$ form a bicolored~$P_4$ with subsequent edge colors red, blue and red, we conclude that this bicolored~$P_4$ is not induced. 
Observe that~$\{u,w\}\notin E(G)$ and that~$\{v,x\}\notin E(G)$ since otherwise~$G$ would contain a bicolored triangle.
Hence,~$\{u,x\}\in E(G)$.
Furthermore,~$\{u,x\}$ is blue, since otherwise the red subgraph~$G_r$ is no cluster graph.
Next, assume towards a contradiction that at least one of both red edges has a common red neighbor.
Without loss of generality, we assume that vertices~$w$ and~$x$ have a common red neighbor~$y$.
Observe that~$\{v,y\}\notin E(G)$ since otherwise~$G$ would contain a bicolored triangle. 
Next, consider the case that~$\{u,y\}\in E(G)$.
With symmetric arguments we conclude~$\{v,y\}\notin E(G)$.
Hence,~$\{u,y\}$ is blue. 
But now~$G[\{v,u,y,x\}]$ is an induced bicolored~$P_4$ with subsequent edge colors red, blue, red; a contradiction.
\end{proof}

We now present our alternative characterization of~$\mathcal{T}$.

\begin{proposition}
\label{lemma-alternative-characterization}
Let~$G$ be a graph. Then,~$G \in \mathcal{T}$ if and only if each connected component of~$G$ is either an rb-fence or an rb-clique-star.
\end{proposition}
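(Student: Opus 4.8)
The plan is to analyse the blue components and the red components of $G$ separately---each is a clique because $G$ is a double cluster graph---and to feed Lemma~\ref{claim-charactierzation-bicolored-p4-free-cluster} with suitable blue edges. For the direction ``$\Leftarrow$'' I would check directly that an rb-fence and an rb-clique-star lie in~$\mathcal{T}$: in both graphs the blue subgraph is a disjoint union of cliques and the red subgraph is a disjoint union of cliques, so none of Fig.~\ref{Figure: forbidden subgraphs}(a)--(d) occurs; a $P_4$ with colours red, blue, red is outright impossible in an rb-clique-star (its two red edges put all four vertices into the single red clique, whereas the middle blue edge puts two of them into one blue clique, which meets the red clique in a single vertex), and in an rb-fence such a $P_4$ is never \emph{induced} because its red edges are matching edges $uv$ and $wx$ with $v,w$ in one blue clique and $u,x$ in the other, so $\{u,x\}$ is a blue chord; a bicolored triangle is excluded by the same bookkeeping.

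For ``$\Rightarrow$'' assume $G\in\mathcal{T}$ is connected (the statement is componentwise). If $G$ has no red edge it is a single blue clique and hence an rb-clique-star, and symmetrically if it has no blue edge; so assume it has both. I would then distinguish two cases according to whether some blue edge $\{v,w\}$ has the property that $v$ and $w$ each have a red neighbour. \emph{Case~1: such an edge exists}, say with red neighbours $u$ of $v$ and $x$ of $w$. Lemma~\ref{claim-charactierzation-bicolored-p4-free-cluster} gives $u\ne x$, a blue edge $\{u,x\}$, and that $u,v$ have no common red neighbour and $w,x$ have no common red neighbour; since red components are cliques, this forces $v$ and $w$ to have \emph{unique} red neighbours, namely $u$ and $x$. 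Let $Y$ be the blue component of $v,w$ and $X$ that of $u,x$; they are distinct (otherwise $\{u,v\}$ would be blue), hence disjoint, and each has size at least two. The crucial step is: every red neighbour of a vertex $y\in Y$ lies in $X$. For $y\notin\{v,w\}$ this follows by applying Lemma~\ref{claim-charactierzation-bicolored-p4-free-cluster} to the blue edge $\{v,y\}$, whose endpoints have red neighbours $u$ and (the putative) $y'$: the lemma yields that $\{u,y'\}$ is blue, so $y'$ lies in the blue component $X$ of $u$. For $y\in\{v,w\}$ it is the uniqueness noted above. By symmetry every red neighbour of a vertex of $X$ lies in $Y$. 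Hence every vertex of $X\cup Y$ has all its neighbours in $X\cup Y$, so $V(G)=X\cup Y$ by connectedness; and no vertex can have two red neighbours (they would be red-adjacent yet lie in a common blue clique), so the red edges form a matching between $X$ and $Y$ and $G$ is an rb-fence.

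\emph{Case~2: no blue edge has both endpoints with a red neighbour.} Fix a maximal red clique $X$ with $|X|\ge 2$ (one exists since $G$ has a red edge); I would first record the routine facts that distinct blue components are disjoint, distinct red components are disjoint, and a blue component meets a red component in at most one vertex (two common vertices would give an edge that is simultaneously blue and red). Then a shortest-path argument shows that every vertex of $G$ is either in $X$ or in a blue component meeting $X$: along a shortest path from $v$ to $X$ the last edge must be blue (else its penultimate vertex would lie in the red clique $X$), and if the path had length at least two the second-to-last edge would be either blue---allowing a shortcut inside that blue clique---or red, in which case its penultimate vertex would lie in a blue component touching both $X$ and a second maximal red clique of size at least two, yielding a blue edge with two red-incident endpoints and contradicting the case hypothesis. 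The same obstruction shows $X$ is the only maximal red clique of size at least two, so the red subgraph is the clique on $X$ together with isolated vertices; and the blue components meeting $X$ are pairwise disjoint and meet $X$ in distinct single vertices, so there are at most $|X|$ of them. This is exactly an rb-clique-star.

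The step I expect to be the main obstacle is Case~1: organising the bookkeeping so that the single ``seed'' blue edge $\{v,w\}$, together with Lemma~\ref{claim-charactierzation-bicolored-p4-free-cluster}, really does pin the whole component down to $X\cup Y$---in particular treating the boundary vertices $y\in\{v,w\}$ correctly and exploiting that $v,w$ have unique red neighbours. Case~2 is more mechanical, but still depends on stating the ``component incidence'' observations carefully.
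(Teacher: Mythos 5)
Your proof is correct, and in the substantive direction (every connected $G\in\mathcal{T}$ is an rb-fence or an rb-clique-star) it is organised genuinely differently from the paper's. The paper splits on the size of the red cliques: a component containing a maximal red clique on at least three vertices is shown to be an rb-clique-star by local arguments around that clique, while a component whose red components are single edges is shown to be an rb-fence by proving---via Lemma~\ref{claim-charactierzation-bicolored-p4-free-cluster} and a shortest-path argument---that any two red edges span an alternating $C_4$, from which the two blue sides and the matching structure are read off. You instead split on whether some blue edge has both endpoints incident with red edges: if so, that single seed edge plus Lemma~\ref{claim-charactierzation-bicolored-p4-free-cluster} propagates to pin the whole component down to two blue cliques joined by a red matching (your Case~1), and if not, the shortest-path argument lands in the rb-clique-star case instead (your Case~2). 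Both proofs lean on the same lemma and both need a shortest-path step, but they deploy it in opposite cases; your dichotomy has the advantage that it immediately signals which target structure will emerge, whereas the paper's split by red-clique size must still discover the fence structure from scratch in its second case. The key steps of your argument are sound: uniqueness of red neighbours follows from the ``no common red neighbour'' clause of the lemma together with the red subgraph being a cluster graph, the propagation in Case~1 correctly covers the boundary vertices $v,w$ separately, and the Case~2 contradiction (a red penultimate edge on a shortest path to $X$ produces a blue edge with two red-incident endpoints) is valid because $|X|\ge 2$ guarantees the endpoint in $X$ has a red neighbour. Your handling of the degenerate situations (components with no red or no blue edges, and single-red-edge components that are simultaneously fences and clique-stars) is at least as careful as the paper's.
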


\begin{proof}
Given an rb-fence or an rb-clique-star, the colored subgraphs of both colors are cluster graphs. Moreover, it is easy to see that no such graph has an induced bicolored~$P_4$ with subsequent edge colors~red, blue, red. Thus, every graph where each connected component is either an rb-fence or an rb-clique-star belongs to~$\mathcal{T}$. We next show that each graph in~$\mathcal{T}$ is a disjoint union of rb-clique-stars and rb-fences.

First, consider a connected component~$Z$ of~$G$ which contains a maximal red clique~$C$ with at least three vertices. 
Observe that each vertex of~$C$ does not have a red neighbor outside~$C$. Thus, each~$v \in C$ only blue neighbors outside~$C$. Since the blue graph is a cluster graph, these blue neighbors of~$v$ form a clique. Moreover, no two vertices in~$C$ have a common blue neighbor since otherwise~$G$ has a bicolored triangle. Finally, each blue neighbor of~$C$ has no red neighbor outside~$C$ since otherwise~$G$ has a red-blue-red~$P_4$. Thus, the connected component~$Z$ is an rb-clique-star.

Second, consider a connected component~$Z$ of~$G$ which does not contain red triangles.
In other words, the red connected components of~$Z$ are edges.
Let~$\{u,v\}$ and~$\{x,y\}$ be two red edges in~$Z$. 
In the next step, we will prove that~$G[\{u,v,x,y\}]$ is a~$C_4$ where two subsequent edges have different colors.
Assume towards a contradiction that this is not the case. 
By Lemma~\ref{claim-charactierzation-bicolored-p4-free-cluster} we conclude that if~$G[\{u,v,x,y\}]$ contains one blue edge, then~$G[\{u,v,x,y\}]$ has to be a~$C_4$
where two subsequent edges have different colors. 
Hence, we assume that~$G[\{u,v,x,y\}]$ does not contain any blue edges.
Without loss of generality assume that ~$P:=(u=a_0,a_1,\ldots,a_p=x)$ is a shortest path connecting the edges~$\{u,v\}$ and~$\{x,y\}$.
Observe that since the induced red and the induced blue subgraph are both cluster graphs, the edges in~$P$ have alternating colors. 
Hence,~$\{u,a_1\}$ is blue and~$\{a_1,a_2\}$ is red. 
Furthermore, note that~$a_1\neq x$ and~$a_1\neq y$ since otherwise there is a blue edge in~$G[\{u,v,x,y\}]$.
Observe that~$G[\{u,v,a_1,a_2\}]$ contains a bicolored~$P_4$ with subsequent edge colors red, blue, red. 
By Lemma~\ref{claim-charactierzation-bicolored-p4-free-cluster} we conclude that~$\{v,a_2\}\in E(G)$ with color blue. 
Then,~$P':=\{v=a_1,\ldots,a_p=x\}$ is a shorter path connecting the edges~$\{u,v\}$ and~$\{x,y\}$ than~$P$, a contradiction.
We conclude that for each two red edges~$\{u,v\}$ and~$\{x,y\}$ in~$Z$ the induced subgraph~$G[\{u,v,x,y\}]$ is a~$C_4$ where two subsequent edges have different colors.

Let~$R:=\{\{u_i,v_i\} \mid i\in[t]\}$ be the set of red edges in~$Z$. Since for every pair~$\{u_i,v_i\}, \{u_j,v_j\}$ with~$i \neq j$ the graph~$G[\{u_i,v_i,u_j,v_j\}]$ is a~$C_4$ we may assume without loss of generality that~$B_1:= \{u_i \mid i \in [t]\}$ and~$B_2:= \{v_i \mid i \in [t]\}$ are blue cliques in~$G$. Observe that~$E_G(B_1,B_2)=R$ since otherwise, if there is an edge~$\{u_i,v_j\}$ with~$i \neq j$, then~$G[\{u_i,u_j,v_j\}]$ or~$G[\{v_i,u_j,v_j\}]$ is a bicolored triangle in~$G$.

If no vertex in~$B_1 \cup B_2$ has a neighbor outside~$B_1 \cup B_2$, then~$Z=B_1 \cup B_2$ and therefore,~$G[Z]$ is an rb-fence. So, without loss of generality, let~$x$ be a neighbor of~$u_1$ outside~$B_1 \cup B_2$. 
Then,~$\{u_i,x\}$ is blue and since the blue graph is a cluster graph it follows that~$B_1 \cup \{x\}$ is a blue clique in~$G$. 
This implies that~$B_1 \subseteq K_1$ and~$B_2 \subseteq K_2$ for some maximal blue cliques~$K_1$ and~$K_2$ and that~$G[Z]=G[K_1 \cup K_2]$  is an rb-fence.
\end{proof}

The next goal is to show that~$2P_4$D can be solved in polynomial time on graphs in~$\mathcal{T}$. To this end, we first consider the solution structure of rb-fences.

\begin{lemma} \label{Lemma: Solution on Fences}
Let~$(G,k)$ be an instance of $2P_4$D where~$G$ is an rb-fence and~$S$ be a minimum size solution. Then either~$G$ is a~$C_4$ and~$S= \emptyset$ or~$S$ contains exactly the red edges of~$G$.
\end{lemma}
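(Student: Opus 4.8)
Write $X=\{u_1,\dots,u_t\}$ and $Y=\{v_1,\dots,v_t\}$ for the two blue cliques of the rb‑fence $G$ and $\{u_i,v_i\}$, $i\in[t]$, for the red matching edges, so $t\ge 2$. First note that deleting all red edges leaves the disjoint union of the two blue cliques, a monochromatic graph with no bicolored $P_4$; hence the red edges form a solution of size $t$, and a minimum solution has size at most $t$. I would then pin down all induced bicolored $P_4$s of $G$: a bicolored $P_4$ uses one or two red edges; two red edges, forming a matching, are vertex‑disjoint and so must be the two end edges, in which case the two far endpoints lie in a common blue clique and are joined by a blue chord; a single red \emph{end} edge forces the remaining two blue edges into one blue clique, again producing a blue chord. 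Therefore every induced bicolored $P_4$ of $G$ has the shape $u_a-u_i-v_i-v_c$ with $\{u_i,v_i\}$ red, $a\neq i\neq c$, and $a\neq c$ (if $a=c$ then $\{u_a,v_a\}$ is a red chord). In particular, when $t=2$ no three distinct indices exist, so $G$, which is then exactly a $4$‑cycle with alternating colours, has no induced bicolored $P_4$, giving $S=\emptyset$.

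\textbf{Lower bound for $t\ge 3$.}
For $i\in[t]$ set $P_i:=u_{i+1}-u_i-v_i-v_{i-1}$ with indices modulo $t$. By the structural description each $P_i$ is an induced bicolored $P_4$ (the needed non‑edges hold because $t\ge 3$), and over $i\in[t]$ the $P_i$ use the $t$ distinct red edges, the $t$ pairwise distinct blue edges $\{u_i,u_{i+1}\}$ (a $t$‑cycle on $X$) and the $t$ pairwise distinct blue edges $\{v_{i-1},v_i\}$ on $Y$; so they are edge‑disjoint. Hence every solution deletes at least $t$ edges, and the red edges constitute a minimum solution.

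\textbf{Uniqueness.}
Let $S$ be an arbitrary minimum solution, so $|S|=t$, and put $R:=\{i\in[t]:\{u_i,v_i\}\notin S\}$; then $S$ contains $t-|R|$ red edges and hence exactly $|R|$ blue edges, so it suffices to prove $R=\emptyset$. Fix $i\in R$. Since $\{u_i,v_i\}$ survives in $G-S$, the set $\{u_a,u_i,v_i,v_c\}$ induces a bicolored $P_4$ in $G-S$ whenever $\{u_a,u_i\},\{v_i,v_c\}\notin S$ together with ($a\neq c$, or $a=c$ and $\{u_a,v_a\}\in S$). Writing $P_i:=\{a\neq i:\{u_a,u_i\}\notin S\}$ and $Q_i:=\{c\neq i:\{v_i,v_c\}\notin S\}$, this forces $P_i=\emptyset$, or $Q_i=\emptyset$, or $P_i=Q_i$ a singleton whose index lies in $R$; in all three cases $S$ contains at least $t-1$ blue edges incident to $u_i$ or to $v_i$ (the bound $2(t-2)\ge t-1$ uses $t\ge 3$). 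Since each blue edge of $G$ meets the red pair $\{u_j,v_j\}$ for exactly two indices $j$, summing over $i\in R$ gives $|R|(t-1)\le 2\cdot(\text{number of blue edges of }S)=2|R|$, so $t\le 3$ whenever $R\neq\emptyset$. Thus $R=\emptyset$ for $t\ge 4$. For $t=3$ one has $R\neq\emptyset\Rightarrow|R|\in\{2,3\}$ (the case $|R|=1$ is impossible, as $S$ then has only one blue edge but needs two incident to the surviving red pair), and all the above inequalities are tight, which pins $S$ down to a handful of explicit edge sets; for each I would exhibit an induced bicolored $P_4$ in $G-S$, using crucially the $P_4$s created by \emph{cascading} — for instance deleting a blue edge $\{u_a,u_b\}$ turns the $4$‑cycle on $u_a,u_b,v_b,v_a$ into an induced red--blue--red $P_4$, ruling out candidates such as ``$S=$ all blue edges inside $X$''. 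Hence $R=\emptyset$ in every case, so $S$ consists of exactly the red edges of $G$.

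\textbf{Main obstacle.}
The structural description of the induced bicolored $P_4$s and the packing lower bound are routine. The delicate part is uniqueness: one must track induced bicolored $P_4$s of $G-S$ that do not occur in $G$ but appear only through cascading when blue edges are removed, and the counting inequality degenerates exactly at $t=3$, so that boundary case has to be closed by the finite check sketched above.
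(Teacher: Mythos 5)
Your argument is set up only for rb-fences in which the two blue cliques have the same size~$t$ and the red edges form a \emph{perfect} matching between them. The definition of an rb-fence does not guarantee this: it only requires two blue cliques of size at least two together with \emph{some} red matching between them, and the paper's own illustration of an rb-fence has cliques of sizes three and four joined by just two red edges. Within your restricted setting the characterization of the induced bicolored~$P_4$s, the upper bound, and the edge-disjoint packing $(u_{i+1},u_i,v_i,v_{i-1})$ for~$t\ge 3$ are all correct, but the general case contains a configuration that your packing cannot handle and which is precisely the delicate one: $|Y|=2$, exactly two red edges, and $|X|\ge 3$. There the graph has exactly two induced bicolored~$P_4$s and they share the unique blue edge inside~$Y$, so no edge-disjoint packing of size two exists; the lower bound of two must instead be obtained by observing that deleting that shared blue edge \emph{creates} a new induced red--blue--red~$P_4$ (a cascading effect), so that no single deletion suffices. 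The paper's proof is organized around exactly this case distinction (``$|K_2|\ge 3$ or $|R|=1$'' via a packing with the red edges as central edges, versus ``$|K_2|=2=|R|$'' via cascading), and it also covers the case of a single red edge, which your identification of the number of red edges with the clique size excludes.

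Your uniqueness section both overshoots the paper and is incomplete. The paper only proves the existential statement---that deleting all red edges is \emph{an} optimal solution---which is all the algorithm needs; the universal statement you are trying to prove is in fact false for general rb-fences. Take $X=\{a,b,c\}$ a blue triangle, $Y=\{d,e\}$ a blue edge, and red edges $\{a,d\}$ and $\{b,e\}$: the minimum solution size is two, and $S=\{\{a,c\},\{b,c\}\}$ is a minimum solution (it isolates~$c$ and leaves an alternating~$C_4$) that contains no red edge at all. Even inside your perfect-matching setting, your double-counting argument only settles~$t\ge 4$, and the $t=3$ boundary is left as an unexecuted ``finite check''. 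The fix is therefore not to complete that case analysis but to drop the uniqueness claim and prove, for every admissible combination of clique sizes and number of red edges, that the red edges form a minimum-size solution.
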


\begin{proof}
Let~$K_1$ and~$K_2$ with~$|K_1|\ge |K_2|\ge 2$ be the two blue cliques of~$G$, and let~$B_1\subseteq K_1$ and~$B_2\subseteq K_2$ be such that there is a perfect red matching between~$B_1$ and~$B_2$, and let~$|B_1|$ and~$|B_2|$ be maximal under this property.
Furthermore, by~$R$ we denote the set of red edges in~$G$. 
If~$|K_1|=2$ and~$|R|=2$, then~$G$ is a bicolored~$C_4$ with alternating colors.
Hence,~$G$ is~$2P_4$-free.
Next, we assume that this is not the case. 
In the following, we prove that there is a minimal solution that deletes all red edges. Obviously,~$G-R$ is~$2P_4$-free. We next show that there is no solution of size smaller than~$|R|$.

First, consider the case~$|K_2|\ge 3$ or~$|R|=1$. 
Then, there exists a vertex-pair disjoint packing of bicolored~$P_4$s of size~$|R|$: Each red edge is the central edge of one of the~$P_4$s of this packing. Since one edge-deletion transforms only one bicolored~$P_4$ of the packing, there have to be at least~$|R|$ edge deletions in a minimal solution.

Second, consider the case~$|K_2|=2=|R|$. 
Recall that~$|K_1|\ge 3$. 
Since~$G$ contains a bicolored~$P_4$ and deleting any of its three edges does not make~$G$ bicolored~$P_4$-free, we conclude that at least two edges of~$G$ have to be deleted.
Hence, deleting both red edges is optimal. 
\end{proof}

Lemma~\ref{Lemma: Solution on Fences} implies that connected components that are rb-fences can be solved in linear time. We next study the solution structure of rb-clique-stars.

\begin{lemma} \label{Lemma: Solution of clique-stars}
Let~$(G,k)$ be an instance of~$2P_4D$ where~$G$ is an rb-clique-star. Let~$C$ be the red clique and let~$B_1, \dots, B_\ell$ be the blue cliques of~$G$ such that~$|B_1| \geq \dots \geq |B_\ell|$. Furthermore, for every~$i \in [\ell]$, let~$c_i$ denote the unique vertex in~$B_i \cap C$.

Then, there is a solution~$S$ such that~$G-S$ consists of~$\ell$ connected components and there is some~$p \in [\ell]$ with
\begin{enumerate}
\item[a)] $B_1 \cup C \setminus \{c_q \mid q \in [2,p]\}$ is a connected component in~$G-S$,
\item[b)] for every~$q \in [2,p]$, $B_q$ is a connected component in~$G-S$, and
\item[c)] for every~$q \in [p+1,\ell]$, $B_q \setminus \{c_q\}$ is a connected component in~$G-S$.
\end{enumerate}
\end{lemma}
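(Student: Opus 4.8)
The plan is to prove the stronger statement that there is a \emph{minimum-size} solution of the claimed form; the lemma follows at once. I would first pin down the relevant induced bicolored $P_4$s. Since $G \in \mathcal{T}$ and graphs in $\mathcal{T}$ are double cluster graphs, every induced bicolored $P_4$ in $G$ itself is induced on a set $\{a,c_i,c_j,d\}$ with $a \in B_i \setminus \{c_i\}$, $d \in B_j \setminus \{c_j\}$ and $i \neq j$ (central red edge $\{c_i,c_j\}$, outer blue edges inside $B_i$ and $B_j$); call such an edge $\{a,c_i\}$ a \emph{pendant edge}. So only blue cliques of size at least two are relevant, and until the last paragraph I assume every $B_i$ has at least two vertices (singleton blue cliques carry no blue edge and are treated there). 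Deleting edges may break the clique structure of $C$ or of some $B_i$, so $G-S$ may in addition contain induced bicolored $P_4$s whose edge colours along the path are blue, blue, red (on $\{a,a',c_i,c_k\}$ with $\{a,c_i\}$ deleted but $\{a',c_i\}$, $\{c_i,c_k\}$ surviving) or blue, red, red (on $\{a,c_i,c_k,c_j\}$ with $\{c_i,c_j\}$ deleted but $\{c_i,c_k\}$, $\{c_k,c_j\}$, $\{a,c_i\}$ surviving). With this, the easy half is: for each $p\in[\ell]$, the edge-deletion set $S_p$ that deletes exactly the pendant edges of $B_q$ for all $q\in[p+1,\ell]$ and exactly the red edges of $C$ with an endpoint in $\{c_q\mid q\in[2,p]\}$ is a solution realizing a)--c), and $G-S_p$ has exactly $1+(p-1)+(\ell-p)=\ell$ components: every component is a blue clique (no red edge, hence no bicolored $P_4$) except the component on $B_1\cup C'$ with $C':=C\setminus\{c_q\mid q\in[2,p]\}$, where $C'$ is still a red clique and $c_1$ is its only vertex incident with a blue edge, so no bicolored $P_4$ of any of the three patterns occurs.

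For the other direction I would take an arbitrary minimum-size solution $S^\star$ and transform it, without increasing its size, into one of the following \emph{restricted form}: for every blue clique $B_i$, either all or none of its pendant edges lie in $S^\star$; and the red edges of $S^\star$ are exactly those incident with some vertex set $D\subseteq C$. Three observations drive this. First, $S^\star$ may be assumed to contain no blue edge lying inside some $B_i$ but avoiding $c_i$: such an edge is in no induced bicolored $P_4$, and re-inserting it creates none, since every $P_4$ on its endpoints lies inside $B_i$ and is monochromatic. Second, if $S^\star$ deletes some but not all pendant edges at a vertex $c_i$ that still keeps a red edge, then $G-S^\star$ contains a blue-blue-red induced bicolored $P_4$, so unless still more edges are deleted the solution is invalid; an exchange argument then shows that deleting instead \emph{all} pendant edges at $c_i$ is never more expensive. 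Third, in a symmetric way, if the red deletions are not exactly the red edges incident with one vertex set $D$, then next to a pendant-carrying vertex (one with a surviving pendant edge) there is a blue-red-red induced bicolored $P_4$, and an exchange again repairs this without increasing the size.

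Once $S^\star$ is in restricted form, the blue cliques split into those with all pendant edges deleted (call this set $\mathcal D$) and those kept whole ($\mathcal K$); a kept-whole $B_i$ has a surviving pendant, so if two members of $\mathcal K$ had their $c_i,c_j\notin D$ the surviving red edge $\{c_i,c_j\}$ and two pendants would form a bicolored $P_4$ — hence at most one member of $\mathcal K$ has its $c_i$ outside $D$. A size-preserving exchange now lets us take that member to be $B_1$ and the others in $\mathcal K$ to be $B_2,\dots,B_p$: this uses $|B_1|\geq|B_i|$ and that the blue cost charged for a clique in $\mathcal D$ is $|B_i|-1$, while the red cost of isolating a member of $\mathcal K$ from $C$ depends only on $|\mathcal K|$, not on which vertices of $C$ are isolated (so we may as well isolate $c_2,\dots,c_p$). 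Then $D=\{c_q\mid q\in[2,p]\}$, the components of $G-S^\star$ are precisely those in a)--c) for $p:=|\mathcal K|$, and there are $1+(p-1)+(\ell-p)=\ell$ of them. Since the size never increased, $S^\star$ witnesses a minimum-size solution of the stated form.

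The technical heart, and the step I expect to be the main obstacle, is the normalization in the second paragraph: showing that a minimum solution can always be brought into restricted form, i.e.\ that it never needs to break $C$ or a blue clique $B_i$ \emph{partially} while a pendant of $B_i$ survives. This requires enumerating the blue-blue-red and blue-red-red induced bicolored $P_4$s a partial break creates and bounding, by an exchange argument, the extra deletions they force, so that fully detaching $B_i$ (respectively fully isolating $c_i$ from $C$) is always at least as cheap. Finally, the degenerate case in which some $B_i$ is a single vertex — where c) literally asks for the empty set as a connected component — is dispatched separately: a singleton blue clique carries no blue edge, so it stays in the central component without creating any bicolored $P_4$, and the statement is read as concerning the blue cliques of size at least two.
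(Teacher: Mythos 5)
Your overall route---normalize a minimum solution by exchange arguments until it has the claimed shape, with the comparison $|B_1|\geq|B_i|$ deciding which blue clique keeps its attachment to $C$---is the same as the paper's, and your ``easy half'' and final reassembly are fine in outline. However, your first normalization step is justified incorrectly, and the error interacts with exactly the cascading patterns you identify in your opening paragraph. You claim a blue edge $\{a,a'\}\subseteq B_i\setminus\{c_i\}$ can always be removed from $S^\star$ because ``every $P_4$ on its endpoints lies inside $B_i$ and is monochromatic.'' That is false: if $\{a',c_i\}\in S^\star$ but $\{a,c_i\}\notin S^\star$ and $c_i$ retains a red edge $\{c_i,c_j\}$, then re-inserting $\{a,a'\}$ creates the induced bicolored $P_4$ on $a',a,c_i,c_j$ with colors blue, blue, red. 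So your observation~1 is only valid \emph{after} observation~2 (all-or-nothing pendant edges) has been established, i.e., the observations must be proved in the opposite order---and observation~2 is precisely the step you defer as ``the technical heart'' without carrying it out. As written, the two load-bearing exchanges (full detachment of a partially broken $B_i$ is never more expensive; repairing a non-canonical set of red deletions is free) are asserted rather than proved.

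The paper closes this gap with no new case analysis: $B_i\setminus\{c_i\}$ is a colored neighborhood class with $N(B_i\setminus\{c_i\})=\{c_i\}$, so Lemma~\ref{Proposition: Equal Choice Neighborhood Class} immediately yields a solution of no larger size that deletes no edge inside $B_i\setminus\{c_i\}$ and treats all pendant edges of $B_i$ uniformly; that single application replaces your observations~1 and~2 and sidesteps the ordering problem. The paper's Step~2 (merge the red-only vertices of one red-edge-containing component into the other; then move the red clique to the component of the largest blue clique) and Step~3 (the $|B_i|\geq|B_j|$ swap) correspond to your observation~3 and your final exchange. If you want a self-contained argument instead, you must actually do the count: with $A_0$ the vertices of $B_i\setminus\{c_i\}$ whose pendant edge is deleted and $A_1$ the rest, validity forces all of $E(A_0,A_1)$ into the solution, so the local cost is at least $|A_0|+|A_0|\cdot|A_1|\geq|A_0|+|A_1|$, the cost of full detachment; an analogous count is needed on the red side. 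Until those exchanges are written out (or the earlier lemma is invoked), the normalization---and hence the proof---is incomplete.
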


\begin{proof} 
Let~$S$ be an edge-deletion set of size at most~$k$ of~$G$ and let~$Z_1,\ldots, Z_p$ be the connected components of~$G-S$.

We prove the lemma in three steps.
 
\textbf{Step 1:} We first show that we may assume that for each~$j \in [\ell]$ the vertex set~$B_j \setminus \{c_j\}$ is completely contained in one connected component of~$G-S$.

Observe that $B_j \setminus \{c_j\}$ is a colored neighborhood class by Definition~\ref{Def:ColNeighDiv}. Moreover,~$N(B_j \setminus \{c_j\})=\{c_j\}$. Then, by Proposition~\ref{Proposition: Equal Choice Neighborhood Class} we may assume that no edge in~$B_j \setminus \{c_j\}$ is part of a minimal solution~$S$ and that either~$E(\{c_j\}, B_j \setminus \{c_j\}) \subseteq S$ or~$E(\{c_j\}, B_j \setminus \{c_j\}) \cap S= \emptyset$. Consequently, we can safely assume that for each~$j\in[\ell]$ the vertices in~$B_j\setminus \{c_j\}$ are in one connected component~$Z_i$ for some~$i\in[p]$.

\textbf{Step 2:} We next show that we may assume that the connected component containing~$B_1 \setminus \{c_1\}$ is the only connected component in~$G-S$ that might contain red edges.

Assume that there exist two connected components~$Z_i$ and~$Z_j$ which contain red edges. 
By~$R_i$ we denote the vertices in~$Z_i$ which are not incident with blue edges.
We define~$P:=Z_j\cup R_i$ and~$Q:=Z_i\setminus R_i$. Then,~$G[P]$ and~$G[Q]$ are also~$P_4$-free. It remains to show that a solution introducing the connected components~$P$ and~$Q$ instead of~$Z_j$ and~$Z_i$ is optimal. Note that there are at least~$(|R_i|+1) \cdot (|R_j|+1)$ edge deletions in~$E(Z_i \cup Z_j)$ to obtain connected components~$Z_i$ and~$Z_j$. In contrast, there are only~$|R_i|$ edge deletions in~$E(Z_i \cup Z_j)$ to obtain connected components~$P$ and~$Q$. Since the number of edge deletions between~$Z_i \cup Z_j$ and the rest of the graph is the same in both cases we conclude that this modification of the solution does not increase the number of edge deletions.
Thus, in the following we can assume that there is at most one connected component which contains red edges.

We next prove that we can safely assume that if a connected component of~$G-S$ contains red edges, it also contains~$B_1$.
Assume that this is not the case.
So let~$Z_1$ be the connected component containing~$B_1\setminus\{c_1\}$ and let~$Z_j$ be the connected component containing all red edges.
Let~$c_q\in V(Z_j)$ for some~$q\in[2,\ell]$ be the vertex incident with red and blue edges. 
If such a vertex~$c_q$ does not exist, then~$Z_j$ is a red clique and thus~$G[Z_i\cup Z_j]$ is also bicolored~$P_4$-free and the statement follows.
Otherwise, vertex~$c_j$ exists. 
Let~$Y:=Z_j\setminus B_j$ the vertices of~$Z_j$ which are only incident with red edges.
Observe that~$Z_j\setminus Y$ and~$Z_i\cup Y$ are also bicolored~$P_4$ free and the costs of this solution are also at most~$k$ since~$|B_1|\ge |B_j|$. 
Hence, in the following we assume that the connected component of~$G-S$ containing the red edges, also contains~$B_1$.

\textbf{Step 3:} We finally show that there is some~$p \leq \ell$ satisfying statements~$a)$--$c)$. First, observe that each connected component contains exactly one blue clique~$B_i \setminus \{c_i\}$, since otherwise, the component contains a bicolored~$P_4$. Let~$Z_i$ for~$i\in [\ell]$ be the connected component that contains~$B_i \setminus \{c_i\}$. 

Assume that there are some~$i$ and~$j$ with~$1<i<j$ such that~$Z_i = B_i \setminus \{c_j\}$ and~$Z_j = B_j$. Note that~$i<j$ implies~$|B_i| \geq |B_j|$. We then define the sets~$Z_1':= Z_1 \setminus \{c_i\} \cup \{c_j\}$, $Z_i':= B_i$, and~$Z_j':= B_j \setminus \{c_j\}$. It is easy to see that~$G[Z_1']$,~$G[Z_i']$, and~$G[Z_j']$ contain no induced bicolored~$P_4$. Moreover, note that, if~$a$ is the number of edges in the disjoint union of~$G[Z_1']$,~$G[Z_i']$, and~$G[Z_j']$ and~$b$ is the number of edges in the disjoint union of~$G[Z_1]$,~$G[Z_i]$, and~$G[Z_j]$, we have~$a-b = |B_i|-|B_j|$. Together with the fact that~$|B_i| \geq |B_j|$ we conclude that there is an optimal solution introducing the connected components~$Z_1'$,~$Z_i'$, and~$Z_j'$ instead of~$Z_1,Z_i$, and~$Z_j$.

Consequently, we may assume that there is some~$p \in [\ell]$ that satisfies statements~$b)$ and~$c)$. The component~$Z_1$ then contains all vertices that are not in one of the components~$B_q$ with~$q \in [2,p]$ or~$B_q \setminus \{c_q\}$ with~$q \in [p+1, \ell]$. Thus,~$Z_1 = B_1 \cup C \setminus \{c_q \mid q \in [2,p]\}$ and therefore, statement~$a)$ holds. 
\end{proof}

Lemma~\ref{Lemma: Solution of clique-stars} implies that, in a~$2P_4$D-instance~$(G,k)$, the connected components of~$G$ that are rb-clique-stars can be solved in polynomial time:
Let the rb-clique-star contain~$\ell$ blue cliques. Then, for every~$p \leq \ell$, compute in~$\Oh(nm)$~time the cost of an edge-deletion set~$S$ that forms a partition according Lemma~\ref{Lemma: Solution of clique-stars} and keep the set with minimal cost. Altogether, this implies the following.

\begin{theorem}
$2P_4$D can be solved in ~$\mathcal{O}(nm)$~time on graphs in~$\mathcal{T}$.
\end{theorem}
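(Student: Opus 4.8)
The final statement asserts that $2P_4$D can be solved in $\mathcal{O}(nm)$ time on graphs in $\mathcal{T}$. The plan is to combine the structural characterization of $\mathcal{T}$ (Proposition~\ref{lemma-alternative-characterization}) with the solution-structure lemmas for the two types of connected components (Lemma~\ref{Lemma: Solution on Fences} and Lemma~\ref{Lemma: Solution of clique-stars}). Since $2P_4$D is a hereditary-target edge-deletion problem whose forbidden subgraphs are connected, every connected component can be treated independently, so the task reduces to solving each component optimally and summing the costs.

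First I would compute the connected components of $G$ in $\mathcal{O}(n+m)$ time and classify each one as an rb-fence or an rb-clique-star; by Proposition~\ref{lemma-alternative-characterization} these are the only possibilities for $G \in \mathcal{T}$, and the classification (e.g.\ checking whether the red subgraph of the component is a matching or contains a triangle) is doable in time linear in the component size. For a component that is an rb-fence, Lemma~\ref{Lemma: Solution on Fences} tells us a minimum solution is either empty (if the component is a $C_4$) or exactly the set of red edges; in either case we read off the optimal cost in $\mathcal{O}(m)$ total time over all fence components. For a component that is an rb-clique-star with blue cliques $B_1, \dots, B_\ell$ ordered by nonincreasing size, Lemma~\ref{Lemma: Solution of clique-stars} guarantees an optimal solution of the form described there, parameterized by a single cut index $p \in [\ell]$. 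Hence I would iterate over all $p \leq \ell$, and for each $p$ compute the cost of the edge-deletion set realizing the partition of statements~$a)$--$c)$: this cost is the number of edges removed to detach $B_q$ (for $q \in [2,p]$) and $B_q \setminus \{c_q\}$ (for $q \in [p+1,\ell]$) from the central red clique, which can be evaluated in $\mathcal{O}(nm)$ time per value of $p$ as noted in the paragraph preceding the theorem, hence $\mathcal{O}(\ell \cdot nm)$ for the component; summing over all components and using $\sum_i \ell_i \leq n$ gives an overall bound of $\mathcal{O}(nm)$ (one may also argue a sharper per-component bound, but $\mathcal{O}(nm)$ suffices).

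Finally I would assemble the global solution by taking the union of the per-component optimal deletion sets and comparing its size to $k$: since no bicolored $P_4$ spans two components, the union is a valid solution and it is minimum, so $(G,k)$ is a yes-instance if and only if the total cost is at most $k$. The main obstacle is not any single deep argument but rather verifying that the per-component optimality claims of Lemmas~\ref{Lemma: Solution on Fences} and~\ref{Lemma: Solution of clique-stars} combine correctly into a global optimum --- this follows from the connectedness of $P_4$, so it is routine --- and carefully bookkeeping the running time so that the sum over all rb-clique-star components telescopes to $\mathcal{O}(nm)$ rather than something larger; both are straightforward given the lemmas already established.
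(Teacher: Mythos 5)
Your approach is the same as the paper's: decompose $G$ into connected components, invoke Proposition~\ref{lemma-alternative-characterization} to classify each as an rb-fence or an rb-clique-star, read off the optimum on fences from Lemma~\ref{Lemma: Solution on Fences}, and on clique-stars try every cut index $p$ using Lemma~\ref{Lemma: Solution of clique-stars}; the component-independence observation (no bicolored $P_4$ spans two components) is exactly what makes summing the per-component optima valid, and the paper leaves it implicit. There is, however, one concrete slip in your running-time accounting: from a cost of $\mathcal{O}(\ell_i \cdot nm)$ per clique-star component, the bound $\sum_i \ell_i \leq n$ gives $\bigl(\sum_i \ell_i\bigr)\cdot nm = \mathcal{O}(n^2 m)$, not $\mathcal{O}(nm)$ --- the sum does not ``telescope'' the way you claim. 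To actually obtain the stated $\mathcal{O}(nm)$ bound you should observe that for a fixed $p$ the deletion set of Lemma~\ref{Lemma: Solution of clique-stars} is explicitly determined (the red edges incident with $\{c_q \mid q\in[2,p]\}$ inside $C$ plus the blue edges $E(\{c_q\}, B_q\setminus\{c_q\})$ for $q\in[p+1,\ell]$), so its cost is computable in $\mathcal{O}(m)$ time per $p$ --- or all $\ell$ costs at once via prefix sums --- giving $\mathcal{O}(\ell_i \cdot m)$ per component and $\mathcal{O}(nm)$ overall. The paper's own phrasing (``for every $p\leq\ell$, compute in $\mathcal{O}(nm)$ time the cost'') is ambiguous on the same point and is best read as bounding the entire loop; your version makes the per-$p$ reading explicit and then draws an unsupported conclusion from it, so this step needs the fix above.
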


\section{Conclusion}

We initiated a study of edge-deletion problems where the aim is to destroy paths or cycles containing a certain number of colors. We left many problems open for future research. 

First, in our analysis of the classic complexity of \textsc{$cP_{\ell}$D} for~$\ell \geq 4$ on bounded-degree graphs, we have shown that for~$c \in [2,\ell-2]$ we obtain NP-hardness even if the maximum degree is~$3$, while for~$c=\ell-1$, we obtain hardness even if the maximum is~$16$.  It is thus a natural question whether the NP-hardness for~$c=\ell-1$ also holds for a smaller maximum degree.

Furthermore, it is open whether there is a polynomial-time algorithm for \textsc{$(\ell-1)P_{\ell}$D} on non-cascading graphs for any~${\ell\geq4}$ since we only showed the NP-hardness on non-cascading graphs for each~$\ell\geq 4$ and each $c\in[\ell-2]$. 
Similarly, we showed W[2]-hardness of CPD, the variant of $cP_\ell$D where~$\ell$ is part of the input and~$k$ is the parameter for non-cascading graphs, but it remains open whether CPD is W[2]-hard for $k$ when every $c$-colored $P_\ell$ in the input graph is an induced $c$-colored~$P_\ell$. 
Concerning the parameterized complexity with respect to the colored neighborhood diversity~$\gamma$ it remains open whether $2C_4$D is fixed-parameter tractable with respect to~$\gamma$. It might also be interesting to analyze the parameterized complexity with regard to further structural graph parameters like the vertex cover number or the size of a minimum feedback vertex set. Moreover, since~$\gamma$ can be seen as a colorful version of the classic parameter \emph{neighborhood diversity}, it can be interesting to investigate whether further well-known structural parameters have a corresponding colorful version that is useful for the problems studied in this work.
It would also be interesting to study the corresponding edge-completion problems and edge-modification problems from an algorithmic point of view. Also, to which extent is it possible to adapt our results to related problems on vertex-colored graphs instead of edge-colored graphs?

More generally, it seems interesting to study further whether the fact that an input graph
for an edge-modification problem is non-cascading has any impact on the problem
difficulty. In other words, which edge-modification problems that are generally NP-hard
become polynomial-time solvable on non-cascading input graphs? This question is also
relevant for problems in uncolored graphs. Finally, developing a deeper understanding of
properties of edge-colored graphs appears to be a wide open and fruitful research
topic. Our work makes two contributions in this direction: characterizing one graph class
via forbidden induced subgraphs and extending the neighborhood diversity parameter to
edge-colored graphs. What are further interesting classes of edge-colored graphs and further graph parameters that take the edge-coloring into account? 

\acknowledgements
We would like to thank the reviewers of \emph{Discrete Mathematics and Theoretical Computer Science} for their helpful comments. Some of the results of this work are also contained in the first author's Bachelor thesis~\cite{NJE20}.

\bibliographystyle{alpha}
\bibliography{ref}



\end{document}